\documentclass[12pt,reqno]{amsart}

\usepackage[a4paper]{geometry}
\newgeometry{margin=3cm}

\usepackage[english]{babel}
\usepackage[applemac]{inputenc}       
\usepackage[T1]{fontenc}

\usepackage{amssymb}
\usepackage{amsmath}
\usepackage{amsthm}

\usepackage{dsfont}

\usepackage{graphicx}

\usepackage{tikz}
\usetikzlibrary{calc}
\usetikzlibrary{arrows.meta}
\usetikzlibrary{positioning}
\usetikzlibrary{fit}

\theoremstyle{plain}
\newtheorem{thm}{Theorem}[section]

\newtheorem{lem}[thm]{Lemma}
\newtheorem{prop}[thm]{Proposition}

\theoremstyle{definition}
\newtheorem{defn}[thm]{Definition}
\newtheorem{rem}[thm]{Remark}

\newtheorem{exmp}[thm]{Example}

\newcommand{\ee}{{\mathrm e}}
\newcommand{\ii}{{\mathrm i}}
\newcommand{\dd}{{\mathrm d}}

\newcommand{\tr}{{\mathrm{tr}}}

\newcommand{\1}{{\mathds{1}}}

\newcommand{\pt}{\mathrm{pt}}
\newcommand{\Ch}{\mathrm{Ch}}
\newcommand{\Cr}{\mathrm{Cr}}
\newcommand{\Top}{\mathcal I}

\newcommand{\C}{\mathbb{C}}
\newcommand{\R}{\mathbb{R}}
\newcommand{\Z}{\mathbb{Z}}

\usepackage[pdftex,%
colorlinks=true,linkcolor=blue,citecolor=black,urlcolor=blue
]
{hyperref}

\begin{document}
	
\title{Eigenvalue crossings in Floquet topological systems}

\author[K. Gomi \and C. Tauber]{Kiyonori Gomi \and Cl\'ement Tauber}

\date{\today}

\begin{abstract}
The topology of electrons on a lattice subject to a periodic driving is captured by the three-dimensional winding number of the propagator that describes time-evolution within a cycle. This index captures the homotopy class of such a unitary map. In this paper, we provide an interpretation of this winding number in terms of local data associated to the the eigenvalue crossings of such a map over a three dimensional manifold, based on an idea from \cite{NathanRudner15}. We show that, up to homotopy, the crossings are a finite set of points and non degenerate. Each crossing carries a local Chern number, and the sum of these local indices coincides with the winding number. We then extend this result to fully degenerate crossings and extended submanifolds to connect with models from the physics literature. We finally classify up to homotopy the Floquet unitary maps, defined on manifolds with boundary, using the previous local indices. The results rely on a filtration of the special unitary group as well as the local data of the basic gerbe over it.
\end{abstract}

\maketitle

\section{Introduction}
 
In the context of topological insulators, it was recently realized that independent electrons on a lattice subject to a periodic drive lead to new topological phases of matter with no static counterpart \cite{Rudner13}. In these so-called Floquet topological systems, the Hamiltonian is time-dependent and periodic beyond the adiabatic approximation, so that the topology is encoded in the unitary propagator generated by Schrödinger evolution. In dimension two, the topological index for a sample without boundary ultimately relies on the computation of a three-dimensional winding number of a unitary operator over two dimensions of space and one cycle of time driving \cite{Rudner13,GrafTauber18,SadelSchulzBaldes17}.

The main assumption for such an index to be well-defined relies on the spectral properties of the propagator after one cycle of driving. If the latter has either a spectral gap, be it for perfect crystals \cite{Rudner13} or disordered systems \cite{GrafTauber18,SadelSchulzBaldes17}, or a mobility gap in the regime of strong disorder \cite{ShapiroTauber19} then $U$ can be replaced by a relative evolution that is time periodic, such that the time interval becomes the circle. Moreover, for perfect crystals with translation invariance, the Bloch decomposition reduces the problem to a map $U: T^2 \times S^1 \to U(N)$  where $T^2$ is the two dimensional Brillouin torus. The homotopy classes of such maps are characterized by the three dimensional winding number $W_3(U)$.

It is natural to ask whether the value of $W_3(U)$ can be inferred directly from the spectral properties of the map $U$, not only at finite time as above, but over the whole manifold $T^2 \times S^1$. A significant step in that direction can be found in the physics literature \cite{NathanRudner15}. It suggests that $W_3(U)$ is actually related to the eigenvalue crossings of $U$ that are topologically protected. The main result of the present paper is to provide a complete and rigorous mathematical answer to that question.

We first show that, up to homotopy, the eigenvalue crossings of such maps $U$ are always isolated points and non-degenerate. In that case, one follows a given eigenvalue of $U$ over $T^2 \times S^1$. For each crossing point there is a well-defined line bundle over a 2-dimensional submanifold surrounding it. Such line bundle carries a topological Chern number and we show that the sum of such numbers matches with $W_3(U)$. This provides a geometric interpretation of the 3-dimensional winding number in terms of 2-dimensional local indices related to the eigenvalue crossings of the unitary map.

Then we extend the scope of this theorem to operational means, in particular to deal with explicit models from the physics literature. Indeed, although the crossings are always simple up to homotopy, such a homotopy is not always explicit. Typically, the crossings occur not only at points but on any strict submanifold of $T^2\times S^1$. Moreover, the unitary propagator is at initial time always the identity matrix, a highly degenerate crossing. We show that the local index interpretation of $W_3(U)$ is also valid in that case. Finally, the unitary propagator is usually not periodic in time even if its generator is, so that the aforementioned relative construction is required for $W_3$ to be well-defined. However, for Floquet maps defined over $T^2 \times [0,1]$, the local indices are still available and we show how they actually classify such non-periodic unitary maps up to homotopy.

The proof of the main theorem relies on some explicit filtration of $SU(N)$ and on the local data of the basic gerbe over $SU(N)$ \cite{GawedzkiReis02,Meinrenken02}. Notice that a similar filtration of Hermitian matrices is used in \cite{Arnold95}. Moreover, originally developed in the context of conformal field theory, gerbes have been recently used already in the context of topological insulators, with no driving and with time-reversal symmetry \cite{Lyon15,Gawedzki15,Gawedzki17,MonacoTauber17}. The present paper is another application of this geometrical concept.

The paper is organized as follows: Section~\ref{sec:main} states the main results and illustrate them with typical examples, some of them from the physics literature. The proofs can be found in Section~\ref{sec:proofs}. Appendix~\ref{sec:reduction} provides topology arguments for homotopy classes of unitary maps, suited to Floquet systems and Appendix~\ref{app:exmp} gives further examples to illustrate the wide variety of cases that occur in the main theorems.

\medskip
\noindent\textbf{Acknowledgements.} K.G. is supported by JSPS Grant-in-Aid for Scientific Research on
Innovative Areas "Discrete Geometric Analysis for Materials Design": Grant Number JP17H06462.

\section{Local formula index for unitary maps \label{sec:main}}

Let $N\geq 2$ and consider $H:\Sigma \times S^1 \to M_N(\mathbb C)$ a family of self-adjoint matrices, where $\Sigma$ is a compact, connected and oriented two-dimensional manifold without boundary (typically $\Sigma = T^2$, the Brillouin torus). Schr\"odinger equation $\ii\partial_t U = H U$ generates a differentiable family $U: \Sigma \times [0,1]\to U(N)$ such that $U(\cdot,0)=\1$. In general $U(\cdot,1) \neq \1$ but if the latter has a spectral gap then there exists $U_\mathrm{ref}: \Sigma \times [0,1]\to U(N)$ such that $U_\mathrm{ref}(\cdot,0)=\1$ and $U_\mathrm{ref}(\cdot,1)=U(\cdot,1)$ \cite{Rudner13}. Gluing $U$ and $U_\mathrm{ref}$, the second one with the reverse orientation of $\Sigma \times [0,1]$, we end up with a map on $\Sigma \times S^1$.

Thus, up to this gluing we always work with $U: \Sigma \times S^1 \to U(N)$ differentiable, and such that $U|_{\Sigma \times \{0\}}  = \1$ for some base point $0 \in S^1$. Standard topology arguments show that homotopy classes of such maps are characterized by two topological invariants (see Appendix~\ref{sec:reduction} for details).
The first one is the one-dimensional winding number $W_1(U) \in \mathbb Z$ along $\{x\} \times S^1$ where $x \in \Sigma$ is any point.
The second one is the three-dimensional winding number $W_3(U) \in \mathbb Z$ given by
\begin{equation}\label{def_W3}
W_3(U) = \dfrac{1}{24 \pi^2} \int_{\Sigma \times S^1} \tr (U^{-1}\dd U)^3
\end{equation}
These two invariants are the obstruction classes for $U$ to be trivial, namely homotopic to the constant map while keeping the constraint $U|_{\Sigma \times \{0\}}  = \1$. Moreover, up to homotopy we can always assume $SU(N)$-valued maps in the computation of $W_3(U)$ (see also Appendix~\ref{sec:reduction}). Consequently, we focus on such maps from now on.

\subsection{Main theorem}

The main result of this paper is to provide an expression of $W_3(U)$ for $U:\Sigma \times S^1 \rightarrow SU(N)$ in terms of some local data related to the eigenvalue crossings of $U$. Below we will constantly use the following decomposition:

\begin{lem}\label{label_ev}
The eigenvalues of $V \in SU(N)$ can be written as $\ee^{2\pi \ii \lambda_1}, \ldots, \ee^{2\pi \ii \lambda_N}$ with $\lambda_i \in \mathbb R$, $\sum_i \lambda_i = 0$ and 
\begin{equation*}
\lambda_1 \geq \lambda_2 \geq \ldots \geq \lambda_N \geq \lambda_1 -1.
\end{equation*}
This writing uniquely determines the $\lambda_i$.
\end{lem}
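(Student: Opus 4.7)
The plan is to exhibit the tuple by a direct construction and then deduce uniqueness from the ordering-and-spread constraint together with the zero-sum condition.

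For existence, I pick any representatives $\mu_j \in [0,1)$ of the eigenvalue phases, sort them as $\mu_1 \geq \cdots \geq \mu_N$, and set $M := \sum_j \mu_j$. Since $\det V = 1$ we have $\ee^{2\pi \ii M} = 1$, hence $M \in \mathbb Z$; combined with $0 \leq \mu_j < 1$ this forces $M \in \{0, 1, \ldots, N-1\}$. I then shift the $M$ largest representatives down by one and reorder,
\[
\lambda_j = \begin{cases} \mu_{j+M}, & 1 \leq j \leq N-M,\\ \mu_{j-N+M} - 1, & N-M < j \leq N, \end{cases}
\]
so that the sum becomes $M - M = 0$. The tuple is visibly non-increasing on each of the two blocks, and at their junction one has $\mu_N \geq \mu_1 - 1$ because $\mu_j \in [0,1)$. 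Finally $\lambda_1 - \lambda_N = \mu_{M+1} - \mu_M + 1 \leq 1$, which is exactly the required $\lambda_N \geq \lambda_1 - 1$.

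For uniqueness, suppose two tuples $(\lambda_j)$ and $(\lambda'_j)$ both satisfy the hypotheses. Each underlying multiset lifts the multiset $E$ of eigenvalue phases on $S^1$ and sits in a closed real interval of length at most one. I would parametrize the family of such multisets by a continuous branch-cut parameter $a \in \mathbb R$, lifting each phase $\mu \in E$ to its unique representative in $[a, a+1)$: the sum of lifts is then a non-decreasing integer-valued step function of $a$ that rises by exactly one each time $a$ crosses an eigenvalue phase. This sum vanishes on a single step, and on that step the multiset is rigid, so the sorted tuple is determined.

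The main obstacle will be the boundary case $\lambda_1 - \lambda_N = 1$, in which the closed interval of length one identifies its two endpoints with the same point of $S^1$: here the step function of sums can skip the value zero, and the relevant configuration distributes a degenerate phase between the two endpoints. I will need to verify that the zero-sum condition still picks a unique such distribution, which follows because moving one copy from the top endpoint to the bottom changes the sum by exactly one.
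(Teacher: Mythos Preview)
Your argument is correct in substance, with two small wrinkles to tidy: when $M=0$ your displayed formula $\lambda_1-\lambda_N=\mu_{M+1}-\mu_M+1$ does not parse (there is no $\mu_0$), but the needed bound $\lambda_1-\lambda_N=\mu_1-\mu_N<1$ is immediate from $\mu_j\in[0,1)$; and ``rises by exactly one'' should read ``rises by the multiplicity'', which is exactly the phenomenon you then handle in your boundary paragraph. With those fixes the existence and uniqueness both go through: the step function $S(a)$ is monotone with $S(a+1)=S(a)+N$, so either $S^{-1}(0)$ is a single plateau (giving the unique non-split lift) or $S$ jumps over $0$ at a unique repeated phase $\alpha$, and then the number of copies placed at $\alpha+1$ versus $\alpha$ is pinned down by the zero-sum condition.

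The paper takes a different and less elementary route. It does not prove the lemma directly; rather, in its review of the root system it records that the Weyl alcove
\[
\mathfrak A=\{\xi\in\mathfrak t\ :\ \alpha_i(\xi)\ge 0\ (i=1,\dots,N-1),\ \tilde\alpha(\xi)\le 1\}
\]
is a fundamental domain for the affine Weyl group $\Pi\rtimes W$, so that $\exp 2\pi(\cdot)$ induces a homeomorphism $\mathfrak A\cong SU(N)/SU(N)$ onto the space of conjugacy classes. In the diagonal coordinates $\xi=\ii\,\mathrm{diag}(\lambda_1,\dots,\lambda_N)$ the alcove inequalities are exactly $\lambda_1\ge\cdots\ge\lambda_N\ge\lambda_1-1$, and the trace-zero condition is built into $\mathfrak t$. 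Your branch-cut construction is more self-contained and avoids any Lie theory; the paper's formulation, on the other hand, slots the lemma directly into the alcove/open-cover machinery $\{O_j\}$ and the filtration $SU(N)_{\le k}$ used throughout the rest of the proofs, so that the ``$j$th crossing'' is from the outset the $j$th facet of $\mathfrak A$.
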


\noindent Moreover, we shall consider two subsets of $SU(N)$
\begin{align} \label{defSUN_<1}
& SU(N)_{\leq 1} = \mathop{\bigcup}_{j=1}^N \left\lbrace V \in SU(N) \,|\, \lambda_1 > \ldots > \lambda_j \geq \lambda_{j+1} > \ldots > \lambda_N  > \lambda_1-1\right\rbrace\\
\label{defSUN_1}
& SU(N)_{1} = \mathop{\bigcup}_{j=1}^N \left\lbrace V \in SU(N) \,|\, \lambda_1 > \ldots > \lambda_j = \lambda_{j+1} > \ldots > \lambda_N > \lambda_1-1\right\rbrace
\end{align}
In the first one at most two eigenvalues coincide, whereas exactly two coincide in the second. In particular for a map $U:\Sigma \times S^1 \rightarrow SU(N)$ we get a family of eigenvalues $\lambda_i(x,t)$ for $(x,t) \in \Sigma \times S^1$, and  $ U^{-1}(SU(N)_{1})$ is the region of $\Sigma \times S^1$ where exactly two eigenvalues of $U$ cross. 

For the following let $X$ be a compact oriented 3-dimensional manifold without boundary. Typically, $X=\Sigma \times S^1$ for Floquet systems or $S^3$, $\mathbb CP^1$ for other examples below.  We claim

\begin{prop} \label{prop:ensure_general_position} Any continuous map $U : X \to SU(N)$ is homotopic to a smooth map $U' : X \to SU(N)$ such that
\begin{itemize}
\item
$U'(X) \subset SU(N)_{\leq 1}$, and
\item
 $(U')^{-1}(SU(N)_1)$ is a finite set of points.
\end{itemize}
\end{prop}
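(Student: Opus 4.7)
The plan is to use a transversality argument based on the natural stratification of $SU(N)$ by patterns of eigenvalue multiplicity, and then to invoke compactness of $X$ to turn a $0$-dimensional transverse preimage into a finite set.

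First I would set up the stratification. For each composition $\mu = (m_1, \ldots, m_r)$ of $N$, let $\mathcal{S}_\mu \subset SU(N)$ be the set of matrices whose distinct eigenvalues, taken in the cyclic order associated to Lemma~\ref{label_ev}, have multiplicities $(m_1, \ldots, m_r)$. Each $\mathcal{S}_\mu$ is a smooth, locally closed, $SU(N)$-invariant submanifold: it fibres over the corresponding face of the Weyl alcove (an $(r-1)$-dimensional parameter space for the eigenvalue multiset) with fibre a single adjoint orbit. Since the centraliser of a point of $\mathcal{S}_\mu$ is conjugate to the block group $U(m_1) \times \cdots \times U(m_r)$ of real dimension $\sum_i m_i^2$, a direct count yields
\begin{equation*}
\mathrm{codim}_{SU(N)} \mathcal{S}_\mu = \sum_{i=1}^r m_i^2 - r.
\end{equation*}
The generic pattern $(1, \ldots, 1)$ gives codimension $0$, a single pair of equal eigenvalues gives codimension $3$, and every other pattern (a triple, two separate pairs, etc.) gives codimension at least $6$. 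In the notation of the proposition, $SU(N)_1$ is precisely the disjoint union of the codimension-$3$ strata, while $SU(N) \setminus SU(N)_{\leq 1}$ is the union of all strata of codimension $\geq 6$.

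I would then apply standard transversality. By Whitney's approximation theorem, replace $U$ by a smooth map in the same homotopy class. By Thom's transversality theorem, a further small smooth homotopy produces a map $U'$ transverse to each stratum $\mathcal{S}_\mu$. Since $\dim X = 3$, transversality to any stratum of codimension $\geq 4$ forces empty preimage, hence $U'(X) \subset SU(N)_{\leq 1}$. Transversality to the codimension-$3$ submanifold $SU(N)_1$ makes $(U')^{-1}(SU(N)_1)$ a smooth $0$-dimensional submanifold of $X$, which is finite because $X$ is compact.

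The main obstacle is the stratification step: one has to verify that each $\mathcal{S}_\mu$ is genuinely a smooth submanifold of the stated codimension, that the family $\{\mathcal{S}_\mu\}$ constitutes a (Whitney) stratification of $SU(N)$, and that a single smooth perturbation can be arranged transverse to all strata simultaneously. All three facts follow from the Lie-theoretic description of adjoint orbits (slicing $\mathfrak{su}(N)$ into centraliser plus normal directions at a representative), but a clean treatment is essentially the content of the ``explicit filtration of $SU(N)$'' announced in the introduction.
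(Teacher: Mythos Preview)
Your proposal is correct and follows the same overall strategy as the paper: stratify $SU(N)$ by eigenvalue-multiplicity type, compute that the single-pair stratum has codimension~$3$ while all deeper strata have codimension $>3$, apply transversality, and invoke compactness of $X$. Your codimension formula $\sum_i m_i^2 - r$ coincides with the paper's computation in Lemmas~\ref{lem:codimension_3}--\ref{lem:codimension_higher}.

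The one tactical difference is exactly the point you flag as the ``main obstacle'': you want a single perturbation simultaneously transverse to all strata, which requires either a Whitney-stratification argument or a separate justification. The paper sidesteps this entirely by an induction on the filtration $SU(N)_{\le 0}\subset\cdots\subset SU(N)_{\le N-1}=SU(N)$. At each step one works inside the \emph{open} manifold $SU(N)_{\le k}$, where the deepest remaining stratum $SU(N)_k$ is a \emph{closed} submanifold (a disjoint union of ordinary submanifolds without boundary); ordinary transversality then applies directly, and since the codimension exceeds $3$ the perturbed map misses $SU(N)_k$, so one may regard it as a map into $SU(N)_{\le k-1}$ and repeat. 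This peeling argument reduces everything to the standard transversality homotopy theorem for a map into a manifold and a closed submanifold, avoiding any appeal to stratified transversality or Whitney conditions. Your route works too, but the paper's is more self-contained.
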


The proposition means that, up to homotopy, we can always assume at most two eigenvalues of $U$ to cross for $x \in X$. Moreover these crossings occur for a finite set of points. The proof of Proposition~\ref{prop:ensure_general_position} is given in Section~\ref{sec:filtration}. It relies on a filtration of $SU(N)$, that is constructed through the root system of its Lie algebra, and generalizes subsets \eqref{defSUN_<1} and \eqref{defSUN_1} to higher order crossings. Then an induction based on transversality arguments allows to deform $U$ to a map with the desired property. 

By Lemma~\ref{label_ev} the eigenvalues of $U$ are uniquely labelled $\lambda_j(x)$ and we define
\begin{equation*}
\Cr_j(U) = \left\lbrace x \in X \, | \,  \lambda_1 > \ldots > \lambda_j = \lambda_{j+1} > \ldots > \lambda_N > \ldots > \lambda_1-1 \right\rbrace
\end{equation*}
for $j=1,\ldots,N$. In particular $\Cr_j(U) \subset(U)^{-1}(SU(N)_1)$ is a finite set of points. Each of these crossing points carries a topological charge that we  compute as follows.

\begin{defn}\label{def:Ch_x}Let $U :X \to SU(N)$ be a smooth map such that $U(X) \subset SU(N)_{\leq 1}$ and  $(U)^{-1}(SU(N)_1)$ is a finite set of points. For $j \in \{1,\ldots,N\}$ there exists a  $3$-dimensional closed disk $D_x$ for each $x \in \Cr_j(U)$ such that: $x$ is contained in the interior of $D_x$; the eigenvalues of $U(y)$ are distinct for any $y \in D_x \backslash \{ x \}$; and $D_x \cap D_y = \emptyset$ whenever $x \neq y$. The boundary $\partial D_x$ of $D_x$ is a $2$-dimensional sphere. In particular 
	$$\mathcal{L}_x = \{ (y, v) \in \partial D_x \times \C^N |\
	U(y) v = \ee^{2\pi \ii \lambda_j(U(y))}v \}$$ forms a complex line bundle over $\partial D_x$. We define  
$\Ch(x;j) \in \mathbb Z$ as the Chern number of this line bundle. 
\end{defn}

\begin{thm}\label{thm:main}
Let  $U :X \to SU(N)$ be a smooth map such that $U(X) \subset SU(N)_{\leq 1}$ and  $(U)^{-1}(SU(N)_1)$ is a finite set of points. Then its 3-dimensional winding number reads
\begin{equation*}
W_3(U) = - \sum_{x \in \Cr_j(U)} \Ch(x;j)
\end{equation*}
for any $j=1,\ldots, N$.
\end{thm}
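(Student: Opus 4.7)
My plan is to reduce the three-dimensional integral defining $W_3(U)$ to a sum of two-dimensional boundary integrals localized at the type-$j$ crossings, via Stokes' theorem applied to a local primitive of $\omega = \frac{1}{24\pi^2}\tr(g^{-1}\dd g)^3$ provided by the basic gerbe on $SU(N)$.

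First, fix $j$ and set $V_j \subset SU(N)$ to be the open subset where $\lambda_j \neq \lambda_{j+1}$ (with the cyclic convention $\lambda_{N+1} = \lambda_1 - 1$ for $j=N$). On $V_j$ the rank-$j$ ``top'' bundle $F_j = \mathcal{L}_1 \oplus \cdots \oplus \mathcal{L}_j$ is well-defined: any degeneracy $\lambda_k = \lambda_{k+1}$ with $k \neq j$ keeps both colliding eigenlines either entirely inside or entirely outside the top-$j$ subspace, so $F_j$ extends smoothly through it. Using the local data of the basic gerbe on $SU(N)$ (Gawedzki--Reis, Meinrenken), I would then construct a smooth 2-form $B_j$ on $V_j$ satisfying $\omega|_{V_j} = \dd B_j$, and such that its de Rham class on a 2-sphere transverse to the singular stratum $\{\lambda_j = \lambda_{j+1}\}$ coincides with the first Chern class of the eigenline $\mathcal{L}_j$ that collides across this stratum.

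Next I localize via Stokes' theorem. Since $U(X) \subset SU(N)_{\leq 1}$, at most two eigenvalues coincide at any point, so $U^{-1}(V_j) = X \setminus \Cr_j(U)$. Choose disjoint 3-disks $D_x$ around each $x\in \Cr_j(U)$ as in Definition~\ref{def:Ch_x}, and set $X' = X \setminus \bigsqcup_{x\in \Cr_j(U)} \mathrm{int}(D_x)$. Then $U^*B_j$ is smooth on $X'$ with $\dd(U^*B_j) = U^*\omega$, and Stokes' theorem yields
\[
\int_{X'} U^*\omega \;=\; -\sum_{x \in \Cr_j(U)} \int_{\partial D_x} U^*B_j .
\]
The minus sign comes from the orientation of $\partial X'$ versus $\partial D_x$. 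Shrinking the disks, $\int_{X'}U^*\omega \to W_3(U)$ because $U^*\omega$ extends smoothly across $X$.

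Finally, to evaluate each boundary contribution, note that the eigenlines $\mathcal{L}_k$ for $k \neq j, j+1$ extend across $x$ as smooth line bundles, while the direct sum $\mathcal{L}_j \oplus \mathcal{L}_{j+1}$ extends as a rank-2 bundle (the generalized eigenspace of the colliding pair), so $c_1(\mathcal{L}_j) + c_1(\mathcal{L}_{j+1}) = 0$ on $\partial D_x$ and hence $c_1(F_j)|_{\partial D_x} = c_1(\mathcal{L}_j)|_{\partial D_x}$ integrates to $\Ch(x;j)$. The construction of $B_j$ guarantees that $\lim_{D_x \to \{x\}}\int_{\partial D_x} U^*B_j$ equals this Chern number, yielding $W_3(U) = -\sum_{x \in \Cr_j(U)} \Ch(x;j)$. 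The main obstacle is the first step: producing the primitive $B_j$ explicitly from the gerbe data and verifying the residue property at $\{\lambda_j = \lambda_{j+1}\}$. This requires choosing local trivializations of the gerbe compatible with the filtration of $SU(N)$ introduced for Proposition~\ref{prop:ensure_general_position}, and carefully tracking the Berry connection on $F_j$ across the singular stratum.
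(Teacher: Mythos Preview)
Your plan is essentially the paper's proof: Stokes' theorem together with the local primitives $B_j$ of the Cartan $3$-form furnished by the basic gerbe. The paper's $O_j$ is exactly your $V_j$, and the line bundle $L_{\bar{j}j}$ (with $\bar{j}=j-1$, or $\bar{j}=N$ if $j=1$) is your eigenline $\mathcal{L}_j$.

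The one point where you diverge slightly is the ``residue'' step you flag as the main obstacle. The paper does not take a limit as $D_x$ shrinks; instead it uses a \emph{second} primitive $B_{\bar{j}}$, defined on $O_{\bar{j}}$, which contains the crossing point (at $x$ only $\lambda_j=\lambda_{j+1}$ degenerates, so $\lambda_{\bar{j}}\neq\lambda_{\bar{j}+1}$ on all of $D_x$ and $U(D_x)\subset O_{\bar{j}}$). Then Stokes on $D_x$ gives $\int_{\partial D_x}U^*B_{\bar{j}}=\int_{D_x}U^*H$, and subtracting this from your boundary term yields
\[
\int_{\partial D_x}U^*(B_j-B_{\bar{j}})=\int_{\partial D_x}U^*c_1(A_{\bar{j}j})=\Ch(x;j),
\]
since the gerbe data satisfy $B_j-B_{\bar{j}}=c_1(A_{\bar{j}j})$ on $O_{\bar{j}}\cap O_j$. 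This replaces your limiting argument by an exact identity at any fixed radius and makes the ``residue property'' automatic. Your observation that $c_1(F_j)|_{\partial D_x}=c_1(\mathcal{L}_j)|_{\partial D_x}$ is correct and matches the paper's remark that only the $j$th eigenline contributes.
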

It should be noted that the ``$j$th crossing'' is enough for the description of $W_3(U)$, and that each one can be used equivalently.  The proof is given in Section~\ref{sec:proof_mainthm}, and relies on the existence of a collection of line bundles with connection which are part of the data of the basic gerbe over $SU(N)$. The end of the proof is a consequence of Stokes theorem.

\begin{exmp}\label{exmp:identity}Let $U : SU(2) \to SU(2)$ be the identity map. Though its $3$-dimensional winding number is clearly $W_3(U) = 1$, we apply Theorem~\ref{thm:main} to this case. Using the parametrization $u = x + \ii y$ and $v = z + \ii w$ such that $\lvert u \rvert^2 + \lvert v \rvert^2 = x^2 + y^2 + z^2 + w^2 = 1$, we specify a matrix $U \in SU(2) \cong S^3$ as
$$
U = 
\left(
\begin{array}{rr}
u & -\bar{v} \\
v & \bar{u}
\end{array}
\right).
$$
We can uniquely express the eigenvalues of $U$ as $\ee^{2\pi \ii \lambda_1}, \ee^{2\pi \ii \lambda_2}$ in terms of $\lambda_1 = \theta$ and $\lambda_2 = - \theta$, where $\theta \in [0, 1/2]$ is subject to $x = \cos 2\pi \theta$. If $\theta \neq 0, 1/2$, then the eigenvalues are distinct. We clearly have $\Cr_1(U) = \1$ and $\Cr_2(U) = -\1$ where $\1 \in SU(2)$ is the identity matrix.
Thus the identity map $U : SU(2) \to SU(2)$ satisfies the assumptions in Theorem~\ref{thm:main}. 

A possible disc $D_\1$ that surrounds $\1 \in \Cr_1(U)$ is the one where $x\geq 0$. Its boundary $\partial D_\1$ corresponds to $x=0$, namely $u=\ii y$ and $v = z + \ii w$ with $y^2 + z^2 + w^2 = 1$. There, $U$
has the distinct eigenvalues $\ii = \ee^{\pi \ii/4}$ and $-\ii = \ee^{-\pi \ii /4}$. Because $\lambda_1 =\tfrac{1}{4} > \lambda_2 =-\tfrac{1}{4} >\lambda_1-1 =-\tfrac{3}{4}$,
the local index $\Ch(x=\1; 1)$ is the Chern number of the line bundle over $\partial D_\1$ whose fibers are the eigenspaces with eigenvalues $\ee^{2\pi \ii \lambda_1} = \ii$. If $y \neq 1$ (resp. $y\neq -1$), then the eigenvector $v_-(U)$ (resp.  $v_+(U)$) of $U \in \partial D_\1$ with eigenvalue $\ii$ is
$$
v_-(U) =
\left(
\begin{array}{c}
\frac{\ii \bar{v}}{1 - y} \\ 1
\end{array}
\right), \qquad 
v_+(U) =
\left(
\begin{array}{c}
1 \\
\frac{-\ii v}{1 + y}
\end{array}
\right),
$$
where $v = z + \ii w$. 
These eigenvectors give local frames of the line bundle $L \to \partial D_\1$. On the circle in $\partial D_\1$ parametrized by $y = 0$ and $v = z + \ii w \in S^1$
we have a $U(1)$-valued map $f(U) = -\ii v$ which measures the discrepancy of $v_+(U)$ and $v_-(U)$ by $v_+(U) = f(U)v_-(U)$. The first Chern number of $L \to \partial D_\1$ agrees with the winding number of $f$, which is $1$, provided that the orientation on $\partial D_\1$ is induced from the sphere $\{ (y, z, w) \in \mathbb R^3 |\ y^2 + z^2 + w^2 = 1 \}$. However, the orientation on $\partial D_\1$ induced from $SU(2)$ is opposite. Therefore we get $\Ch(x;1) = -1$ at $x = \1 \in \Cr_1(U)$, which reproduces $W_3(U) = - \Ch(x;1) = 1$. 

A convenient choice of disk $D_{-\1}$ around $-\1 \in \Cr_2(U)$ is to take $x \leq 0$. Its boundary $\partial D_{-\1}$ coincides with $\partial D_{\1}$ but with reverse orientation. Focusing on $\lambda_2$, we look instead at the eigenvector of $U$ associated to the eigenvalue $-\ii$. All together this leads similarly to $\Ch(x',2) = \Ch(x,1)= -W_3(U)$ for $x'=-\1$ and $x=\1$, as expected.

\end{exmp}

\subsection{An operational version}

Together with Proposition~\ref{prop:ensure_general_position}, Theorem~\ref{thm:main} gives a general relation between $W_3(U)$ and the topological charges of eigenvalue crossings of $U$. However for operational means, it is relevant to  extend its scope. Indeed it is not always easy to continuously deform a given $U$ to a map as in Proposition~\ref{prop:ensure_general_position}. In particular when dealing with models from the physics literature, see below. Two features naturally occur. 

First, eigenvalues usually cross on extended submanifolds rather than points.  The topological charge of such crossing can be defined similarly to Definition~\ref{def:Ch_x}.

\begin{lem}\label{def:Ch_X}
Let  $U :X \to SU(N)$ smooth such that $U(X) \subset SU(N)_{\leq 1}$ and for some $j$  the subspace $\Cr_j(U) = \bigsqcup_a X_a$ is the disjoint union of compact, connected and orientable submanifolds $X_a \subset X$ of dimension $\dim X_a < 3$ without boundary. Then we can choose a (closed) tubular neighborhood $N_a$ of each $X_a$ such that: the eigenvalues of $U(y)$ are distinct for any $y \in N_a \backslash X_a$; and $N_a \cap N_{a'} = \emptyset$ whenever $a \neq a'$. Then $\partial N_a$ is an oriented 2-dimensional manifold. Let $\mathcal{L}_a \to \partial N_a$ be the line bundle over $\partial N_a$  whose fiber is the eigenspace of $U$ with eigenvalue $\ee^{2\pi \ii \lambda_j(U)}$. We define by $\Ch(X_a;j) \in \mathbb Z$  the Chern number of this line bundle. 
\end{lem}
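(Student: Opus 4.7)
My plan is to verify in turn the four assertions implicit in the lemma: existence of tubular neighborhoods $N_a$ with the stated properties, the orientation of $\partial N_a$, the smooth structure on the line bundle $\mathcal{L}_a$, and the integrality (together with independence of the choice of $N_a$) of its Chern number.

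First, the hypothesis $U(X) \subset SU(N)_{\leq 1}$ gives a disjoint decomposition of the crossing set $U^{-1}(SU(N)_1) = \bigsqcup_{k=1}^{N} \Cr_k(U)$. Since the $\lambda_i$ of Lemma~\ref{label_ev} depend continuously on $V \in SU(N)_{\leq 1}$ and at most one coincidence $\lambda_k = \lambda_{k+1}$ can occur there, each $\Cr_k(U)$ is closed in $X$, hence compact. Together with $\Cr_j(U) = \bigsqcup_a X_a$, this means that
\[
A_a \;:=\; \bigcup_{a' \neq a} X_{a'} \,\cup\, \bigcup_{k \neq j} \Cr_k(U)
\]
is compact and disjoint from $X_a$, so $\mathrm{dist}(X_a, A_a) > 0$. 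The tubular neighborhood theorem then produces a closed tubular neighborhood $N_a$ of $X_a$ lying in an arbitrarily small metric neighborhood of $X_a$; choosing each $N_a$ inside the $\tfrac{1}{3}\mathrm{dist}(X_a, A_a)$-neighborhood automatically ensures $N_a \cap A_a = \emptyset$ and $N_a \cap N_{a'} = \emptyset$ for $a \neq a'$. On $N_a \setminus X_a$ the point $y$ lies in no $\Cr_k(U)$, so the eigenvalues of $U(y)$ are pairwise distinct.

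Next, since $\dim X_a < 3$, the neighborhood $N_a$ is a compact 3-dimensional submanifold-with-boundary of $X$, and $\partial N_a$ is a closed 2-manifold equipped with the boundary orientation induced from the orientation of $X$. On $\partial N_a$ the eigenvalue $\ee^{2\pi \ii \lambda_j(U(y))}$ is simple, so a small loop $\gamma_y \subset \C$ separating it from the other eigenvalues exists locally in $y$. The Riesz spectral projector
\[
P_j(y) \;=\; \frac{1}{2\pi \ii} \oint_{\gamma_y} \bigl(z - U(y)\bigr)^{-1} \dd z
\]
depends smoothly on $y$ and has rank $1$; its image defines the smooth complex line subbundle $\mathcal{L}_a \subset \partial N_a \times \C^N$.

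Finally, every complex line bundle over a closed oriented 2-manifold has an integer first Chern number, giving $\Ch(X_a; j) \in \Z$. Independence of the choice of $N_a$ follows from a Stokes-type argument: given two admissible neighborhoods $N_a, N_a'$, one can shrink further to find $N_a'' \subset \mathrm{int}(N_a) \cap \mathrm{int}(N_a')$; the closure of $N_a \setminus N_a''$ is then a compact oriented 3-manifold on which no eigenvalue coincidence occurs, so $\mathcal{L}_a$ extends there and $\dd c_1(\mathcal{L}_a) = 0$ yields $\int_{\partial N_a} c_1(\mathcal{L}_a) = \int_{\partial N_a''} c_1(\mathcal{L}_a)$, and likewise for $N_a'$. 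The main technical point is the separation of crossings — exploiting compactness to produce tubular neighborhoods which enclose only the designated $X_a$ — which relies crucially on the $SU(N)_{\leq 1}$ assumption forcing the different $\Cr_k(U)$ to be compact and mutually disjoint.
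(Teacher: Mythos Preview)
Your proof is correct and actually more thorough than the paper's on several points: you carefully justify the existence of the disjoint tubular neighborhoods via the compactness/separation argument, the smoothness of $\mathcal{L}_a$ via the Riesz projector, and the independence of $\Ch(X_a;j)$ from the choice of $N_a$. The paper leaves all of these implicit.

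Where the two proofs differ is in the treatment of $\partial N_a$. You observe (correctly) that the boundary of a closed tubular neighborhood of a codimension-$r_a$ submanifold is automatically a closed $2$-manifold, and that the boundary orientation comes from $X$. The paper instead argues that the normal bundle of $X_a$ in $X$ is orientable (since both $X$ and $X_a$ are) and then invokes the classification of low-rank orientable real vector bundles over low-dimensional bases to conclude that the normal bundle is \emph{trivial}, so $N_a \cong X_a \times D^{r_a}$ and $\partial N_a \cong X_a \times \partial D^{r_a}$. This product structure is not needed for the bare statement you are proving, but the paper uses it immediately afterwards (Remark~\ref{rk:submanifolds}) to give the explicit case-by-case description of $\partial N_a$ as $S^2$, $S^1 \times S^1$, or $X_a \sqcup X_a$, which in turn is how the Chern number is actually computed in all the examples. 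So your argument is more economical for the lemma as stated, while the paper's buys the concrete identifications that drive the applications.
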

\begin{proof}
We can identify $N_a$ as a disk bundle over $X_a$, whose rank is $r_a = 3 - \dim X_a$, in which $X_a \subset N_a$ is identified with the image of the zero section. Since $X$ and $X_a$ are orientable, so is $N_a \to X_a$ as a normal bundle. Thus, in view of the classification of orientable real vector bundle in low dimensions, we find that the normal bundle in question is always trivial: $N_a \cong X_a \times D^{r_a}$, where $D^{r_a}$ is the closed $r_a$-dimensional disk. As a consequence, we get $\partial N_a \cong X_a \times \partial D^{r_a}$, which is a  $2$-dimensional manifold. The submanifold $N_a \subset X$ inherits an orientation from $X$. We choose the orientation on $\partial N_a$ to be that induced from $N_a$. The submanifold $N_a \subset X$ inherits an orientation from $X$. We choose the orientation on $\partial N_a$ to be that induced from $N_a$.
\end{proof}

\begin{rem}\label{rk:submanifolds}
Concretely, if $\dim X_a = 0$ then $X_a = \pt$ and $\partial N_a \cong S^2$. If $\dim X_a = 1$ then $X_a \cong S^1$ and $\partial N_a \cong S^1 \times S^1$. And if $\dim X_a = 2$  then $\partial N_a \cong X_a \sqcup X_a$. In the latter case $N_a \cong X_a \times [-1, 1]$. We can give an orientation to $X_a$ so that the orientation on $X_a \times [-1, 1]$ agrees with that on $N_a$ induced from $X$. If we write $X_a^\pm = X_a \times \{ \pm 1 \}$, then $\partial N_a \cong X_a^+ \sqcup X_a^-$, and the orientation on $X_a^-$ is the same as that on $X_a$, whereas that on $X_a^+$ is opposite. In these notations $\Ch(X_a,j)$ is expressed as a difference of two Chern numbers between line bundles over $X_a^-$ and $X_a^+$.
\end{rem}

%
%

The second generalization is motivated by Floquet systems with $X=\Sigma \times S^1$ and where one always has $U(x,0) = \1$ for any $x\in \Sigma$ and some base point $0\in S^1$. This is a highly-degenerate crossing occurring on a 2-dimensional submanifold $\Sigma \times \{0\} \cong \mathbb T^2$. We can acually assume that $U^{-1}(\1) = \bigsqcup_b Y_b$ is the disjoint union of compact, connected and orientable submanifolds $Y_b \subset X$ of dimension $\dim Y_b < 3$ without boundary. A similar tubular neighborhood $N_b$ can be defined with a line bundle over $\partial N_b$ and a corresponding Chern number $\Ch(Y_b,j)$,  that is well defined for each $j$. 
  
 \begin{thm} \label{thm:full_degeneracy_at_1}
Let  $U : X \to SU(N)$ be a smooth map such that
\begin{itemize}
\item
$U(X) \subset SU(N)_{\le 1} \cup \{ \1 \}$;

\item
there is $j \in \{ 1, \cdots, N \}$ such that $\Cr_j(U) = \bigsqcup_a X_a$ is the disjoint union of a finite number of compact, connected and orientable submanifolds $X_a \subset X$ of $\dim X_a < 3$ without boundary.

\item
the subspace $U^{-1}(\1) = \bigsqcup_b Y_b$ is the disjoint union of compact, connected and orientable submanifolds $Y_b \subset X$ of $\dim Y_b < 3$ without boundary.
\end{itemize}
Then 
$$
W_3(U) = - \sum_{a} \Ch(X_a; j) + \sum_b \sum_{\ell =j+1}^N \Ch(Y_b; \ell),
$$
with the latter sum vanishing when $j=N$ or $N=2$ by convention. 
\end{thm}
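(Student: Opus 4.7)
The plan is to reduce to Theorem~\ref{thm:main} by a localized homotopy that replaces each degenerate crossing locus by isolated non-degenerate ones, and then to organize the resulting local charges via the eigenspace line bundles. First I would choose pairwise disjoint closed tubular neighborhoods $N_a$ of each $X_a$ and $N_b$ of each $Y_b$, small enough that $U$ has $N$ pairwise distinct eigenvalues on their boundaries and on the complement $X \setminus (\bigsqcup_a N_a \cup \bigsqcup_b N_b)$. Then a relative version of Proposition~\ref{prop:ensure_general_position}—obtained by rerunning the same filtration-and-transversality argument while prescribing $U$ on the closed complement of the interiors of the tubular neighborhoods—produces a smooth map $U'$ homotopic to $U$, agreeing with $U$ off this region, and satisfying the hypotheses of Theorem~\ref{thm:main}. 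Homotopy invariance then gives $W_3(U) = W_3(U') = -\sum_{y \in \Cr_j(U')} \Ch(y;j)$, and since $U'=U$ has pairwise distinct eigenvalues outside the tubular neighborhoods, every crossing of $U'$ lies inside some $N_a$ or some $N_b$.

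Next I would analyse the contribution of each neighborhood. Inside $N_a$, taking the perturbation small enough—possible because the eigenvalue gaps other than $(j,j+1)$ are uniformly positive on the compact set $X_a$—forces all new crossings to be of level $j$. The $j$-th eigenspace line bundle $L_j$ is then defined on $N_a$ away from these isolated crossings, agrees on $\partial N_a$ with the bundle of Lemma~\ref{def:Ch_X}, and Stokes' theorem applied to its curvature on $N_a$ minus small disks around the crossings should yield
\begin{equation*}
\sum_{y \in \Cr_j(U') \cap N_a} \Ch(y;j) \;=\; \Ch(X_a;j).
\end{equation*}
Inside $N_b$ the $N$-fold degeneracy at $Y_b$ breaks generically into simple crossings of various levels, and two ingredients are crucial: first, all eigenspace bundles $L_1, \dots, L_N$ are well defined on $\partial N_b$, and since $\bigoplus_\ell L_\ell$ is the trivial bundle $\C^N$ one has $\sum_{\ell=1}^N \Ch(Y_b;\ell) = 0$; second, at a simple crossing $y \in \Cr_\ell(U')$ the rank-two spectral bundle $L_\ell \oplus L_{\ell+1}$ extends smoothly through $y$, so the local Chern numbers of $L_\ell$ and $L_{\ell+1}$ on $\partial D_y$ are opposite. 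Applying Stokes to each $L_\ell$ separately on $N_b$ minus small disks around all of its interior crossings (of levels $\ell-1$ and $\ell$) then yields the telescoping recursion
\begin{equation*}
\Ch(Y_b;\ell) \;=\; S_\ell - S_{\ell-1}, \qquad S_\ell := \sum_{y \in \Cr_\ell(U') \cap N_b} \Ch(y;\ell), \qquad S_0 := 0,
\end{equation*}
whence $S_j = \sum_{\ell=1}^{j} \Ch(Y_b;\ell) = -\sum_{\ell=j+1}^{N} \Ch(Y_b;\ell)$. Summing over $a$ and $b$ would then deliver the announced formula, and the cases $j=N$ or $N=2$ are recovered by the trivial identity $S_N=0$.

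The main obstacle is expected to be the relative version of Proposition~\ref{prop:ensure_general_position}: the filtration-and-transversality argument must be run while $U$ is kept frozen on the closed complement of the tubular neighborhoods, in particular on $\partial N_a$ and $\partial N_b$ where it already has pairwise distinct eigenvalues. A secondary technical point is ensuring the perturbation inside each $N_a$ is quantitatively small enough—measured against the minimal eigenvalue gap of $U$ near $\partial N_a$—to avoid introducing spurious level-$\ell$ crossings with $\ell \neq j$ that would spoil the clean bookkeeping above.
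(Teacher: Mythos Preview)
Your route is genuinely different from the paper's, and it works, though with one sloppy step to tighten. The paper does \emph{not} perturb $U$ at all: it invokes the basic-gerbe data on $SU(N)$, namely $2$-forms $B_j$ on the open sets $O_j$ with $dB_j = H$ and connections $A_{jk}$ on the line bundles $L_{jk}$ with $B_k - B_j = c_1(A_{jk})$. Since $U(N_a) \subset O_{\bar{j}}$, $U(N_b) \subset O_N$ and $U(X') \subset O_j$, a single Stokes computation turns $\int_X U^*H$ directly into $-\sum_a \int_{\partial N_a} U^*c_1(A_{\bar{j}j}) - \sum_b \int_{\partial N_b} U^*c_1(A_{Nj})$, which \emph{are} the Chern numbers in the statement. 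In particular Theorem~\ref{thm:main} and Theorem~\ref{thm:full_degeneracy_at_1} are proved by the same two-line argument. Your approach instead treats Theorem~\ref{thm:main} as a black box and reduces to it by perturbation plus a telescoping bookkeeping with the individual eigenspace bundles $L_\ell$; this is longer but has the virtue of using only the eigenspace line bundles and never invoking the $2$-forms $B_j$ from the gerbe construction.

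The point to fix is your boundary condition $S_0 := 0$ in the $N_b$ analysis. The line bundle $L_1$ fails not only at level-$1$ crossings but also at level-$N$ crossings (where $\ee^{2\pi i\lambda_1} = \ee^{2\pi i\lambda_N}$), so the honest recursion is cyclic, $\Ch(Y_b;\ell) = S_\ell - S_{\ell-1}$ with $S_0 = S_N$, and you need $S_N = 0$ rather than a conventional $S_0 = 0$. This holds for the same reason your level-$j$ control on $N_a$ holds: since $\1 \in O_N$ and $O_N$ is open, choosing $N_b$ small and the transversality perturbation $C^0$-small forces $U'(N_b) \subset O_N$, so no level-$N$ crossing appears inside $N_b$. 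With that said, the relative transversality you need is exactly the content of Remark~\ref{rem:deformation_boundary}, and the $C^0$-smallness is available in the standard transversality theorems, so the obstacle you flag is not serious.
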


Notice that, for the identity matrix one has $\lambda_1 = \ldots = \lambda_N = 0$ but $\lambda_N > \lambda_1 -1 =-1$ so that the crossing where $U=\1$ is indeed ignored when computing $W_3(U)$ through the $N$-th eigenvalue. Moreover if $N=2$ then $U^{-1}(\1) \subset \Cr_1(U)$, namely this crossing is non degenerate in that case and the second term is absent. 

Several examples with various contributions from $\Cr_j(U)$ and $U^{-1}(\1)$ can be found in Appendix~\ref{app:exmp}. Here we provide one that originally comes from the physical model of \cite{Rudner13}. 

\begin{exmp}\label{exmp:rudner}
The Hamiltonian $H : \mathbb T^2 \times S^1 \rightarrow M_2(\mathbb C)$ is a two-band model that is piecewise constant in time: for $i = 1,\ldots 4$, 
$
H(t,k) = H_i(k)$ for $\tfrac{i-1}{4} \leq t < \tfrac{i}{4}$. For $k = (k_1,k_2) \in [-\pi,\pi]^2$ (we identify $\mathbb T^2$ with its fundamental domain)
\begin{align}\label{Rudner_hamiltonian}
& H_1(k_1,k_2) = - J \sigma_1 \cr
& H_2(k_1,k_2) = - J \big( \cos(k_1-k_2) \sigma_1 + \sin(k_1-k_2) \sigma_2 \big) \cr
& H_3(k_1,k_2) = -J \big( \cos(k_1) \sigma_1 + \sin(k_1) \sigma_2 \big) \cr
& H_4(k_1,k_2) = -J  \big( \cos(k_2) \sigma_1 + \sin(k_2) \sigma_2 \big)
\end{align}
where $\sigma_1$, $\sigma_2$ and $\sigma_3$ are the Pauli matrices:
$$
\sigma_1 = \begin{pmatrix} 0 & 1 \\ 1 & 0
\end{pmatrix}, \qquad \sigma_2 = \begin{pmatrix} 0 & -\ii \\ \ii & 0
\end{pmatrix},\qquad
\sigma_3 = \begin{pmatrix} 1 & 0 \\ 0 & -1
\end{pmatrix}.
$$
The unitary propagator is computed via 
the exponentials $U_i(k_1,k_2,t):=\ee^{-\ii t H_i(k_1,k_2)}$ and reads $U(k_1,k_2,t) = U_i(k_1,k_2,t-\tfrac{i-1}{4}) U_{i-1}(k_1,k_2,\tfrac{1}{4}) \ldots U_1 (k_1,k_2,\tfrac{1}{4})$ for $ \tfrac{i-1}{4} \leq t < \tfrac{i}{4}$. It has a simple explicit expression if we consider the resonant case where $J=2\pi$
\begin{equation}\label{U_explicit}
U(k_1,k_2,t) = \left\lbrace 
\begin{array}{ll}
\cos(2\pi t) \1 + \ii \sin(2\pi t) \sigma_1,& 0\leq t \leq \frac{1}{4},\\
\cos(2\pi t)\big(\cos(k_1-k_2) \1 - \ii \sin(k_1-k_2) \sigma_3 \big) + \ii \sin(2\pi t) \sigma_1,& \frac{1}{4} \leq t \leq \frac{1}{2},\\
\cos(2\pi t)\big(\cos(k_1-k_2) \1 - \ii \sin(k_1-k_2) \sigma_3 \big) &\\
 \hspace {3.8cm}+ \ii \sin(2\pi t) (\cos(k_2) \sigma_1 + \sin(k_2) \sigma_2) ,& \frac{1}{2} \leq t \leq \frac{3}{4},\\
 \cos(2\pi t) \1 + \ii \sin(2\pi t) (\cos(k_2) \sigma_1 + \sin(k_2) \sigma_2) ,& \frac{3}{4} \leq t \leq 1.
\end{array}
\right.
\end{equation}
In particular $U(\cdot,0)=\1$, then for $i=1,\ldots,4$, $\det(U_i(k,t))=1$ so that $U(k,t) \in SU(2)$ and finally $U(\cdot,1) =\1$ so that $U$ is well-defined on $\mathbb T^2\times S^1$. A direct computation of \eqref{def_W3} shows that $W_3(U)=1$. Following the decomposition of Lemma~\ref{label_ev}, the eigenvalues of $U$ are given in Table~\ref{tab:ev_exmp} and schematically represented in Figure~\ref{fig:exmp}(a).
\begin{table}[htb]
\centering
\begin{tabular}{|c|c|c|c|}
	\hline
& $0\leq t \leq 1/4$ & $1/4 \leq t \leq 3/4$ & $3/4 \leq t \leq 1$ \\ \hline
$\lambda_1$ & $t$ & $(2\pi)^{-1}\arccos\big( \cos(k_1-k_2) \cos(2\pi t) \big)$ & $1-t$ \\ \hline
$\lambda_2$ & $-t$ & $-(2\pi)^{-1}\arccos \big( \cos(k_1-k_2) \cos(2\pi t) \big)$ & $t-1$  \\ \hline
\end{tabular}
\caption{Eigenvalues $\ee^{2 \pi \ii \lambda_1}$ and $\ee^{2 \pi \ii \lambda_2}$ of the map U given by \eqref{U_explicit}.\label{tab:ev_exmp}}
\end{table}

The crossing $\lambda_1 = \lambda_2$ occurs at $t=0$ (and $t=1$) for arbitrary $k_1,k_2$  and at $t=1/2$ for $k_1-k_2=0\mod 2\pi$. In both cases $\lambda=0$, corresponding to $\ee^{2\pi \ii \lambda}=1$.  The associated subspace crossing is 
$
\mathrm{Cr}_1(U) =  \mathbb T^2 \times \{0\} \sqcup S_1 \times \{1/2\}
$
where $S_1 = \{(\ee^{\ii k_1},\ee^{\ii k_1}) \} \subset \mathbb T^2$. As mentioned before, since $N=2$ the degenerate crossing due to $U(\cdot,0) = U(\cdot,1)=\1$ is just a single eigenvalue crossing and hence part of $\mathrm{Cr}_1(U)$. The crossing $\lambda_2 = \lambda_1 -1$ occurs at $t=1/2$ for $k_1-k_2=\pi\mod 2 \pi$, where $\lambda_2 = -1/2$ corresponding to $\ee^{2\pi \ii \lambda}=-1$.  The associated subspace crossing is
$
\mathrm{Cr}_2(U) =  S'_1 \times \{1/2\}
$
where $S'_1 = \{(\ee^{\ii k_1},\ee^{-\ii k_1}) \} \subset \mathbb T^2$. 
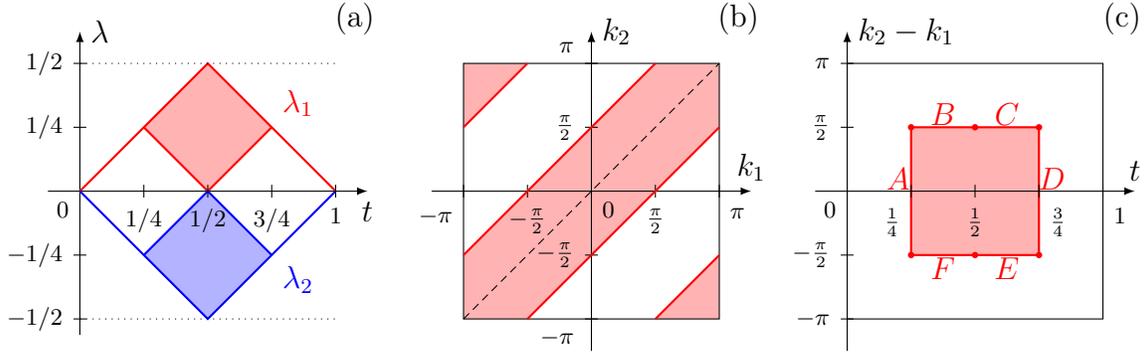
\begin{figure}[htb]
\centering
\begin{tikzpicture}[scale=0.85]
\draw[-latex] (-0.5,0) -- (4.5,0) node[below]{$t$};
\draw[-latex] (0,-2.25) -- (0,2.5) node[right]{$\lambda$};
\fill[red!30] (1,1) -- (2,0) -- (3,1) -- (2,2)--cycle;
\draw[thick,red] (0,0) -- (2,2) -- (4,0);
\draw[thick,red] (1,1) -- (2,0) -- (3,1);
\fill[blue!30] (1,-1) -- (2,0) -- (3,-1) -- (2,-2)--cycle;
\draw[thick,blue] (0,0) -- (2,-2) -- (4,0);
\draw[thick,blue] (1,-1) -- (2,0) -- (3,-1);
\draw(0,0) node[below left] {\scriptsize $0$};
\draw (1,0.1)--(1,-0.1) node[below]{\scriptsize $1/4$};
\draw (3,0.1)--(3,-0.1) node[below]{\scriptsize $3/4$};
\draw (2,0.1)--(2,-0.1) node[below]{\scriptsize $1/2$};
\draw (4,0.1)--(4,-0.1) node[below]{\scriptsize $1$};
\draw (0.1,1)--(-0.1,1) node[left]{\scriptsize $1/4$};
\draw (0.1,2)--(-0.1,2) node[left]{\scriptsize $1/2$};
\draw (0.1,-1)--(-0.1,-1) node[left]{\scriptsize $-1/4$};
\draw (0.1,-2)--(-0.1,-2) node[left]{\scriptsize $-1/2$};
\draw[dotted] (0,2) -- (4,2);
\draw[dotted] (0,-2) -- (4,-2);
\draw[red] (3,1) node[above right]{$\lambda_1$};
\draw[blue] (3,-1) node[below right]{$\lambda_2$};

\draw (4.3,2.7) node {(a)};

\begin{scope}[xshift=8cm]
\fill[red!30] (-2,-2) -- (-2,-1) -- (1,2) -- (2,2) -- (2,1) -- (-1,-2) -- cycle;
\fill[red!30] (-2,2) -- (-1,2) -- (-2,1) -- cycle;
\fill[red!30] (2,-2) -- (1,-2) -- (2,-1) -- cycle;
\draw[-latex] (-2.5,0) -- (2.5,0) node[above]{$k_1$};
\draw[-latex] (0,-2.5) -- (0,2.5) node[right]{$k_2$};
\draw (-2,2) -- (2,2) -- (2,-2) -- (-2,-2)--cycle;
\draw(0,0) node[below right] {\scriptsize $0$};
\draw (1,0.1)--(1,-0.1) node[below]{\scriptsize $\frac{\pi}{2}$};
\draw (-1,0.1)--(-1,-0.1) node[below]{\scriptsize $-\frac{\pi}{2}$};
\draw (2,0.1)--(2,-0.1) node[below right]{\scriptsize $\pi$};
\draw (-2,0.1)--(-2,-0.1) node[below left]{\scriptsize $-\pi$};
\draw (0.1,1)--(-0.1,1) node[left]{\scriptsize $\frac{\pi}{2}$};
\draw (0.1,2)--(-0.1,2) node[above left]{\scriptsize $\pi$};
\draw (0.1,-1)--(-0.1,-1) node[left]{\scriptsize $-\frac{\pi}{2}$};
\draw (0.1,-2)--(-0.1,-2) node[below left]{\scriptsize $-\pi$};
\draw[densely dashed] (-2,-2) -- (2,2);
\draw[thick,red] (-2,-1) -- (1,2);
\draw[thick,red] (-1,-2) -- (2,1);
\draw[thick,red] (-2,1) -- (-1,2);
\draw[thick,red] (1,-2) -- (2,-1);

\draw (2.3,2.7) node {(b)};
\end{scope}

\begin{scope}[xshift=12cm]
\draw[red,thick,fill=red!30] (1,1) -- (3,1) -- (3,-1) -- (1,-1) -- cycle;
\draw[-latex] (-0.5,0) -- (4.5,0) node[above]{$t$};
\draw[-latex] (0,-2.5) -- (0,2.5) node[right]{$k_2-k_1$};
\draw (0,2) -- (4,2) -- (4,-2) -- (0,-2);
\draw(0,0) node[below left] {\scriptsize $0$};
\draw (1,0.1)--(1,-0.1) node[below left]{\scriptsize $\frac{1}{4}$};
\draw (3,0.1)--(3,-0.1) node[below right]{\scriptsize $\frac{3}{4}$};
\draw (2,0.1)--(2,-0.1) node[below]{\scriptsize $\frac{1}{2}$};
\draw (4,0.1)--(4,-0.1) node[below right]{\scriptsize $1$};
\draw (0.1,1)--(-0.1,1) node[left]{\scriptsize $\frac{\pi}{2}$};
\draw (0.1,2)--(-0.1,2) node[left]{\scriptsize $\pi$};
\draw (0.1,-1)--(-0.1,-1) node[left]{\scriptsize $-\frac{\pi}{2}$};
\draw (0.1,-2)--(-0.1,-2) node[left]{\scriptsize $-\pi$};
\draw[red] (0.8,0.2) node{$A$};
\draw[red] (1.5,1.2) node{$B$};
\draw[red] (2.5,1.2) node{$C$};
\draw[red] (3.2,0.2) node{$D$};
\draw[red] (2.5,-1.2) node{$E$};
\draw[red] (1.5,-1.2) node{$F$};
\fill[red] (1,1) circle (0.05cm);
\fill[red] (2,1) circle (0.05cm);
\fill[red] (3,1) circle (0.05cm);
\fill[red] (1,-1) circle (0.05cm);
\fill[red] (2,-1) circle (0.05cm);
\fill[red] (3,-1) circle (0.05cm);

\draw (4.3,2.7) node {(c)};
\end{scope}
\end{tikzpicture}
\caption{(a) Eigenvalues of Table~\ref{tab:ev_exmp}. (b,c) Tubular neighborhood for $\lambda_1 = \lambda_2$ at $t=\tfrac{1}{2}$. \label{fig:exmp}}
\end{figure}

We first focus on $\Cr_1(U)$. For $X_0=\mathbb T^2 \times \{0\}$ we take a tubular neighborhood to be $N_0 = \mathbb T^2 \times [0,1/8] \cup [7/8,0]$. Its boundary is $\partial N_0 = \mathbb T^2 \times \{1/8\} -\mathbb T^2 \times \{7/8\}$. On each boundary $\lambda_1 = -\lambda_2 = 1/8$. Moreover, from \eqref{U_explicit} we notice that $U$ is independent of $k_2$ (resp. $k_1$) at $t=1/8$ (resp. $t=7/8$), and so are the corresponding eigenvectors. Hence the corresponding line bundle associated to $\lambda_1$ over $\mathbb T^2$ is trivial, so that $\Ch(X_0,1) = 0$. For $X_a = S_1 \times \{1/2\}$ we take the following tubular neighborhood
\begin{equation*}
N_a = \left\{ (\ee^{\ii k_1}, \ee^{\ii k_2}) \, | \, k_1 \in [-\pi,\pi], k_2 \in [k_1-\tfrac{\pi}{2},k_1+\tfrac{\pi}{2}]\right\} \times [-\tfrac{1}{4},\tfrac{1}{4}];
\end{equation*}
Its fundamental domain is represented in Figure~\ref{fig:exmp}(b) and (c). Its boundary is the gluing of four pieces
\begin{align*}
\partial N_a = &\, \left\{ (\ee^{\ii k_1}, \ee^{\ii (k_1 \pm \frac{\pi}{2})}\,|\, k_1 \in [-\pi,\pi] \right\} \times  [-1/4,1/4] \cr
& \cup \left\{ (\ee^{\ii k_1}, \ee^{\ii k_1 }\,|\, k_1 \in [-\pi,\pi] , k_2 \in [k_1-\tfrac{\pi}{2},k_1+\tfrac{\pi}{2}]\right\} \times \{\pm \tfrac{1}{4}\}
\end{align*}
so that $\partial N_a \cong \mathbb T^2$. Moreover on $\partial N_a$ the eigenvalues are constant $\lambda_1 = - \lambda_2 = \tfrac{1}{4}$. Thus $\Ch(X_a,1)$ is computed through Chern number of the line bundle associated to eigenvalue $\ee^{2\pi \ii \lambda_1}=\ii$ over $\partial N_a$. We compute it by local sections and the obstruction to glue them all together. Each section corresponds to one of the pieces described on the figure above ($A, \ldots, F$). For convenience we also introduce $\alpha: [\tfrac{1}{4},\tfrac{3}{4}] \mapsto [-1,1]$ where
\begin{equation*}
\alpha(t) = \dfrac{1 + \cos(2\pi t)}{\sin(2\pi t)} = \dfrac{\sin(2\pi t)}{1 - \cos(2\pi t)}.
\end{equation*}
In particular $\alpha(\tfrac{1}{4}) = 1$, $\alpha(\tfrac{1}{2}) = 0$ and $\alpha(\tfrac{3}{4}) = -1$. The eigenvectors for $\ii$ of $U$ are obtained from \eqref{U_explicit} and read
\begin{align*}
\psi_A = \tfrac{1}{\sqrt{2}} \begin{pmatrix}
1 \\ 1
\end{pmatrix}, \qquad & \psi_B = \tfrac{1}{\sqrt{1+\alpha^2(t)}} \begin{pmatrix}
 \alpha(t)\\ 1
\end{pmatrix},
& \psi_C = \tfrac{1}{\sqrt{1+\alpha^2(t)}} \begin{pmatrix}
-\ii \ee^{-\ii k_1} \alpha(t)\\ 1
\end{pmatrix} \cr
\psi_D = \tfrac{1}{\sqrt{2}} \begin{pmatrix}
-\ee^{-\ii k_2} \\ 1 \end{pmatrix}, \qquad & \psi_E = \tfrac{1}{\sqrt{1+\alpha^2(t)}} \begin{pmatrix}
\ii \ee^{-\ii k_1} \\ \alpha(t)
\end{pmatrix},& \psi_F = \tfrac{1}{\sqrt{1+\alpha^2(t)}} \begin{pmatrix}
1 \\ \alpha(t)
\end{pmatrix}.
\end{align*}
One easily checks that all sections coincide at their transition, except for $\psi_F=\ii \ee^{-\ii k_1}\psi_E$ at $t=\tfrac{1}{2}, k_2 = k_1 - \tfrac{\pi}{2}$.
Thus the gluings are all trivial except between $F$ and $E$ that has a $k_1$-dependent discrepancy $f(k_1) = \ii \ee^{-\ii k_1}$. The Chern number over $\partial N_a$ is given by the winding number of this discrepancy, so that $\Ch(X_a,1) = W_1(f) = -1$. Thus $W_3[U] = - \Ch(X_0,1) - \Ch(X_a,1)$ as expected.

The computation with $\Cr_2(U) =S'_1 \times \{1/2\}=X_b$ is analogous. We take the following tubular neighborhood
\begin{equation*}
N_b = \left\{ (\ee^{\ii k_1}, \ee^{\ii k_2}) \, | \, k_1 \in [-\pi,\pi], k_2 \in [k_1-\pi,k_1-\tfrac{\pi}{2}] \cup [k_1+\tfrac{\pi}{2},k_1+\pi]\right\} \times [-\tfrac{1}{4},\tfrac{1}{4}].
\end{equation*}
Its fundamental domain is the complement of the one represented in Figure~\ref{fig:exmp}(b). On its boundary $\partial N_b$ the eigenvalues are constant $\lambda_1 = - \lambda_2 = \tfrac{1}{4}$, corresponding to $\ee^{2\pi \ii \lambda_2}=-\ii$. Thus $\Ch(N_b,2)$ is computed through the Chern number of corresponding line bundle. A computation similar to the previous one leads to $\Ch(X_b,2) = -1$ as expected.
\end{exmp}

\subsection{Non-periodic case and classification of time-driven systems}

The last issue is to classify, up to homotopy, unitary maps that are defined on manifold with boundary. More precisely, the unitary propagator is usually not periodic in time, even if its generator is. In that case the winding number $W_3$ is not well defined, but the eigenvalue crossing Chern numbers are and actually determine the homotopy classes of such maps. Similarly to \eqref{defSUN_<1} we define 
\begin{equation}
 SU(N)_{\leq 0} =  \left\lbrace V \in SU(N) \,|\, \lambda_1  > \lambda_2 > \ldots > \lambda_N > \lambda_1-1\right\rbrace
\label{defSUN_0}
\end{equation}
where there is no eigenvalue crossing.

\begin{defn}
A Floquet map is a smooth map $U : \Sigma \times [0,1] \to SU(N)$ such that:
\begin{enumerate}
\item $U|_{\Sigma \times \{0\}} = \1$ and $\exists \epsilon >0,  U(\Sigma \times (0,\epsilon)) \in SU(N)_{\leq 0}$
\item $U(\Sigma \times  \{1\}) \subset SU(N)_{\leq 0}$
\item $U(\Sigma \times (0,1)) \subset SU(N)_{\leq 1}$ and for each $j=1,\ldots, N$, $\Cr_j(U) = \bigsqcup_a X_a$ is the disjoint union of compact, connected and orientable submanifolds $X_a \subset X$ of dimension $\dim X_a < 3$ without boundary.
\end{enumerate}
\end{defn}
Notice that the first two conditions naturally occur in physical systems\footnote{Condition 2 requires that the eigenvalues of $U(\Sigma \times \{1\})$ are well separated by ``local'' spectral gaps, which is slightly more general than the global spectral gap property that is usually assumed.} and the third one can always be assumed up to homotopy, similarly to Proposition~\ref{prop:ensure_general_position}. See Remark~\ref{rem:deformation_boundary} below. 
\begin{defn}\label{def:top_floquet}
Let $U : \Sigma \times [0,1] \to SU(N)$ be a Floquet map. The topological indices are, for $j=1,\ldots, N$
\begin{equation*}
\mathcal I(U;j) = - \sum_a \Ch(X_a;j) + \sum_{\ell = j+1}^N \Ch(\Sigma \times \{0\};\ell),
\end{equation*}
where $\Ch(\Sigma \times \{0\},\ell)$ is the Chern number of the $\ell$-th eigenvector line bundle of $U$ over a collar neighborhood of $\Sigma \times \{0\} \subset \Sigma \times [0,1]$. The second sum above vanishes when $j=N$ or $N=2$.
\end{defn}
In contrast to periodic map where Theorem~\ref{thm:full_degeneracy_at_1} provides an equality between all the $\mathcal I(U;j)$, this is not the case for Floquet maps. The indices are however not completely independent. 
\begin{prop}\label{prop:rel_I_C}
Let $U : \Sigma \times [0,1] \to SU(N)$ be a Floquet map and $V = U|_{\Sigma \times \{1\}}$. For $p<q$,
\begin{equation*}
\mathcal I(U;q) - \mathcal I (U;p) = \sum_{\ell = p+1}^q C(V;\ell),
\end{equation*}
where $C(V;\ell)$ is the Chern number of the $\ell$-th eigenvector line bundle of $V$ over $\Sigma \times \{1\}$.
\end{prop}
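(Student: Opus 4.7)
The plan is to reduce the statement to the one-step increment $\mathcal{I}(U;j) - \mathcal{I}(U;j-1) = C(V;j)$ for each $j \in \{p+1,\ldots,q\}$, after which the proposition follows by telescoping. The tool for the incremental identity is a Stokes' theorem computation applied to the Chern form of the $j$-th eigenvector line bundle $L_j$, on the complement of the submanifolds where $L_j$ fails to be defined.

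More precisely, $L_j$ is a smooth Hermitian line bundle on the subset of $\Sigma \times (0,1]$ where $\lambda_j$ is strictly separated from both $\lambda_{j-1}$ and $\lambda_{j+1}$, i.e.\ on the complement of $\Cr_{j-1}(U) \cup \Cr_j(U) \cup (\Sigma \times \{0\})$. Choose pairwise disjoint (closed) tubular neighborhoods $N_a$ of the components of $\Cr_{j-1}(U) \cup \Cr_j(U)$ together with a collar $\Sigma \times [0,\epsilon]$; this is possible by Floquet condition~(1). Setting $M_j = (\Sigma \times [\epsilon, 1]) \setminus \bigsqcup_a \mathrm{int}(N_a)$, we equip $L_j$ with a Hermitian connection and obtain a closed Chern form $c_1(L_j)$ on $M_j$. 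Stokes' theorem yields $\int_{\partial M_j} c_1(L_j) = 0$. The boundary splits into four types of components; with the orientation convention of Lemma~\ref{def:Ch_X} and Remark~\ref{rk:submanifolds}, they contribute respectively $C(V;j)$ from $\Sigma \times \{1\}$, $\Ch(\Sigma \times \{0\};j)$ from the collar boundary $\Sigma \times \{\epsilon\}$, $\Ch(X_a;j)$ from $\partial N_a$ with $X_a \subset \Cr_j(U)$, and a term coming from $\partial N_a$ with $X_a \subset \Cr_{j-1}(U)$ that I identify next.

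The crucial topological input concerns this last contribution. Near a component $X_a \subset \Cr_{j-1}(U)$, the coinciding pair $\lambda_{j-1} = \lambda_j$ remains strictly separated from $\lambda_{j-2}$ and $\lambda_{j+1}$ on a small enough $N_a$, since $U(\Sigma \times (0,1)) \subset SU(N)_{\leq 1}$. Consequently the spectral projector onto the joint two-dimensional eigenspace for this pair depends smoothly on the point of $N_a$, and $L_{j-1} \oplus L_j$ extends as a smooth rank-two Hermitian bundle over all of $N_a$. Since $c_1(L_{j-1}) + c_1(L_j)$ then represents a closed form on $N_a$, Stokes gives $\int_{\partial N_a} (c_1(L_{j-1}) + c_1(L_j)) = 0$, hence the $\partial N_a$ contribution to the Stokes identity equals $-\Ch(X_a;j-1)$. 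Collecting the four types of terms yields
\begin{equation*}
C(V;j) + \Ch(\Sigma \times \{0\};j) + \sum_{X_a \subset \Cr_j(U)} \Ch(X_a;j) - \sum_{X_a \subset \Cr_{j-1}(U)} \Ch(X_a;j-1) = 0.
\end{equation*}
Substituting this into $\mathcal{I}(U;j) - \mathcal{I}(U;j-1)$ using Definition~\ref{def:top_floquet}, the collar term $\Ch(\Sigma \times \{0\};j)$ cancels against the corresponding term from the shift in the range of the second sum, and the claimed increment $\mathcal{I}(U;j) - \mathcal{I}(U;j-1) = C(V;j)$ drops out. Summing over $j=p+1,\ldots,q$ finishes the proof.

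The main obstacle I anticipate is the careful bookkeeping of orientations across the four boundary pieces of $M_j$, so that the signs coming out of Stokes' theorem agree with the paper's conventions for $C$, $\Ch(X_a;\cdot)$, and $\Ch(\Sigma \times \{0\};\cdot)$; in particular the collar neighborhood of $\Sigma \times \{0\}$ is one-sided and requires care, as does the distinction between the outward orientation on $\partial M_j$ and the inward-to-$N_a$ convention used in Lemma~\ref{def:Ch_X}. The spectral projector argument that produces the extension of $L_{j-1} \oplus L_j$ across the crossing set is uniform in $\dim X_a \in \{0,1,2\}$ and should require only standard perturbative estimates for eigenprojectors of Hermitian matrices applied to $-\mathrm{i} \log U$ on a small neighborhood.
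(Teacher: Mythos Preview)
Your approach is correct and genuinely different from the paper's. The paper proves the proposition by \emph{periodizing} the Floquet map: using the branch-cut logarithm $\log_{\alpha_j}$ of $V$ it builds a relative evolution $U_{\alpha_j} : \Sigma \times S^1 \to U(N)$, applies Theorem~\ref{thm:full_degeneracy_at_1} to the $SU(N)$-valued map $U_{\alpha_N}$ to obtain $W_3(U_{\alpha_N}) = \mathcal I(U;p) + \sum_{\ell=p+1}^N C(V;\ell)$, and then compares the different branch cuts via the additivity of $W_3$ together with the standard identity $W_3(\ee^{2\pi\ii t P}) = \mathrm{Ch}(P)$. The relation between the $\mathcal I(U;j)$ then drops out of the equality of all these $W_3$'s.

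Your route bypasses the periodization, the three-form $H$, and Theorem~\ref{thm:full_degeneracy_at_1} entirely: it is a purely two-dimensional Chern-class argument, whose only non-formal input is that the rank-two eigenbundle for the pair $(\lambda_{j-1},\lambda_j)$ extends smoothly across $\Cr_{j-1}(U)$. This is more elementary and conceptually cleaner for the statement at hand. The paper's approach, on the other hand, yields the stronger intermediate identity $W_3(U_{\alpha_j}) = \mathcal I(U;j)$, which ties the local indices back to a winding number and feeds into the proof of Theorem~\ref{thm:Floquet_maps}. As you note, the remaining work in your argument is the sign bookkeeping: the one-sided collar at $t=0$ and the orientation on $\partial N_a$ (induced from $N_a$, hence \emph{opposite} to the outward orientation from $M_j$) must be matched to the paper's conventions in Remark~\ref{rk:submanifolds} and Definition~\ref{def:top_floquet}; once this is done consistently the telescoping goes through.
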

Since $V : \Sigma \times \{1\} \to SU(N)_{\leq 0}$ the line bundles are well defined directly over $\Sigma \times \{1\}$ rather than some neighborhood of it, hence we use a different notation for the Chern numbers not to confuse them with $\Ch$. We are finally able to classify Floquet maps up to homotopy.

\begin{thm}\label{thm:Floquet_maps} Let $U_0, U_1 : \Sigma \times [0,1] \to SU(N)$ be two Floquet maps. The following statements are equivalent:
\begin{enumerate}
\item $U_0$ and $U_1$ are relative homotopic: it exists $\tilde U : \Sigma \times [0,1] \times [0,1] \to SU(N)$ smooth such that $\tilde U|_{\Sigma \times [0,1] \times \{i\}} = U_i$ for $i=0,1$; $\tilde U|_{\Sigma \times \{0\} \times [0,1] } =\1$ and $\tilde U(\Sigma \times \{1\} \times [0,1]) \subset SU_{N \leq 0}$.
\item $\mathcal I(U_0;j) = \mathcal I(U_1;j)$ for all $j =1,\ldots,N$.
\item $\mathcal I(U_0;j) = \mathcal I(U_1;j)$ for some  $j \in 1,\ldots,N$ and $C(V_0;p) = C(V_1;p)$ for all $p =1,\ldots,N$.
\end{enumerate}
\end{thm}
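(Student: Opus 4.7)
The plan is to prove the cycle $(1) \Rightarrow (2) \Rightarrow (3) \Rightarrow (1)$. The equivalence $(2) \Leftrightarrow (3)$ is purely algebraic through Proposition~\ref{prop:rel_I_C}, while the geometric implications both reduce to a winding-number computation on a suitably glued map $\hat U : \Sigma \times S^1 \to SU(N)$.

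For $(2) \Leftrightarrow (3)$: Proposition~\ref{prop:rel_I_C} gives $\mathcal I(U_i;q) - \mathcal I(U_i;p) = \sum_{\ell=p+1}^q C(V_i;\ell)$ for any $p<q$. Assuming $(2)$ and specializing to $q = p+1$ yields $C(V_0;\ell) = C(V_1;\ell)$ for $\ell \geq 2$, and the case $\ell = 1$ follows from $\sum_\ell C(V_i;\ell) = 0$ since $\bigoplus_\ell L_\ell$ is the trivial bundle over $\Sigma$. Conversely, $(3)$ propagates the equality of $\mathcal I(\cdot;j)$ at a single $j$ to all other indices via the same identity, giving $(2)$.

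The geometric direction rests on the following construction. Assume that the boundary values $V_0, V_1 : \Sigma \to SU(N)_{\leq 0}$ are connected by a path $V_s$ of maps into $SU(N)_{\leq 0}$; this holds both under $(1)$, where $V_s = \tilde U|_{t=1}$, and under $(2)$, because the equivalence with $(3)$ combined with the homotopy equivalence $SU(N)_{\leq 0} \simeq SU(N)/T$ and the classification of maps from the $2$-dimensional $\Sigma$ into the flag variety by Chern numbers of the tautological line bundles forces $V_0 \simeq V_1$ inside $SU(N)_{\leq 0}$. I would then form $\hat U : \Sigma \times S^1 \to SU(N)$ by gluing $U_0$ (with reversed orientation in $t$), $V_s$ along the top face $t=1$, and $U_1$, while collapsing the bottom face $\Sigma \times \{0\}$ to a single base point. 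Applying Theorem~\ref{thm:full_degeneracy_at_1} to $\hat U$ and tracking orientations carefully---the contribution from $\Cr_j(U_0)$ flips sign under time reversal, the top arc $V_s$ contributes no crossings since it lies in $SU(N)_{\leq 0}$, and the degenerate $\1$-locus $\Sigma \times \{0\}$ splits as a difference of the two sides per Remark~\ref{rk:submanifolds}---yields $W_3(\hat U) = \mathcal I(U_1;j) - \mathcal I(U_0;j)$.

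With this identity the two implications follow. For $(1) \Rightarrow (2)$, the rel-homotopy $\tilde U$ itself extends $\hat U$ over $\Sigma \times [0,1]^2 \cong \Sigma \times D^2$, so $W_3(\hat U) = 0$ and hence $\mathcal I(U_0;j) = \mathcal I(U_1;j)$. For $(2) \Rightarrow (1)$, the vanishing $W_3(\hat U) = 0$ combined with $W_1(\hat U) = 0$ (immediate since $\hat U$ collapses to $\1$ at the base point) invokes the classification of Appendix~\ref{sec:reduction} to produce a null-homotopy of $\hat U$ rel base point; extending $\hat U$ over $\Sigma \times D^2$ and unfolding produces the desired rel-homotopy between $U_0$ and $U_1$. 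The main obstacle will be ensuring that this unfolded homotopy preserves the Floquet constraint $\tilde U(\Sigma \times \{1\} \times [0,1]) \subset SU(N)_{\leq 0}$ throughout, i.e., that the arc of $\partial D^2$ corresponding to $t=1$ extends as a sub-disk remaining entirely inside $SU(N)_{\leq 0}$. This requires a relative version of the null-homotopy, which I would obtain by first producing any null-homotopy of $\hat U$ from the appendix and then deforming its values near the top face back into $SU(N)_{\leq 0}$ using the flexibility afforded by the path $V_s$ and a tubular retraction onto the open set $SU(N)_{\leq 0}$.
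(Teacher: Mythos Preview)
Your cycle $(2)\Leftrightarrow(3)$ via Proposition~\ref{prop:rel_I_C} matches the paper exactly. Your geometric strategy---gluing $U_0$, the path $V_s$, and $U_1$ into a single $\hat U:\Sigma\times S^1\to SU(N)$ and invoking Theorem~\ref{thm:full_degeneracy_at_1} plus Appendix~\ref{sec:reduction}---is correct and close in spirit to the paper, but organized differently. The paper first isolates the equal-endpoint case as Proposition~\ref{prop:Floquet_maps}, where it uses the \emph{pointwise product} $U_0U_1^{-1}$ rather than a gluing: since $U_0U_1^{-1}$ is identically $\1$ on $\Sigma\times\{0,1\}$, a null-homotopy of it rel $\Sigma\times\{0,1\}$ right-multiplied by $U_1$ immediately yields a rel-$\partial$ homotopy $U_0\simeq U_1$, with no ``unfolding'' step needed. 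For unequal endpoints the paper then concatenates $U_0$ with the path $\tilde V$ in $SU(N)_{\le 0}$ to force equal endpoints and writes down explicit reparametrizing homotopies. Your approach absorbs $\tilde V$ into the glued map from the start, which is slightly more direct but requires tracking orientations of three arcs simultaneously.

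Your final worry about preserving $\tilde U(\Sigma\times\{1\}\times[0,1])\subset SU(N)_{\le 0}$ is misplaced, and the fix you sketch (a tubular retraction near the top face) is unnecessary. The null-homotopy of $\hat U$ rel $\Sigma\times\{*\}$ produces an extension $\bar H:\Sigma\times D^2\to SU(N)$ with $\bar H|_{\partial D^2}=\hat U$. Now choose the identification $[0,1]_t\times[0,1]_s\big/\bigl(\{t=0\}\times[0,1]\bigr)\cong D^2$ so that the three remaining boundary edges $\{s=0\}$, $\{t=1\}$, $\{s=1\}$ map onto the arcs of $\partial D^2$ carrying $U_0$, $V_s$, $U_1$ respectively, and the collapsed edge $\{t=0\}$ maps to the base point $*$. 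Pulling back $\bar H$ then yields $\tilde U$ with $\tilde U|_{t=1}=V_s\in SU(N)_{\le 0}$ automatically: the face $\{t=1\}$ lands on $\partial D^2$, not in the interior, so no ``sub-disk'' ever leaves $SU(N)_{\le 0}$. The product trick in the paper's Proposition~\ref{prop:Floquet_maps} is precisely what lets one avoid having to think about this identification.
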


Thus, in contrast to periodic maps, homotopy classes of Floquet maps are characterized by a set of $N$ indices. Finally note that if two Floquet maps have the same endpoint the previous theorem simplifies to
\begin{prop}\label{prop:Floquet_maps}
	Let $U_0, U_1 : \Sigma \times [0,1] \to SU(N)$ be two Floquet maps such that $U_0|_{\Sigma \times \{1\}} = U_1|_{\Sigma \times \{1\}}$. Then $U_0$ and $U_1$ are homotopic relative to $\Sigma \times \partial [0,1]$ iff $\mathcal I(U_0;j) = \mathcal I(U_1;j)$ for some (and hence all) $j \in \{1,\ldots,N\}$. Moreover $\mathcal I(U_0;j) - \mathcal I(U_1;j) = W_3(U_0 \cup U_1)$ where $U_0 \cup U_1 : \Sigma \times S^1$ is the map given by gluing $U_0$ and $U_1$.
\end{prop}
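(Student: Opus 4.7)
The plan is to deduce this proposition from Theorem~\ref{thm:Floquet_maps} by first establishing the gluing formula $\mathcal{I}(U_0;j) - \mathcal{I}(U_1;j) = W_3(U_0 \cup U_1)$, then using a standard path/loop adjunction for the backward implication. I first construct $U_0 \cup U_1 : \Sigma \times S^1 \to SU(N)$ with $S^1 = [0,2]/(0 \sim 2)$, setting it equal to $U_0(x,s)$ for $s \in [0,1]$ and to $U_1(x, 2-s)$ for $s \in [1,2]$. Continuity holds at $s=1$ because $U_0|_{t=1} = U_1|_{t=1} = V$, and at $0 \sim 2$ because both are $\1$. Since $V \in SU(N)_{\leq 0}$ is non-degenerate, a small generic perturbation near $s=1$ arranges that the only identity preimage is the submanifold $\Sigma \times \{0\}$ and that $\Cr_j(U_0 \cup U_1) = \Cr_j(U_0) \sqcup \Cr_j(U_1)$, with the $U_1$ crossings carrying the reversed orientation induced by $s \mapsto 2-s$. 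Applying Theorem~\ref{thm:full_degeneracy_at_1} and using Remark~\ref{rk:submanifolds} to split the contribution of $\Sigma \times \{0\}$ as a difference of Chern numbers over the two collar components inside $U_0$ and $U_1$, the terms match the definition of $\mathcal{I}(U_i;j)$ and yield the formula.

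For the forward direction, any smooth homotopy of $U_0$ to $U_1$ rel $\Sigma \times \partial [0,1]$ satisfies the hypotheses of Theorem~\ref{thm:Floquet_maps}(1) (with $V_0 = V_1 = V$ forcing $C(V_0;p) = C(V_1;p)$ trivially), whence $\mathcal{I}(U_0;j) = \mathcal{I}(U_1;j)$ for all $j$. For the backward direction, suppose $\mathcal{I}(U_0;j) = \mathcal{I}(U_1;j)$ for some $j$. Then $W_3(U_0 \cup U_1) = 0$ by the gluing formula, and $W_1(U_0 \cup U_1) = 0$ follows from $\pi_1(SU(N)) = 0$. The classification of Appendix~\ref{sec:reduction} then furnishes a pointed null-homotopy $H : \Sigma \times S^1 \times [0,1] \to SU(N)$ with $H|_{r=0} = U_0 \cup U_1$, $H|_{r=1} = \1$, and $H|_{\Sigma \times \{0\} \times [0,1]} = \1$. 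Via adjunction this encodes a null-homotopy rel basepoint of the loop $U_0 * \overline{U_1}$ in $\mathrm{Map}(\Sigma, SU(N))$ based at the constant map $\1$. By the standard fact that a loop $p * \bar q$ contracts rel basepoint if and only if the paths $p$ and $q$ are homotopic rel endpoints, this produces the required $\tilde U : \Sigma \times [0,1]_t \times [0,1]_u \to SU(N)$ with $\tilde U|_{u=i} = U_i$, $\tilde U|_{t=0} = \1$ and $\tilde U|_{t=1} = V$.

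The main obstacle is the orientation book-keeping for the gluing formula: one must carefully track the signs coming from the two collar components of the tubular neighborhood of $\Sigma \times \{0\}$ inside $U_0 \cup U_1$ against the definition of $\mathcal{I}(U_i;j)$, as well as the sign flip on the $U_1$ crossings induced by the orientation-reversing reparametrisation $s \mapsto 2-s$. A secondary technical point is to verify that the null-homotopy supplied by the classification theorem can be taken smooth and that the unfolded map $\tilde U$ indeed satisfies the pointed conditions at both $t=0$ and $t=1$; this should be routine by approximation, using that $V$ is a regular value in $SU(N)_{\leq 0}$ so the constraint $\tilde U|_{t=1} = V$ can be enforced on a collar by a homotopy-extension argument.
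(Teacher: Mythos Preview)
Your gluing formula $\mathcal{I}(U_0;j)-\mathcal{I}(U_1;j)=W_3(U_0\cup U_1)$ via Theorem~\ref{thm:full_degeneracy_at_1} is exactly what the paper does. However, there is a logical circularity in your plan: you invoke Theorem~\ref{thm:Floquet_maps} for the forward direction, but in the paper Proposition~\ref{prop:Floquet_maps} is proved \emph{first} and then used as an input to the proof of Theorem~\ref{thm:Floquet_maps} (see the remark just after the statement of the proposition, and the $3\Rightarrow 1$ step in Section~\ref{sec:proof_floquet}). The fix is immediate: once you have the gluing formula, the forward direction follows from homotopy invariance of $W_3$ alone, since a homotopy $U_0\sim U_1$ rel $\Sigma\times\partial[0,1]$ gives $U_0\cup U_1\sim U_1\cup U_1$ and the latter has $W_3=0$.

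For the backward direction your concatenation-and-adjunction argument is correct and is morally the same as the paper's, though the realizations differ. The paper passes to the pointwise product $\hat U=U_0U_1^{-1}$, which descends to $\Sigma\times S^1$, applies Proposition~\ref{prop:obstruction} to get $\hat U\sim\1$ from $W_3(\hat U)=0$, and then needs a short algebraic computation ($W_3(U_0\cup U_1)-W_3(U_1\cup U_1)=W_3((U_0U_1^{-1})\cup\1)=W_3(\hat U)$) to match $W_3(\hat U)$ with $W_3(U_0\cup U_1)$. Your route avoids that computation at the price of the loop/path bookkeeping; either works. One minor point: the generic perturbation near $s=1$ is unnecessary, since $V\in SU(N)_{\le 0}$ already excludes $\1$, so the only identity locus of $U_0\cup U_1$ is $\Sigma\times\{0\}$ (for $N\ge 3$; when $N=2$ any further $\1$-preimages sit inside $\Cr_1$ and are handled there).
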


We actually first prove this proposition in Section~\ref{sec:proof_floquet} then use it to prove Theorem~\ref{thm:Floquet_maps}. An example with a non-trivial contribution from $\Sigma \times \{0\}$ is provided in Appendix~\ref{subsec:example_top}. Most of the models from the physics literature provides Floquet maps, but only a few of them are simple enough to pursue analytical computations of the topological indices. Here we recycle the physical model from Example~\ref{exmp:rudner} to avoid too heavy computations.

\begin{exmp}
Consider the piecewise constant Hamiltonian \eqref{Rudner_hamiltonian} with $J=2\pi$ but only for $t \in [0, \tfrac{5}{8}]$. The evolution is given by \eqref{U_explicit} with $t \in [0, \tfrac{5}{8}]$. This restriction is a Floquet map. Indeed one has
\begin{equation*}
V(k_1,k_2)=U(k_1,k_2,\tfrac{5}{8})  = -\tfrac{1}{\sqrt 2} \begin{pmatrix}\ee^{\ii(k_1-k_2)} & \ii \ee^{\ii k_2} \\ \ii \ee^{-\ii k_2} & \ee^{-\ii(k_1-k_2)}\end{pmatrix} \in SU(2)_{\leq 0},
\end{equation*}
which can be seen from Table~\ref{tab:ev_exmp} and Figure~\ref{fig:exmp}(a). The local charges have been already computed in Example~\ref{exmp:rudner}. One has $\Cr_1(U)= X_0 \sqcup X_a$ where $X_0 =T^2 \times \{0\}$ and $X_a = S^1 \times \{\tfrac{1}{2}\}$, with $\Ch(X_0;1) =0$ and $\Ch(X_a;1)=-1$, and $\Cr_2(U)=X_b = S'_1 \times  \{\tfrac{1}{2}\}$ with $\Ch(X_b;2)=-1$. Thus $\Top(U;1) = \Top(U;2) =1$ even though $W_3(U) \notin \mathbb Z$.

The fact that the two indices coincide comes from the vanishing of the Chern number of $V$, according to Proposition~\ref{prop:rel_I_C}. Indeed, the eigenvector associated to eigenvalue $\lambda_1$ (given in Table~\ref{tab:ev_exmp}) of $U(k_1,k_2,\tfrac{5}{8})$ is
\begin{equation*}
\begin{pmatrix}
\ee^{\ii k_2}\big(\sin(k_1-k_2) - \sqrt{2-\cos^2(k_x-k_y)}\big) \\ 1
\end{pmatrix}.
\end{equation*}
This is a non-vanishing regular section over $T^2$ so that $C(V;1)=0$. Similarly $C(V;2)=0$.

Similarly, one could consider the restriction of the same model but for $t \in [0, \tfrac{3}{8}]$ instead. This gives $\Top(U;1) = \Top(U;2) =0$ and $C(V;1)=C(V;2)=0$. Although these two examples appear homotopic one to each other through \eqref{U_explicit}, this is actually not the case because the $SU(N)_{\leq 0}$-valued property of the end point is not preserved by such map: it is explicitly broken at $t=\tfrac{1}{2}$.
\end{exmp}

\section{Proofs \label{sec:proofs}}

\subsection{Filtration of SU(N)}

In this section we prove Proposition~\ref{prop:ensure_general_position}.

\subsubsection{Review of root system}

To analyze the eigenvalues of $SU(N)$ and also to use results in \cite{GawedzkiReis02,Meinrenken02} later, we review some notations related to the standard root system of $SU(N)$. Let $T \subset SU(N)$ be the subgroup consisting of diagonal matrices, which gives rise to a maximal torus. We write $\mathfrak{su}(N)$ and $\mathfrak{t}$ for the Lie algebra of $SU(N)$ and $T$, respectively, and $\mathfrak{t}^* = \mathrm{Hom}(\mathfrak{t}, \R)$ its dual. We let 
$$
\langle \ , \ \rangle : \ \mathfrak{su}(N) \times \mathfrak{su}(N) \to \R
$$
be the symmetric bilinear form defined as $\langle X, Y \rangle = - \mathrm{tr}(XY)$ by using the trace of matrices. The restriction of the bilinear form $\langle \ , \ \rangle : \mathfrak{t} \times \mathfrak{t} \to \R$ is non-degenerate, which allows us to identify $\mathfrak{t}^*$ with $\mathfrak{t}$: For $\alpha \in \mathfrak{t}^*$, we define $\check{\alpha} \in \mathfrak{t}$ by $\alpha(\xi) = \langle \check{\alpha}, \xi \rangle$ for all $\xi \in \mathfrak{t}$. Also, we define $\langle \check{\alpha}, \check{\beta} \rangle = \langle \alpha, \beta \rangle$.

\medskip

The Weyl group $W$ is defined by $W = N(T)/T$, where $N(T) \subset SU(N)$ is the normalizer of the maximal torus. It acts on $T$ by conjugation, and induces actions on $\mathfrak{t}$ and $\mathfrak{t}^*$. The action of $W$ on $\mathfrak{t}$ preserves the inner product $\langle \ , \ \rangle$ as well as the integral lattice $\Pi = \mathrm{Ker}\exp 2\pi \subset \mathfrak{t}$, where $\exp 2\pi ( \cdot ) : \mathfrak{t} \to T$ is the exponential map. It is known that $W$ is isomorphic to the symmetric group $\mathfrak{S}_N$, which acts on $T$ and $\mathfrak{t}$ by permutation of diagonal components.

\medskip

For $i = 1, \ldots, N-1$, we define a homomorphism $\alpha_i : \mathfrak{t} \to \R$ by
$$
\alpha_i : \quad
\mathrm{diag}(\xi_1, \ldots, \xi_N) \mapsto 
\frac{\xi_i - \xi_{i+1}}{\ii},
$$
where $\xi_i \in \ii\R$ are subject to $\xi_1 + \cdots + \xi_N = 0$. These homomorphisms are called simple roots, and $\tilde{\alpha} = \alpha_1 + \cdots + \alpha_{N-1}$ is called the highest roots. For $i = 1, \ldots, N-1$, we also define the fundamental weights $\omega_i \in \mathfrak{t}^*$ by $\alpha_i = \sum_{j}A_{ij}\omega_j$, where $A_{ij} = 2\langle \alpha_i, \alpha_j \rangle/\langle \alpha_i, \alpha_i \rangle = 2\delta_{i, j} - \delta_{i, j+1} - \delta_{i+1, j}$. By definition, $\omega_i$ is characterized by  $\omega_i(\check{\alpha}_j) = \delta_{i, j}$. It holds that $$
\check{\alpha}_i =
\ii \, \mathrm{diag}(0, \cdots, 0, \overset{i}{1}, \overset{i+1}{-1}, 
0, \cdots, 0) \in \mathfrak{t},
$$
which shows $\Pi = \bigoplus_i \Z \check{\alpha}_i \subset \mathfrak{t}$. From this, we can see 
$$
\omega_i : \ 
\mathrm{diag}(\xi_1, \ldots, \xi_N) \mapsto
(\xi_1 + \cdots + \xi_i)/\ii,
$$
and $\omega_i$ form a basis of $\mathrm{Hom}(\Pi, \Z) \cong \mathrm{Hom}(T, U(1))$.

\medskip

Let $\mathfrak{C}$ and $\mathfrak{A}$ be subspaces in $\mathfrak{t}$ given by
\begin{align*}
\mathfrak{C} 
&=
\{ \xi \in \mathfrak{t} |\ 
\alpha_i(\xi) \ge 0, (i = 1, \ldots, N-1) \} \\
&=
\left\{ 
\sum_{i = 1}^{N-1} t_i \check{\omega}_i \in \mathfrak{t} \bigg|\ 
t_i \ge 0, (i = 1, \ldots, N-1) 
\right\}, \\
\mathfrak{A}
&=
\{ \xi \in \mathfrak{t} |\ 
\tilde{\alpha}(\xi) \le 1, \alpha_i(\xi) \ge 0, (i = 1, \ldots, N-1) \} \\
&=
\left\{ \sum_{i = 1}^{N-1} t_i \check{\omega}_i \in \mathfrak{t} \bigg|\ 
t_1 + \cdots + t_{N-1} \le 1, t_i \ge 0, (i = 1, \ldots, N-1) \right\}.
\end{align*}
which are called the (positive) Weyl chamber and alcove, respectively. Since the action of $W$ on $\mathfrak{t}$ is generated by the reflections with respect to the hyperplanes $H_i = \{ \xi \in \mathfrak{t} |\ \alpha_i(\xi) = 0 \}$, the Weyl chamber turns out to be a fundamental domain of $\mathfrak{t}$ with respect to the action of $W$. The alcove is also a fundamental domain of $\mathfrak{t}$ with respect to the affine Weyl group $\Pi \rtimes W$. Furthermore, $\exp 2\pi : \mathfrak{t} \to T \subset SU(N)$ induces a homeomorphism $\mathfrak{A} \cong SU(N)/SU(N)$ between the alcove and the space of conjugacy classes in $SU(N)$.


\subsubsection{Filtration of $SU(N)$\label{sec:filtration}}

\begin{defn}
	Let $N \ge 2$ be an integer.
	\begin{itemize}
		\item
		For $j = 1, \cdots, N-1$, we define an open subset $\mathfrak{A}_j$ in $\mathfrak{A}$ by
		\begin{align*}
		\mathfrak{A}_j &= 
		\{ \xi \in \mathfrak{t} |\ 
		\tilde{\alpha}(\xi) \le 1, 
		\alpha_j(\xi) > 0, \alpha_i(\xi) \ge 0 \ (i \neq j) \} \\
		&=
		\left\{ 
		\sum_{i = 1}^{N-1} t_i \check{\omega}_i \in \mathfrak{t} \bigg|\ 
		t_1 + \cdots + t_{N-1} \le 1, t_j > 0, t_i \ge 0, (i \neq j)
		\right\}.
		\end{align*}
		We also define an open subset $\mathfrak{A}_N$ in $\mathfrak{A}$ by
		\begin{align*}
		\mathfrak{A}_N &=
		\{ \xi \in \mathfrak{t} |\ 
		\tilde{\alpha}(\xi) < 1, \alpha_i(\xi) \ge 0 \ (i = 1, \cdots, N - 1 \} \\
		&=
		\left\{ 
		\sum_{i = 1}^{N-1} t_i \check{\omega}_i \in \mathfrak{t} \bigg|\ 
		t_1 + \cdots + t_{N-1} < 1, t_i \ge 0, \ (i = 1, \ldots, N-1) 
		\right\}.
		\end{align*}

		\item
		For $k = 0, \cdots, N-1$, we define an open subset $\mathfrak{A}_{\le k}$ in $\mathfrak{A}$ by
		$$
		\mathfrak{A}_{\le k} = 
		\bigcup_{1 \le j_1 < \cdots < j_{N - k} \le N }
		\mathfrak{A}_{j_1} \cap \cdots \cap \mathfrak{A}_{j_{N-k}}.
		$$
		
	\end{itemize}
\end{defn}

An element $\xi = \mathrm{diag}(\xi_1, \cdots, \xi_N) \in \mathfrak{t}$ consists of $\xi_i = \ii\lambda_i \in \ii\R$ such that $\lambda_1 + \cdots + \lambda_N = 0$. If $\xi \in \mathfrak{A}$, then $\xi_i = \ii\lambda_i$ further satisfies
$$
\lambda_1 \ge \cdots \ge \lambda_i \ge \lambda_{i+1} \ge \cdots \ge 
\lambda_N \ge \lambda_1 - 1.
$$
Thus, assuming that the $j$th inequality ``$\ge$'' in the above is strict ``$>$'', we get $\mathfrak{A}_j$ for $j = 1, \cdots, N-1$ and $\mathfrak{A}_N$
\begin{align*}
\mathfrak{A}_j
&=
\{ \ii\,\mathrm{diag}(\lambda_1, \cdots, \lambda_N) \in \mathfrak{t} |\
\lambda_1 \ge \cdots \ge \lambda_j > \lambda_{j+1} \ge 
\cdots \lambda_N \ge \lambda_1 - 1 
\}, \\
\mathfrak{A}_N
&=
\{ \ii\, \mathrm{diag}(\lambda_1, \cdots, \lambda_N) \in \mathfrak{t} |\
\lambda_1 \ge \cdots \ge \lambda_N > \lambda_1 - 1 \}.
\end{align*}
Similarly, assuming that there are $k$ non-strict inequalities ``$\ge$'', we get $\mathfrak{A}_{\le k}$. In particular, we have
\begin{align*}
\mathfrak{A}_{\le 0} &= 
\bigcap_{j = 1}^N \mathfrak{A}_j =
\{ \ii\,\mathrm{diag}(\lambda_1, \cdots, \lambda_N)  \in \mathfrak{t} |\
\lambda_1 > \cdots > \lambda_N > \lambda_1 - 1 \}, \\
\mathfrak{A}_{\le N-1} &= 
\bigcup_{j = 1}^N \mathfrak{A}_j =
\{ \ii\,\mathrm{diag}(\lambda_1, \cdots, \lambda_N)  \in \mathfrak{t} |\
\lambda_1 \ge \cdots \ge \lambda_N \ge \lambda_1 - 1 \} =
\mathfrak{A}.
\end{align*} 
There is clearly the following relation of inclusions
$$
\mathfrak{A}_{\le 0} \subset \cdots \subset
\mathfrak{A}_{\le k-1} \subset 
\mathfrak{A}_{\le k} \subset
\cdots \subset
\mathfrak{A}_{\le N-1} = \mathfrak{A}.
$$

\medskip

Let $q : SU(N) \to SU(N)/SU(N) \cong \mathfrak{A}$ be the quotient map. Using the open subsets in $\mathfrak{A}$ defined above, we make the following definition.

\begin{defn}
	Let $N \ge 2$ be an integer.
	\begin{itemize}
		\item
		For $j = 1, \cdots, N$, we define an open subset $O_j \subset SU(N)$ by $O_j = q^{-1}(\mathfrak{A}_j)$.

		\item
		For $k = 0, \cdots, N - 1$, we define an open subset $SU(N)_{\le k} \subset SU(N)$ by $SU(N)_{\le k} = q^{-1}(\mathfrak{A}_{\le k})$. We also define a closed subset $SU(N)_k$ in $SU(N)_{\le k}$ by $SU(N)_k =  SU(N)_{\le k} \backslash SU(N)_{\le k-1}$.
	\end{itemize}
\end{defn}

From the homeomorphism $SU(N)/SU(N) \cong \mathfrak{A}$, it follows that the eigenvalues of $U \in SU(N)$ are uniquely expressed as $\ee^{2\pi \ii \lambda_1(U)}, \cdots, \ee^{2\pi \ii \lambda_N(U)}$ by means of real numbers $\lambda_i(U) \in \R$ such that $\lambda_1(U) + \cdots + \lambda_N(U) = 0$ and 
$$
\lambda_1(U) \ge \cdots \ge \lambda_i(U) \ge \lambda_{i+1}(U) \ge \cdots
\ge \lambda_N(U) \ge \lambda_1(U) - 1.
$$
Then, we can express $O_j$ ($j = 1, \cdots, N-1$) and $O_N$ in $SU(N)$ as
\begin{align*}
O_j &=
\{ U \in SU(N) |\ 
\lambda_1(U) \ge \cdot\cdot \ge \lambda_j(U) > 
\lambda_{j+1}(U) \ge \cdot\cdot \ge 
\lambda_N(U) \ge \lambda_1(U) - 1 \}, \\
O_N &=
\{ U \in SU(N) |\ 
\lambda_1(U) \ge \cdots \ge \lambda_N(U) > \lambda_1(U) - 1 \},
\end{align*}
which form the same open cover $\{ O_j \}$ of $SU(N)$ as given in \cite{GawedzkiReis02,Meinrenken02}. The open sets $SU(N)_{\le k}$ have the relation of inclusions
$$
SU(N)_{\le 0} \subset \cdots \subset SU(N)_{\le k - 1} \subset
SU(N)_{\le k} \subset \cdots SU(N)_{\le N-1} = SU(N).
$$
Notice that $SU(N)_{\leq 1},\, SU(N)_1$ and $SU(N)_{\leq 0}$ have been already mentioned in \eqref{defSUN_<1}, \eqref{defSUN_1} and \eqref{defSUN_0}, respectively. Being an open subset of a manifold, $SU(N)_{\le k} \subset SU(N)$ is an open submanifold of dimension $\dim SU(N)_{\le k} = \dim SU(N) = N^2 - 1$.

\medskip

We write $\Delta_n$ for the $n$-dimensional simplex, and $\overset{\circ}{\Delta}_n \subset \Delta^n$ for its interior
\begin{align*}
\Delta_n &=
\{ (x_1, \cdots, x_{n+1}) \in \R^{n+1} |\ 
x_1 + \cdots + x_{n+1} = 1, x_i \ge 0 \ (i = 1, \cdots, n+1) \}, \\
\overset{\circ}{\Delta}_n &= 
\{ (x_1, \cdots, x_{n+1}) \in \R^{n+1} |\ 
x_1 + \cdots + x_{n+1} = 1, x_i > 0 \ (i = 1, \cdots, n+1) \}.
\end{align*}

\begin{lem} \label{lem:codimension_3}
	For $N \ge 2$, the closed subspace $SU(N)_1 \subset SU(N)_{\le 1}$ is a submanifold of codimension $3$. In particular, in the case of $N = 2$, the closed manifold consists of two points. In the case of $N > 2$, it is diffeomorphic to the disjoint union of $N$ copies of the $(N^2 - 4)$-dimensional manifold
	$$
	\overset{\circ}\Delta_{N-2} \times
	SU(N)/((U(2) \times 
	\overbrace{U(1) \times \cdots \times U(1)}^{N-2}) \cap SU(N)).
	$$
\end{lem}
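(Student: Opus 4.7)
The plan is to analyze $SU(N)_1$ through the quotient map $q : SU(N) \to SU(N)/SU(N) \cong \mathfrak{A}$: I will first identify $q(SU(N)_1) \subset \mathfrak{A}$ as the disjoint union of the interiors of the $N$ codimension-one faces of the alcove, then compute the conjugacy orbit above each such face, and finally assemble the result by using the contractibility of each face to globally trivialize the resulting fibration.

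From the eigenvalue description of $SU(N)_1$ and the definitions of $\mathfrak{A}_j$ and $\mathfrak{A}_{\le k}$, the image $q(SU(N)_1)$ consists exactly of points lying in the relative interior of precisely one of the $N$ codimension-one faces of $\mathfrak{A}$. For $j = 1, \ldots, N-1$, the $j$-th open face $F_j^\circ = \{\alpha_j = 0,\ \alpha_i > 0\ (i \ne j),\ \tilde\alpha < 1\}$ corresponds to the coincidence $\lambda_j = \lambda_{j+1}$ with all other inequalities strict; the remaining open face $F_N^\circ = \{\tilde\alpha = 1,\ \alpha_i > 0\}$ corresponds to $\lambda_N = \lambda_1 - 1$, where the identity $\ee^{2\pi \ii \lambda_N} = \ee^{2\pi \ii \lambda_1}$ forces the ``first'' and ``last'' eigenvalues to coincide as complex numbers. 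Each such open face is naturally an open $(N-2)$-simplex, diffeomorphic to $\overset{\circ}\Delta_{N-2}$.

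For any $U \in q^{-1}(F_j^\circ)$ the spectrum of $U$ has exactly one doubly-degenerate eigenvalue with a $2$-dimensional eigenspace and $N-2$ simple eigenvalues, so the stabilizer of the $SU(N)$-conjugation action on $U$ is conjugate to $H := (U(2) \times U(1)^{N-2}) \cap SU(N)$, with the $U(2)$ factor acting on the degenerate eigenspace. The orbit through $U$ is therefore diffeomorphic to $SU(N)/H$. The diagonalization map $F_j^\circ \times SU(N)/H \to q^{-1}(F_j^\circ),\ (\xi, gH) \mapsto g \exp(2\pi \xi) g^{-1}$ is then a well-defined smooth bijection, whose inverse is smooth because over $q^{-1}(F_j^\circ)$ the degenerate spectral projector varies smoothly (its rank is locally constant equal to $2$ and the other eigenvalues are simple, so all spectral projectors can be read off from small contour integrals of the resolvent). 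This produces an explicit diffeomorphism $q^{-1}(F_j^\circ) \cong \overset{\circ}\Delta_{N-2} \times SU(N)/H$, and taking the disjoint union over $j = 1, \ldots, N$ yields the stated decomposition.

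A dimension count gives $\dim H = 4 + (N-2) - 1 = N+1$, so $\dim SU(N)/H = N^2 - N - 2$ and $\dim SU(N)_1 = (N-2) + (N^2 - N - 2) = N^2 - 4$, confirming the codimension is $3$. In the degenerate case $N = 2$, the alcove is an interval, $\overset{\circ}\Delta_0$ is a single point, $H = U(2) \cap SU(2) = SU(2)$, and each orbit collapses to a point; these two points are exactly $\pm \1$. The main technical obstacle will be writing the smooth inverse of the diagonalization map cleanly enough to establish the fiber bundle structure; this reduces to smoothness of the rank-$2$ spectral projector associated with the degenerate eigenvalue, which follows from the holomorphic functional calculus applied to a small contour separating that eigenvalue from the remaining simple ones.
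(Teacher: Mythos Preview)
Your proof is correct and follows essentially the same strategy as the paper: identify $q(SU(N)_1)$ with the $N$ open codimension-one faces of the alcove, compute the common stabilizer $(U(2)\times U(1)^{N-2})\cap SU(N)$ of the adjoint action over each face, and parametrize each piece as $\overset{\circ}{\Delta}_{N-2}\times SU(N)/H$. The only real difference is in how the embedding is verified---the paper checks that the parametrization is an immersion by describing tangent vectors via the Lie algebra and then invokes the homeomorphism-onto-image criterion, whereas you argue smoothness of the inverse via holomorphic functional calculus on the spectral projectors; both are standard and your route is arguably more concrete.
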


\begin{proof}
	First of all, we identify $SU(N)_1$. We can express $\mathfrak{A}_{\le 1}$ as
	\begin{align*}
	&
	\bigcup_{j=1}^N 
	\mathfrak{A}_1 \cap \cdots \cap 
	\mathfrak{A}_{j-1} \cap \mathfrak{A}_{j+1} \cap 
	\cdots \cap \mathfrak{A}_N \\
	& =
	\bigcup_{j = 1}^N
	\{ \xi \in \mathfrak{t} |\
	\tilde{\alpha}(\xi) < 1, \alpha_j(\xi) \ge 0, \alpha_i(\xi) > 0 \
	(i \neq j) \} \\
	& =
	\bigcup_{j = 1}^N
	\{ \ii\,\mathrm{diag}(\lambda_1, \cdot\cdot, \lambda_N) \in \mathfrak{t} |\
	\lambda_1 > \cdots > \lambda_j \ge 
	\lambda_{j+1} > \cdots > \lambda_N > \lambda_1 - 1 \}.
	\end{align*}
	Therefore $\mathfrak{A}_{\le 1} \backslash \mathfrak{A}_{\le 0}$ is expressed as
	\begin{align*}
	&
	\bigcup_{j = 1}^N
	\{ \xi \in \mathfrak{t} |\
	\tilde{\alpha}(\xi) < 1, \alpha_j(\xi) = 0, \alpha_i(\xi) > 0 \
	(i \neq j) \} \\
	&=
	\bigcup_{j = 1}^N
	\{ \ii\,\mathrm{diag}(\lambda_1, \cdot\cdot, \lambda_N) \in \mathfrak{t} |\
	\lambda_1 > \cdot\cdot > \lambda_j = \lambda_{j+1} > 
	\cdot\cdot > \lambda_N > \lambda_1 - 1 \}.
	\end{align*}
	From this expression, if $N = 2$, then $\mathfrak{A}_{\le 1} \backslash \mathfrak{A}_{\le 0}$ is the two points that form the boundary of the $\mathfrak{A} \cong \Delta_{1}$. If $N > 2$, then $\mathfrak{A}_{\le 1} \backslash \mathfrak{A}_{\le 0}$ is the open $(N-2)$-dimensional face of the alcove $\mathfrak{A} \cong \Delta_{N-1}$, which is the disjoint union of $N$ copies of $\overset{\circ}{\Delta}_{N-2}$. Now, by definition, $SU(N)_1$ is the orbit of $\exp 2\pi (\mathfrak{A}_{\le 1} \backslash \mathfrak{A}_{\le 0}) \subset T$ under the adjoint action of $SU(N)$. It is known (see Section 3 in \cite{GawedzkiReis02} for example) that, for each $\xi \in \mathfrak{A}$, the stabilizer group (isotropy group) of $\xi \in \mathfrak{t}$ with respect to the adjoint action of $SU(N)$ agrees with that of $\exp 2\pi \xi \in T$. The stabilizer $SU(N)_\xi$ of $\xi = \ii\,\mathrm{diag}(\lambda_1, \cdots, \lambda_N) \in \mathfrak{A}_{\le 1} \backslash \mathfrak{A}_{\le 0}$ such that
	$$
	\lambda_1 > \cdots > \lambda_j = \lambda_{j+1} > 
	\cdots \lambda_N > \lambda_1 - 1
	$$
	is identified with 
	$$
	(U(1)^{j-1} \times U(2) \times U(1)^{N-1-j}) \cap SU(N),
	$$
	independent of $\xi$, provided that $\xi$ stays in the connected component. This concludes that the closed set $SU(N)_1 \subset SU(N)_{\le 1}$ is identified with the two point set in the case of $N = 2$, and, otherwise, with the disjoint union of $N$ copies of the manifold
	$$
	\overset{\circ}\Delta_{N-2} \times
	SU(N)/((U(2) \times 
	\overbrace{U(1) \times \cdots \times U(1)}^{N-2}) \cap SU(N)),
	$$
	whose dimension is computed as
	$$
	(N-2) + (N^2 - 1) - (2^2 + N - 2 - 1) = N^2 - 4.
	$$
	Let $f : S \to SU(N)_{\le 1}$ be the inclusion of
	$$
	S =
	\mathfrak{A}_{\le 1} \backslash \mathfrak{A}_{\le 0} \times
	SU(N)/((U(2) \times 
	\overbrace{U(1) \times \cdots \times U(1)}^{N-2}) \cap SU(N))
	$$
	onto $SU(N)_1$. This $f$ is a smooth map, since it can be constructed from the exponential map $\exp 2\pi : \mathfrak{A}_{\le 1} \backslash \mathfrak{A}_{\le 0} \to T \subset SU(N)$ and the adjoint action of $SU(N)$. One can describe a tangent vector on $S$ by means of the Lie algebra of $SU(N)$. Its image under the differential of $f$ is also described by means of the Lie algebra of $SU(N)$. This description helps us to see that $f$ is an immersion. It is clear that $f$ induces a homeomorphism from $S$ to its image $f(S) = SU(N)_1$ with the topology induced from $SU(N)_{\le 1}$. Therefore $f : S \to SU(N)_{\le 1}$ is an embedding, and its image $f(S) = SU(N)_1$ is a submanifold of $SU(N)_{\le 1}$. The codimension is $(N^2 - 1) - (N^2 - 4) = 3$.
\end{proof}

\begin{lem} \label{lem:codimension_higher}
	For $N \ge 3$ and $k = 2, \cdots, N-1$, the closed subspace $SU(N)_k \subset SU(N)_{\le k}$ is the disjoint union of submanifolds of codimension larger than $3$.
\end{lem}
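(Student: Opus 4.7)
The plan is to follow exactly the template of the proof of Lemma~\ref{lem:codimension_3} and upgrade it to arbitrary $k$. First I would unpack the definitions to identify $\mathfrak{A}_{\le k} \setminus \mathfrak{A}_{\le k-1}$ with the set of $\xi = \ii\,\mathrm{diag}(\lambda_1, \ldots, \lambda_N) \in \mathfrak{A}$ for which exactly $k$ of the $N$ cyclic inequalities $\lambda_1 \ge \lambda_2 \ge \cdots \ge \lambda_N \ge \lambda_1 - 1$ are equalities. This decomposes as a disjoint union, indexed by subsets $I \subset \{1, \ldots, N\}$ of cardinality $k$ (recording which inequalities become equalities), of relatively open faces $F_I \subset \mathfrak{A}$ of dimension $r - 1$, where $r = N - k$.

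Next, for each such $I$ I would determine the stabilizer of a point $\xi$ in the interior of $F_I$ under the adjoint action of $SU(N)$. The pattern $I$ groups the indices $\{1, \ldots, N\}$ into $r$ clusters of sizes $n_1, \ldots, n_r$, with $\sum_i n_i = N$ and $\sum_i(n_i - 1) = k$, by connecting $j$ and $j+1$ whenever the corresponding inequality belongs to $I$; this includes the wrap-around inequality $\lambda_N \ge \lambda_1 - 1$, which still identifies two eigenvalues of $U$ because $\ee^{2\pi \ii (\lambda_1 - 1)} = \ee^{2\pi \ii \lambda_1}$. The stabilizer is then $S_I = \bigl(U(n_1) \times \cdots \times U(n_r)\bigr) \cap SU(N)$, independent of $\xi$ inside $F_I$. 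Exactly as in Lemma~\ref{lem:codimension_3}, composing $\exp 2\pi$ with the adjoint action provides a smooth embedding $f_I : F_I \times SU(N)/S_I \hookrightarrow SU(N)_{\le k}$ whose image is one connected component of $SU(N)_k$, and the whole of $SU(N)_k$ is the disjoint union of these images over all admissible $I$.

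The final step is the codimension count. Each stratum has dimension $(r - 1) + (N^2 - \sum_i n_i^2)$, hence codimension $\sum_i n_i^2 - r$ in $SU(N)$. Expanding $n_i = 1 + (n_i - 1)$ and using $\sum_i(n_i - 1) = k$ yields
\[
\sum_i n_i^2 - r \;=\; 2k + \sum_i (n_i - 1)^2 \;\ge\; 2k,
\]
which for $k \ge 2$ is at least $4 > 3$, as required (and the identity also recovers codimension $3$ in the $k = 1$ case of Lemma~\ref{lem:codimension_3}). The main subtlety I would have to dispatch is the correct bookkeeping of clusters when $N \in I$: the wrap-around equality must be treated on the same footing as an ordinary eigenvalue coincidence, even though at the level of the parameters $\lambda_i \in \R$ it involves a $-1$ shift. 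Once that combinatorial accounting is settled, the smoothness of $f_I$ through $\exp 2\pi$, the immersion check via Lie-algebra tangent vectors, and the disjointness of the strata all transfer verbatim from Lemma~\ref{lem:codimension_3}, so the real work is concentrated in correctly identifying the stratification and carrying out the dimension count above.
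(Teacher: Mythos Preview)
Your proposal is correct and follows essentially the same route as the paper: identify $\mathfrak{A}_{\le k}\setminus\mathfrak{A}_{\le k-1}$ as the strata with exactly $k$ equalities among the $N$ cyclic inequalities, compute the stabilizer as a block-diagonal $(U(n_1)\times\cdots\times U(n_r))\cap SU(N)$ with $r=N-k$, embed each stratum via $\exp 2\pi$ and the adjoint action, and then bound the codimension $\sum_i n_i^2 - r$.

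The only noteworthy difference is the final inequality. The paper bounds $\sum_i n_i^2 - r$ from below via Cauchy--Schwarz, obtaining a strict bound $>4$. Your identity $\sum_i n_i^2 - r = 2k + \sum_i (n_i-1)^2 \ge 2k$ is more elementary and immediately gives $\ge 4$ for $k\ge 2$, which is exactly what ``codimension larger than $3$'' requires; it also transparently recovers the codimension-$3$ case of Lemma~\ref{lem:codimension_3} at $k=1$. Both arguments are valid, and yours is arguably cleaner for this purpose.
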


\begin{proof}
	The argument of the proof is a straight generalization of that of Lemma~\ref{lem:codimension_3}. In the case that $1 \le k \le N-2$, the closed set $\mathfrak{A}_{\le k} \backslash \mathfrak{A}_{\le k-1}$ is the disjoint union of $\binom{N}{k-1}$ copies of the $(N-k-1)$-dimensional open simplex $\overset{\circ}{\Delta}_{N-K-1}$. In the case that $k = N - 1$, the closed set $\mathfrak{A}_{\le N-1} \backslash \mathfrak{A}_{\le N - 2}$ consists of $N$ points. Recall that an element $\xi = \ii\,\mathrm{diag}(\lambda_1, \cdots, \lambda_N) \in \mathfrak{A}$ is such that
	$$
	\lambda_1 \ge \lambda_2 \ge \cdots \ge \lambda_N \ge \lambda_1 - 1.
	$$
	For $k = 1, \cdots, N-1$, we have $\xi \in \mathfrak{A}_{\le k} \backslash \mathfrak{A}_{\le k-1}$ if and only if there are $k$ equalities ``$=$'' and $(N-k)$ strict inequalities ``$>$'' among $N$ inequalities ``$\ge$'' in the above. The closed subspace $SU(N)_k$ in $SU(N)_{\le k}$ is the orbit of $\exp 2\pi (\mathfrak{A}_{\le k} \backslash \mathfrak{A}_{\le k-1})$ under the adjoint action of $SU(N)$. For $\xi \in \mathfrak{A}_{\le k} \backslash \mathfrak{A}_{\le k-1}$, the stabilizer of $\exp 2\pi \xi \in T$ agrees with that of $\xi \in \mathfrak{t}$, and is isomorphic to a group of the form
	$$
	(U(M_1) \times \cdots \times U(M_{N-k})) \cap SU(N),
	$$
	where $M_1, \cdots, M_{N-k}$ are positive integers such that $M_1 + \cdots + M_{N-k} = N$. Therefore $SU(N)_k \subset SU(N)_{\le k}$ can be identified with the disjoint union of manifolds of dimension
	$$
	(N - k - 1) + (N^2- M_1^2 - \cdots - M_{N-k}^2)
	= (N^2 - 1) - (M_1^2 + \cdots + M_{N-k}^2 - N + k).
	$$
	In the same way as in the proof of Lemma~\ref{lem:codimension_3}, we can show that $SU(N)_k \subset SU(N)_{\le k}$ is the disjoint union of the submanifolds. Their codimensions in $SU(N)_{\le k}$ are $M_1^2 + \cdots + M_{N-k}^2 - N + k$. Once this codimension is shown to be larger than $3$, the proof of the lemma will be completed. This claim about the codimension can be shown as follows: To suppress notations, let us put $\ell = N - k$. Then $2 \le k \le N - 1$ if and only if $1 \le \ell \le N - 2$. Applying the Cauchy-Schwartz inequality and $\ell \le N - 2$, we get
	\begin{align*}
	M_1^2 + \cdots + M_\ell^2 - \ell
	&= (M_1^2 + \cdots + M_\ell^2)
	(\overbrace{1^2 + \cdots + 1^2}^\ell) \cdot \frac{1}{\ell} - \ell \\
	&\ge
	(M_1 \cdot 1 + \cdots + M_\ell \cdot 1)^2 \cdot \frac{1}{\ell} - \ell \\
	&=
	\frac{N^2}{\ell} - \frac{\ell^2}{\ell}
	= \frac{(N - \ell)(N + \ell)}{\ell}
	= (N - \ell)\bigg( \frac{N}{\ell} + 1 \bigg) \\
	&\ge
	2\bigg( \frac{\ell + 2}{\ell} + 1 \bigg)
	= 2 \bigg( \frac{2}{\ell} + 2 \bigg) \\
	&> 2 \cdot (0 + 2) = 4,
	\end{align*}
	as claimed.
\end{proof}


\subsubsection{Proof of Proposition~\ref{prop:ensure_general_position}}

As is known, any continuous map $U : X \to SU(N)$ is homotopic to a smooth map $U^{(N-1)} : X \to SU(N) = SU(N)_{\le N-1}$. For $j = 1, \cdots, N-1$, suppose that we have a smooth map $U^{(N-j)} : X \to SU(N)_{\le N - j}$. Since $SU(N)_{\le N-j}$ and its submanifolds constituting $SU(N)_{N-j}$ have no boundary, we can apply the transversality homotopy theorem \cite{GuilleminPolack10} to $U^{(N-j)}$, so that $U^{(N-j)}$ is homotopic to a smooth map $U^{(N-j-1)} : X \to SU(N)_{\le N -j}$ which is transverse to each component of $SU(N)_{N-j} \subset SU(N)_{\le N-j}$. For $j = 1, \cdots, N-2$, the sum of the dimension of $X$ and the dimension of each manifold constituting $SU(N)_{N-j}$ is less than the dimension of $SU(N)$ by Lemma~\ref{lem:codimension_higher}. Thus, in this case, the transversality means that the image $U^{(N-j-1)}(X)$ has no intersection with $SU(N)_{N - j}$, so that we can regard $U^{(N-j-1)}$ as a smooth map $U^{(N-j-1)} : X \to SU(N)_{\le N - j - 1}$. As a result, an induction shows that $U : X \to SU(N)$ is homotopic to a smooth map $U^{(0)} : X \to SU(N)$ such that the image $U^{(0)}(X)$ of $X$ is contained in $SU(N)_{\le 1}$ and intersects with $SU(N)_1$ transversally. Now, the dimension of $X$ and that of $SU(N)_1$ add up to the dimension of $SU(N)$ by Lemma~\ref{lem:codimension_3}. Then a consequence of the transversality is that the inverse image $(U^{(0)})^{-1}(SU(N)_1) \subset X$ is a submanifold of dimension $0$. Since $X$ is compact, the inverse image consists of a finite number of points. This completes the proof that $U$ is homotopic to a smooth map $U' = U^{(0)}$ with the required property.

Note that the assumption $\mathrm{dim}X = 3$ is necessary for $U'$ to have only single eigenvalue crossings of $U$ or less. When the dimension of $X$ is larger than $3$, there are generally multiple crossings of eigenvalues.

\begin{rem}\label{rem:deformation_boundary}
	We can generalize Proposition~\ref{prop:ensure_general_position} to the case where $X$ has a non-empty boundary $\partial X$. In this case, we additionally assume that $U|_{\partial X}$ is smooth and $U(\partial X) \subset SU(N)_{\le 1}$. Then $U$ is homotopic to $U' : X \to SU(N)$ such that
	\begin{itemize}
		\item
		the image $U'(X)$ of $X$ under $U'$ is contained in $SU(N)_{\le 1} \subset SU(N)$, and
		
		\item
		the inverse image ${U'}^{-1}(SU(N)_1)$ of $SU(N)_1$ under $U'$ consists of a finite number of points, and
		
		\item
		$U'|_{\partial X} = U|_{\partial X}$,
	\end{itemize}
	under a homotopy $\tilde{U} : X \times [0, 1] \to SU(N)$ such that $\tilde{U}|_{\partial X \times [0, 1]} = U$. 
\end{rem}

\subsection{Local formula\label{sec:proof_mainthm}}

\subsubsection{Basic gerbe data}
Recall the open sets $O_1, \cdots, O_N$ of $SU(N)$. They constitute an open cover $\{ O_j \}$ of $SU(N)$. For $j, k$ such that $1 \le j < k < N$, the intersection $O_j \cap O_k$ consists of $U \in SU(N)$ whose eigenvalues are expressed as $\ee^{2\pi \ii \lambda_1(U)}, \cdots, \ee^{2\pi \ii \lambda_N(U)}$ in terms of $\lambda_i \in \R$ satisfying $\lambda_1(U) + \cdots + \lambda_N(U) = 0$ and 
$$
\lambda_1(U) \ge \cdot\cdot \ge \lambda_j(U) >
\lambda_{j+1}(U) \ge \cdot\cdot \ge \lambda_k(U) >
\lambda_{k+1} \ge \cdot\cdot \lambda_N(U) \ge \lambda_1(U) - 1.
$$
For $j = 1, \cdots, N-1$, the intersection $O_j \cap O_N$ consists of $U \in SU(N)$ such that
$$
\lambda_1(U) \ge \cdots \ge \lambda_j(U) >
\lambda_{j+1}(U) \ge \cdots \ge \lambda_N(U) > \lambda_1(U) - 1.
$$
Thus, if $1 \le j < k \le N$, then the eigenvalues $\ee^{2\pi \ii \lambda_{j+1}(U)}, \cdots, \ee^{2\pi \ii \lambda_{k}(U)}$ of $U \in O_j \cap O_k$ are separated by the remaining ones by ``gaps'', so that their eigenvectors constitute a complex vector bundle $E_{jk} \to O_j \cap O_k$ of rank $(k - j)$. We define a complex line bundle $L_{jk} \to O_j \cap O_k$ to be the determinant line bundle (the top exterior product): $L_{jk} = \det E_{jk}$. In the following we will apply the convention $L_{jk} = L_{kj}^*$ if $j > k$.

\begin{prop}[\cite{GawedzkiReis02,Meinrenken02}]
	Let $N \ge 2$ be an integer. There exist:
	\begin{itemize}
		\item
		differential $2$-forms $B_j \in \Omega(O_j)$ such that $dB_j = H$ on $O_j$, where $H \in \Omega^3(SU(N))$ is the $3$-form
		$$
		H = \frac{1}{24\pi^2} \mathrm{tr}(g^{-1}dg),
		$$
		which represents the integral image of $W_3 \in H^3(SU(N); \Z)$; and

		\item
		connections $A_{jk}$ on $L_{jk}$ such that $B_k - B_j = c_1(A_{jk})$ on $O_j \cap O_k$, where $c_1(A_{jk})$ is the first Chern form associated to $A_{jk}$, which represents the integral image of the first Chern class $c_1(L_{jk}) \in H^2(O_j \cap O_k; \Z)$.
	\end{itemize}
\end{prop}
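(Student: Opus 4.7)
The plan is to exploit the spectral structure available on each piece $O_j$ of the filtration. By construction, for $U \in O_j$ the eigenvalues satisfy a strict gap $\lambda_j(U) > \lambda_{j+1}(U)$, so there is a smooth spectral projector $P_j(U) \in M_N(\C)$ onto the span of the eigenvectors for the first $j$ eigenvalues. These projectors are the main building blocks for both the $2$-forms $B_j$ and the connections $A_{jk}$, and all subsequent constructions reduce to manipulations with them and the Maurer-Cartan form $U^{-1}\dd U$.

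For the primitive $B_j$ on $O_j$, I would use a branch of the logarithm adapted to the gap at position $j$. Since $O_j$ avoids the alcove wall where $\lambda_j - \lambda_{j+1}$ would wrap around, there exists a smooth traceless section $\xi_j : O_j \to \mathfrak{su}(N)$ with $\exp(2\pi \ii \xi_j) = U$, whose eigenvalues are the $\lambda_i(U)$ shifted so that all values lie in a single interval of length one with the gap positioned at $j$. One then sets, up to an explicit exact correction,
\[
B_j = \frac{1}{4\pi^2}\, \tr\bigl(\xi_j\, (U^{-1}\dd U)^2\bigr),
\]
and verifies $\dd B_j = H$ by direct computation using the Maurer-Cartan equation $\dd(U^{-1}\dd U) = -(U^{-1}\dd U)^2$ together with the cyclic invariance of the trace.

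For the connection $A_{jk}$ on $L_{jk} = \det E_{jk}$, recall that $E_{jk}$ has fiber at $U$ equal to the span of the eigenvectors for $\lambda_{j+1}(U), \ldots, \lambda_k(U)$. This bundle inherits a natural Berry connection from the trivial bundle $O_j \cap O_k \times \C^N$ via the projector $P_{jk}(U) = P_j(U) - P_k(U)$ onto $E_{jk}$, and the induced connection on the determinant gives $A_{jk}$ on $L_{jk}$, with first Chern form
\[
c_1(A_{jk}) = \frac{\ii}{2\pi}\, \tr\bigl(P_{jk}\, \dd P_{jk} \wedge \dd P_{jk}\bigr).
\]
The identity $B_k - B_j = c_1(A_{jk})$ then follows by noting that the two branches $\xi_j$ and $\xi_k$ differ precisely by an integer shift supported on the intermediate block (essentially $P_{jk}$ times a constant), and expanding the resulting expression for $B_k - B_j$ into the Berry curvature above.

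The main obstacle is this last verification. Producing the closed formula for $B_j$ and proving $\dd B_j = H$ requires careful bookkeeping near the alcove walls, and the identity $B_k - B_j = c_1(A_{jk})$ is a curvature computation which must simultaneously respect the normalization of $H$ (the factor $(24\pi^2)^{-1}$) and the integrality of $c_1$. For a fully rigorous construction of the whole gerbe data, including the cocycle condition for $L_{jk}$ on triple intersections $O_j \cap O_k \cap O_\ell$, one would rely on the explicit approaches of \cite{GawedzkiReis02,Meinrenken02}, both of which ultimately proceed from the spectral decomposition outlined above.
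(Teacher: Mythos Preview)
The paper does not prove this proposition: immediately after the statement it writes ``The proposition above follows from the constructions in \cite{GawedzkiReis02,Meinrenken02}: We will not enter into its detail here.'' Your proposal is therefore not competing with any argument in the paper; it is a sketch of what those references do.

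Your outline is faithful to the Gaw\k{e}dzki--Reis/Meinrenken construction: the branch of the logarithm $\xi_j$ on $O_j$ (equivalently, the contraction of $O_j$ to the vertex $\check{\omega}_j$ of the alcove), the resulting primitive $B_j$ for $H$, and the Berry connection on $\det E_{jk}$ are precisely the ingredients used there. Two minor caveats. First, the displayed expression $B_j = (4\pi^2)^{-1}\tr\bigl(\xi_j\,(U^{-1}\dd U)^2\bigr)$ is not literally a primitive of $H$, because $\dd\xi_j \neq 0$; as you already flag, an exact correction is required, and in the references $B_j$ is obtained by integrating $H$ along the homotopy $s \mapsto \exp(2\pi\ii s\,\xi_j)$ rather than by guessing a closed formula. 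Second, the identity $B_k - B_j = c_1(A_{jk})$ uses more than the shift $\xi_k - \xi_j \propto P_{jk}$: one needs the transgression formula relating the Cartan $3$-form along such a shift to the curvature of the associated determinant line bundle. Both points are exactly the computations carried out in \cite{GawedzkiReis02,Meinrenken02}, so your concluding deferral to those references is appropriate and matches what the paper itself does.
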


The proposition above follows from the constructions in \cite{GawedzkiReis02,Meinrenken02}: We will not enter into its detail here, but the line bundles $L_{jk}$ constitute a part of the data of the basic gerbe on $SU(N)$. The differential $2$-forms $B_j$ and the connections $A_{jk}$ are the data of a connection on the basic gerbe.


The line bundle $\mathcal L_a \to \partial N_a$ and $\Ch(X_a;j)$ from Lemma~\ref{def:Ch_X} admit an expression in terms of this data. For convenience, put $\bar{j} = j - 1$ for $j = 2, \cdots, N$ and $\bar{1} = N$. We find
$$
U(\partial N_a) 
\subset O_{\bar{j}} \cap O_j,
$$
the line bundle $\mathcal{L}_a \to N_a$ is the pull-back of $L_{\bar{j}j} \to O_{\bar{j}} \cap O_j$
$$
\mathcal{L}_a = U|_{\partial N_a}^* L_{\bar{j}j}.
$$
and 
$$
\Ch(X_a; j) = \int_{\partial N_a} U^*c_1(A_{\bar{j}j}).
$$
In the definition above, $\partial N_a$ inherits an orientation from $X$. It is clear that $\Ch(X_a; j)$ is independent of the choice of $N_a$. Moreover, if $X_a = \{x\}$ then $N_a = D_x$ and the latter expressions naturally apply to $\mathcal L_x \to \partial D_x$ and $\Ch(x;j)$ from Definition~\ref{def:Ch_x}. Finally, as already pointed out in Remark~\ref{rk:submanifolds}, if $N_a \cong X_a \times [-1,1]$ then
\begin{align*}
\Ch(X_a; j)
&= - \int_{X_a} U|_{X_a \times \{ 1 \}}^*c_1(\mathcal{L}_{\bar{j}j})
+ \int_{X_a} U|_{X_a \times \{ -1 \}}^*c_1(\mathcal{L}_{\bar{j}j}).
\end{align*}

Similarly, for $U^{-1}(\1) = \bigsqcup_b Y_b$, we can find a closed tubular neighborhood $N_b$ of each $Y_b$ such that: the eigenvalues of $U(y)$ are distinct for any $y \in N_b \backslash Y_b$; and $N_b \cap N_{b'} = \emptyset$ whenever $b \neq b'$. However, the identity matrix $U = \1 = \mathrm{diag}(1, \cdots, 1) \in SU(N)_{N-1}$ reads
$$
\overbrace{\lambda_1(\1)}^{0} = \cdots = 
\overbrace{\lambda_N(\1)}^{0} > 
\overbrace{\lambda_1(\1)}^{-1}.
$$
Therefore $\1 \in O_N$ and $\1 \notin O_j$ for $j\neq N$. Moreover
$
U(\partial N_b) \subset O_N \cap O_j
$
so that
$
\mathcal{L}_b = U|_{\partial N_b}^* L_{Nj}
$
and 
$$
\int_{\partial N_b} U^*c_1(A_{Nj})= -\sum_{\ell=j+1}^{N} \Ch(Y_b;j)
$$
which vanishes for $j=N$.

\subsubsection{Proof of the local formula}

The proof of Theorem~\ref{thm:main} is analogous to the one Theorem~\ref{thm:full_degeneracy_at_1} if we set $U^{-1}(\1) = \emptyset$ and $\dim X_a=0$ for all $X_a \subset Cr_j$. Thus we focus on Theorem~\ref{thm:full_degeneracy_at_1}. 

We write $X'$ for the closure of the complement of $\bigsqcup_{a} N_a \sqcup \bigsqcup_b N_b \subset X$, where $N_a$ and $N_b$ are the disjoint closed tubular neighborhoods used in the definitions of $\Ch(X_a; j)$ and $\Ch(Y_b; j)$. The boundary of $X'$ is the disjoint union of $\partial N_a$ and $\partial N_b$ with the opposite orientation. Note that $U(N_a) \subset O_{\bar{j}}$, $U(N_b) \subset O_N$ and $U(X') \subset O_j$. Hence the image $U(X)$ of $X$ under $U$ is covered by $O_{\bar{j}}$, $O_N$ and $O_j$. Now, we use Stokes' theorem to get
\begin{align*}
W_3(U) &= \int_X U^*H 
= \sum_{a} \int_{N_a} U^*H +  \sum_b \int_{N_b} U^* H + \int_{X'} U^*H \\
&= \sum_{a} \int_{N_a} U^* dB_{\bar{j}} + \sum_b\int_{N_b} U^*dB_N 
+ \int_{X'} U^*dB_j \\
&= \sum_{a} \int_{\partial N_a} U^*B_{\bar{j}}
+ \sum_b \int_{\partial N_b} U^*B_N 
- \sum_{a} \int_{\partial N_a} U^*B_j
- \sum_b \int_{\partial N_b} U^*B_j \\
&= - \sum_{a} \int_{\partial N_a} U^*c_1(A_{\bar{j}j})
- \sum_b \int_{\partial N_b} U^*c_1(A_{Nj}),
\end{align*}
which leads to the formulae in the theorem.

\begin{rem}
For $X_a \subset \Cr_j(U)$  the local Chern number is defined for the $j$th eigenvector line bundle but we could in principle consider the others, such as the determinant line bundle of the rank $2$ vector bundle whose fiber at $y \in \partial N_a$ is spanned by the eigenvectors with eigenvalues $\ee^{2\pi \ii \lambda_{j-1}(U(y))}$ and $\ee^{2\pi \ii \lambda_j(U(y))}$. However, these eigenvalues remain distinct at any $y \in N_a$, including $X_a$, so that the line bundle whose fiber at $y \in \partial N_a$  is associated to $\ee^{2\pi \ii \lambda_{j-1}(U(y))}$ is trivial. Therefore the Chern number of the determinant line bundle agrees with  $\Ch(X_a;j)$ from Lemma~\ref{def:Ch_X}. Put differently, the $j$th eigenvector is essential to the local Chern number of the $j$th eigenvalue crossing and the others are not.

On the other hand such simplification does not occur for $Y_b \subset U^{-1}(\1)$ because $1 \notin O_j$ for $j=1,\ldots, N-1$, so that in general one has to consider all eigenvectors between $j+1$ and $N$ for the $j$th crossing.
\end{rem}

\subsection{Floquet map classification\label{sec:proof_floquet}}

In this section we prove Theorem~\ref{thm:Floquet_maps} that contains three equivalent statements. We start by proving the equivalence between 2 and 3, that is a direct consequence of Proposition~\ref{prop:rel_I_C}, and then show the equivalence between 1 and 3.

\subsubsection{Proof of Proposition~\ref{prop:rel_I_C}}

By Lemma~\ref{label_ev} we label the eigenvalues of $V=\Sigma \times \{1\} \to SU(N)_{\leq 0}$ by $\lambda_1 > \lambda_2 > \ldots > \lambda_N > \lambda_1 -1$ with $\lambda_1 + \ldots + \lambda_N =0$. Then it exists $\alpha_i : \Sigma \rightarrow \mathbb R$ for $i=1,\ldots,N$ continuous (or even smooth) such that $\lambda_i > \alpha_i > \lambda_{i+1}$ for $1\leq i \leq N-1$ and $\lambda_N +1 > \alpha_N > \lambda_1$. We define the complex logarithm with branch cut $\ee^{2\pi \ii \alpha}$ for $\alpha \in \mathbb R$ by 
\begin{equation}\label{deflog}
\log_\alpha(\ee^{2 \pi \ii \phi}) = 2\pi \ii \phi \qquad \mathrm{if}  \qquad \alpha-1 < \phi < \alpha.
\end{equation}
We can then define the effective Hamiltonian $H_{\alpha_j} = \tfrac{1}{2\pi \ii} \log_{\alpha_j}(V) : \Sigma \rightarrow M_N(\mathbb C)$ and 
\begin{equation*}
U_{\alpha_j}(k,t) = \left\lbrace \begin{array}{ll}
U(k,t), & 0\leq t\leq 1 \\
\ee^{2\pi \ii (2-t) H_{\alpha_j(k)}(k)}, & 1\leq t\leq 2.
\end{array} \right.
\end{equation*}
This relative evolution is periodic in time. The proof of Proposition~\ref{prop:rel_I_C} follows from the following Lemma.
\begin{lem}
	For any $j,p=1,\ldots,N$ one has
	\begin{equation}\label{W3_relative}
	W_3(U_{\alpha_j}) = \left\lbrace \begin{array}{ll} \mathcal I(U;p), &  p=j\\
	 \mathcal I(U;p) + \sum_{\ell=p+1}^j C(V,\ell), & j<p\cr \mathcal I(U;p) - \sum_{\ell=j+1}^p C(V,\ell), & j>p.\end{array}\right.
	\end{equation}
\end{lem}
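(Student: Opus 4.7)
The plan is to apply Theorem~\ref{thm:full_degeneracy_at_1} to the periodic extension $U_{\alpha_j}:\Sigma\times S^1\to U(N)$ (with $S^1=[0,2]/(0\sim2)$) and to track how its Chern-type contributions compare with those defining $\mathcal I(U;p)$. Since $W_3$ is insensitive to the scalar factor $\det U_{\alpha_j}$, the $SU(N)$ local formula applies verbatim. The crucial observation is that the one-sided collar of $\Sigma\times\{0\}\subset\Sigma\times[0,1]$ appearing in $\mathcal I(U;p)$ becomes a two-sided fully degenerate locus $\Sigma\times\{0\}\subset\Sigma\times S^1$, and the extra ``backward'' side carries all the $V$-dependence of the formula.

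I would begin by locating the eigenvalue crossings. On $\Sigma\times[0,1]$ nothing changes, so $\Cr_p(U_{\alpha_j})\cap(\Sigma\times[0,1])=\Cr_p(U)$. On $\Sigma\times(1,2)$ the eigenvalues of $U_{\alpha_j}(\cdot,t)=\ee^{2\pi\ii(2-t)H_{\alpha_j}}$ are $\ee^{2\pi\ii(2-t)\mu_i}$ with $\mu_i$ the eigenvalues of $H_{\alpha_j}$; because $\alpha_j$ lies strictly between $\lambda_j$ and $\lambda_{j+1}$, all differences $|\mu_i-\mu_{i'}|$ are strictly less than $1$, so $(2-t)(\mu_i-\mu_{i'})\in(-1,1)\setminus\{0\}$ for $t\in(1,2)$ and no new crossings appear. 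Hence $U_{\alpha_j}^{-1}(\1)=\Sigma\times\{0\}$ is the only fully degenerate locus. Theorem~\ref{thm:full_degeneracy_at_1} with the chosen $p$ then yields
\begin{equation*}
W_3(U_{\alpha_j}) = -\sum_a\Ch(X_a;p) + \sum_{\ell=p+1}^N\Ch\bigl(\Sigma\times\{0\};\ell\bigr)_{\mathrm{per}},
\end{equation*}
where the subscript emphasises that the tubular neighborhood has two boundary components, at $t=\delta$ and at $t=2-\delta$. The forward component reproduces by definition the Floquet Chern number $\Ch(\Sigma\times\{0\};\ell)$ appearing in $\mathcal I(U;p)$, while on the backward component $U_{\alpha_j}$ shares its eigenvectors with $V$, and $H_{\alpha_j}$'s eigenvalue on the $m$-th eigenvector of $V$ equals $\lambda_m-1$ for $m\leq j$ and $\lambda_m$ for $m>j$.

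The heart of the proof is therefore to identify the backward contribution as a specific combination of the $C(V;\ell)$ and to show that the three stated cases arise from how the index range $\ell=p+1,\ldots,N$ intersects the branch-cut-induced partition $\{1,\ldots,j\}\sqcup\{j+1,\ldots,N\}$ of the bands. When $p=j$ the summation range aligns exactly with the ``unshifted'' group $m>j$ and a telescoping cancellation with the forward Floquet part produces the equality $W_3(U_{\alpha_j})=\mathcal I(U;j)$; for $p<j$ the extra ``shifted'' indices contribute $+\sum_{\ell=p+1}^{j}C(V;\ell)$, and for $p>j$ the missing indices contribute $-\sum_{\ell=j+1}^{p}C(V;\ell)$. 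The main technical obstacle I anticipate is precisely this identification of ``the $\ell$-th eigenvector at $t=2-\delta$'' with a definite eigenvector of $V$: because $U_{\alpha_j}$ is $U(N)$-valued rather than $SU(N)$-valued on the extension, one must reconcile the principal-value ordering intrinsic to the tubular neighborhood with the propagation by continuity from the $V$-ordering at $t=1$, and then check the boundary-orientation signs on the two components of $\partial N_0$ so that the forward integral reconstructs $\mathcal I(U;p)$ exactly. Once this bookkeeping is in place, reading off the three cases of \eqref{W3_relative} is immediate.
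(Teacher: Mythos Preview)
Your plan has a genuine gap. Theorem~\ref{thm:full_degeneracy_at_1} (and the eigenvalue labeling of Lemma~\ref{label_ev} on which it rests) is stated for $SU(N)$-valued maps, and you invoke it for $U_{\alpha_j}$ with $j\neq N$, which lands in $U(N)$. Your justification ``$W_3$ is insensitive to the scalar factor'' would only let you replace $U_{\alpha_j}$ by an $SU(N)$-valued map if $\det U_{\alpha_j}$ admitted a continuous $N$th root on $\Sigma\times S^1$; but on $[1,2]$ one has $\det U_{\alpha_j}(\cdot,t)=\ee^{-2\pi\ii j(2-t)}$, so $\det U_{\alpha_j}$ winds $j$ times along the time circle, and no such root exists unless $j\equiv 0\pmod N$. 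So for $j\neq N$ the local formula does not apply ``verbatim'', and the permuted-labeling bookkeeping you flag as the main obstacle never even gets off the ground.

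The paper sidesteps this entirely. It observes that $U_{\alpha_N}$ \emph{is} $SU(N)$-valued (since $\tr H_{\alpha_N}=0$), applies Theorem~\ref{thm:full_degeneracy_at_1} to it once, and reads off
\[
W_3(U_{\alpha_N})=\mathcal I(U;p)+\sum_{\ell=p+1}^N C(V;\ell)
\]
for every $p$, with no relabeling needed because on $(1,2)$ the ordered eigenvalues are simply $s\lambda_1>\cdots>s\lambda_N$. The general $j$ is then obtained algebraically from the identity $H_{\alpha_j}=H_{\alpha_N}-P_{1,j}$, which gives $U_{\alpha_j}=U_{\alpha_N}\,U_{1,j}$ with $U_{1,j}(\cdot,t)=\ee^{2\pi\ii(t-2)P_{1,j}}$ periodic; additivity of $W_3$ together with the standard computation $W_3(U_{1,j})=\sum_{\ell=1}^j C(V;\ell)$ yields all three cases of \eqref{W3_relative} at once. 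What this buys is that the eigenvector identification and orientation signs are handled exactly once, in the cleanest case, rather than separately for each branch cut.
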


\begin{proof}
For a given $j=1,\ldots,N$ the eigenvalues of $H_{\alpha_j}$ are $\mu_i^j : \Sigma \rightarrow  \mathbb R$ for $i=1,\ldots,N$ with
\begin{equation*}
\qquad \mu_i^j(k) := \dfrac{1}{2\pi \ii}\log_{\alpha_j(k)}(\ee^{2 \pi \ii \lambda_i(k)})  
\end{equation*}
from \eqref{deflog} we infer
\begin{align}\label{muvslambda}
&\mu_i^j =  \lambda_i,\qquad j+1\leq i \leq N\cr
 &\mu_i^j =\lambda_i -1 ,\qquad 1\leq i \leq j
\end{align}
and
\begin{equation*}
	\mu_i^N = \lambda_i \qquad \forall i
\end{equation*}
This implies 
\begin{equation*}
\mu_{j+1}^j > \mu_{j+2}^j > \ldots \mu_N^j > \mu_1^j > \ldots \mu_j^j > \mu_{j+1}^j -1
\end{equation*}
for $1\leq j \leq N-1$ and $\mu_1^N > \mu_2^N > \ldots > \mu_N^N > \mu_1 -1$ for $j=N$ so that the logarithm only rearranges the eigenvalues. However
\begin{align*}
&\sum_{i=1}^N \mu_i^j = -j \neq 0, \qquad 1\leq j \leq N-1, \\
&\sum_{i=1}^N \mu_i^N =0.
\end{align*}
Notice that $\tr( s H_{\alpha_j}) = -s j$ where $s=(2-t) \in [0,1]$, except for $j=N$ where $\tr( s H_{\alpha_N}) =0$ so that $U_{\alpha_j} : \Sigma \to U(N)$ for $1 \leq j \leq N-1$ but instead $U_{\alpha_N} : \Sigma \to SU(N)$. So we start with $j=N$ where Theorem~\ref{thm:full_degeneracy_at_1} applies. By construction, the crossings of $U_{\alpha_N}$ occur only for $t \in [0,1]$ and at $t=2$ where $U_{\alpha_N}(\cdot,2)=\1$. Indeed the eigenvalues of $U_{\alpha_N}$ for $t \in (1,2)$ are 
\begin{equation*}
s \lambda_1 > \ldots > s \lambda_N > s \lambda_1-1
\end{equation*}
where $s=2-t \in (0,1)$ and $\lambda_i$ are the eigenvalue of $V = U_{\alpha_N}(\cdot,1)$. Thus
\begin{align*}
& W_3(U_{\alpha_N}) = - \sum_a \Ch(X_a,p) + \sum_b \sum_{\ell=p+1}^N \Ch(Y_b;\ell), \qquad 1\leq p \leq N-1, \\
& W_3(U_{\alpha_N}) = - \sum_a \Ch(X_a,N).
\end{align*}
In the sum over $b$ we only have one piece where $Y_b = \Sigma \times \{0\}  = \Sigma \times \{2\}$. A tubular neighborhood consists in two pieces: a collar neighborhood of $\Sigma \times \{0\}$ that appears in  Definition~\ref{def:top_floquet} of $\mathcal I(U;j)$; and a collar neighborhood of $\Sigma \times \{2\}$, that we take as $\Sigma \times \{1\}$ so that we get
\begin{align}
& W_3(U_{\alpha_N}) = \mathcal I(U;p) + \sum_{\ell=p+1}^N C(V;\ell)\label{W3_U_alpha_N_Topj}\\
& W_3(U_{\alpha_N}) = \mathcal I(U;N) 
\end{align}
which implies
\begin{equation*}
\mathcal I(U;N) - \mathcal I(U;p) = \sum_{\ell=p+1}^N C(V;\ell)
\end{equation*}

Finally let's come back to $U_{\alpha_j}$ for $1\leq j \leq N-1$. This map is not $SU(N)$-valued anymore so Theorem~\ref{thm:full_degeneracy_at_1} does not apply. However from \eqref{muvslambda} one has
\begin{equation*}
H_{\alpha_j} = H_{\alpha_N} - P_{1,j}
\end{equation*}
where $P_{1,j}$ is the eigenprojection associated to the eigenvalues $\lambda_1, \ldots, \lambda_j$ of $V$. We deduce 
\begin{equation*}
U_{\alpha_j} = U_{\alpha_N} U_{1,j}, \qquad  U_{1,j} = \left\lbrace \begin{array}{ll}
1, & 0\leq t\leq 1 \\
\ee^{2\pi \ii (t-2) P_{1,j}}, & 1\leq t\leq 2.
\end{array} \right.
\end{equation*}
Both $U_{\alpha_N}$ and $U_{1,j}$ are time periodic so that $W_3(U_{\alpha_j} ) = W_3(U_{\alpha_N} ) + W_3(U_{1,j})$. A standard computation shows (see \cite{Rudner13}) that $W_3(U_{1,j}) = \sum_{\ell=1}^j C(V;\ell)$. Thus from \eqref{W3_U_alpha_N_Topj} we get
\begin{equation*}
	W_3(U_{\alpha_j} ) = \mathcal I(U;p) + \sum_{\ell=p+1}^N C(V;\ell)  + \sum_{\ell=1}^j C(V;j) 
\end{equation*}
For $p = j$ this is $W_3(U_{\alpha_j} ) = \mathcal I(U;j)$ as the total Chern vanishes. For $p \neq j$ we get \eqref{W3_relative}.
\end{proof}

\subsubsection{Proof of Theorem~\ref{thm:Floquet_maps}}

We start by the case of two Floquet maps that coincide at their endpoints, and prove Proposition~\ref{prop:Floquet_maps}. We then deal with the general case. Once the extra Chern numbers $C(V,\ell)$ are taken into account, the proof of Theorem~\ref{thm:Floquet_maps} relies on Proposition~\ref{prop:Floquet_maps}.

\begin{proof}[Proof of Proposition~\ref{prop:Floquet_maps}] 
We can think of $U_0U_1^{-1} : \Sigma \times [0, 1] \to SU(N)$ as $U_0U_1^{-1} : \Sigma \times S^1 \to SU(N)$, since $U_0(x, i)U_1(x, i)^{-1} = \1$ for $x \in \Sigma$ and $i = 0, 1$. We write $\hat U$ for $U_0U_1^{-1}$ regarded as a map $\Sigma \times S^1 \to SU(N)$. Proposition~\ref{prop:obstruction} shows that $U_0$ and $U_1$ are homotopic relative to $\Sigma \times \partial [0, 1]$ if and only if $W_3(\hat U) = 0$. So we prove $W_3(\hat U) = \Top(U_0; j) - \Top(U_1; j)$. For this aim, let us consider $U_0 \cup U_1 : \Sigma \times S^1 \to SU(N)$ as the gluing of $\Sigma \times [0, 1]$ and its copy with the opposite orientation along their boundaries. Then Theorem~\ref{thm:full_degeneracy_at_1} implies $W_3(U_0 \cup U_1) = \Top(U_0; j) - \Top(U_1; j)$. Since $W_3(U_1 \cup U_1) = 0$, we have 
\begin{align*}
W_3(U_0 \cup U_1) &= W_3(U_0 \cup U_1) - W_3(U_1 \cup U_1)  \cr & = W_3((U_0 \cup U_1)(U_1 \cup U_1)^{-1}) \cr &= W_3((U_0U_1^{-1}) \cup \1).
\end{align*}
Regarding the maps $U_0U_1^{-1}$ and $\1$ from $\Sigma \times [0, 1]$ as maps $\hat U$ and $\1$ from $\Sigma \times S^1$, we have $W_3((U_0U_1^{-1}) \cup \1) = W_3(\hat U) + W_3(\1)$. Because $W_3(\1) = 0$, we get $W_3(U_0 \cup U_1) = W_3(\hat U)$, and hence $W_3(\hat U) = \Top(U_0; j) - \Top(U_1; j)$.
\end{proof}

\begin{lem}\label{lem:hom_V_C}
Let $V_i : \Sigma \to SU(N)_{\le 0}$, ($i = 0, 1$) be given. Then, $V_0$ and $V_1$ are homotopic (as maps with values in $SU(N)_{\le 0}$) if and only if $C(V_0; p) = C(V_1; p)$ for all $p = 1, \ldots, N$.
\end{lem}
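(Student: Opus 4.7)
The ``only if'' direction is immediate: any homotopy $\tilde V : \Sigma \times [0,1] \to SU(N)_{\le 0}$ keeps the eigenvalues of $\tilde V(x,s)$ pairwise distinct, so the $p$-th eigenvector line bundle extends over the cylinder $\Sigma \times [0,1]$ and homotopy invariance of the first Chern number gives $C(V_0;p) = C(V_1;p)$ for every $p$.

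For the converse, the plan is to reduce the homotopy classification on $SU(N)_{\le 0}$ to one on the complete flag variety $SU(N)/T$, where $T \subset SU(N)$ is the maximal torus of diagonal matrices. The restriction of the quotient map $q : SU(N) \to \mathfrak{A}$ to $SU(N)_{\le 0}$ is a fiber bundle over $\mathfrak{A}_{\le 0}$ whose fiber at any $\xi \in \mathfrak{A}_{\le 0}$ is the conjugacy class of $\exp 2\pi \xi$; since the isotropy group of a generic element of the open alcove equals $T$, this fiber is $SU(N)/T$. As $\mathfrak{A}_{\le 0}$ is open and convex (hence contractible), the inclusion of any fiber is a homotopy equivalence, yielding $[\Sigma, SU(N)_{\le 0}] \cong [\Sigma, SU(N)/T]$.

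Next I would classify maps $\Sigma \to SU(N)/T$ by elementary obstruction theory. The long exact sequence of the principal bundle $T \to SU(N) \to SU(N)/T$, together with $\pi_1(SU(N)) = \pi_2(SU(N)) = 0$, gives $\pi_1(SU(N)/T) = 0$ and $\pi_2(SU(N)/T) \cong \pi_1(T) \cong \Z^{N-1}$. Since $\Sigma$ is a closed oriented $2$-manifold and the target is simply connected, cellular obstruction theory produces a bijection $[\Sigma, SU(N)/T] \cong H^2(\Sigma;\pi_2(SU(N)/T)) \cong \Z^{N-1}$, realized by pulling back the first Chern classes of the $N$ tautological line bundles $\mathcal{L}_p \to SU(N)/T$, subject to the single relation $\sum_p c_1(\mathcal{L}_p) = 0$ coming from the fact that $\bigoplus_p \mathcal{L}_p$ is the trivial rank-$N$ bundle. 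Under the retraction $SU(N)_{\le 0} \simeq SU(N)/T$, the $p$-th eigenvector line bundle on $SU(N)_{\le 0}$ coincides with the pullback of $\mathcal{L}_p$, so the invariants of the induced map $\Sigma \to SU(N)/T$ are precisely the integers $C(V;p)$. Matching Chern numbers for all $p$ then forces the two induced maps to the flag variety to be homotopic, and hence $V_0$ and $V_1$ to be homotopic in $SU(N)_{\le 0}$.

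The step I expect to require most care is the bijection $[\Sigma, Y] \cong H^2(\Sigma; \pi_2 Y)$ when $\Sigma$ has nontrivial fundamental group, e.g.\ $\Sigma = T^2$: one must use simple connectivity of $Y = SU(N)/T$ to homotope any map so as to collapse the $1$-skeleton of $\Sigma$ to a point, reducing to $[\Sigma/\Sigma^{(1)}, Y] \cong [S^2, Y] = \pi_2(Y)$, and then check that the pairing of Chern classes with the fundamental class $[\Sigma]$ realizes this identification. The rest is bookkeeping: verifying that $q|_{SU(N)_{\le 0}}$ is a locally trivial bundle (a standard slice-theorem argument at a regular point of the alcove) and identifying the tautological line bundles on $SU(N)/T$ with the eigenvector line bundles on $SU(N)_{\le 0}$.
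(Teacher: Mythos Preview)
Your proposal is correct and follows essentially the same approach as the paper: both reduce to the homotopy equivalence $SU(N)_{\le 0}\simeq SU(N)/T$, compute $\pi_1(SU(N)/T)=0$ and $\pi_2(SU(N)/T)\cong\pi_1(T)\cong\Z^{N-1}$ from the fibration $T\to SU(N)\to SU(N)/T$, and invoke obstruction theory on the $2$-complex $\Sigma$ to identify the homotopy set with $H^2(\Sigma;\Z^{N-1})$, detected by the Chern numbers $C(V;p)$. Your write-up is in fact more detailed than the paper's terse proof (which just points back to Proposition~\ref{prop:obstruction}), and your flagged concern about $\pi_1(\Sigma)\neq 0$ is exactly the right caveat, handled by simple connectivity of the target as you indicate.
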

\begin{proof}
The open submanifold $SU(N)_{\le 0}$ in $SU(N)$ is homotopy equivalent to $SU(N)/T$. The exact sequence of homotopy groups associated to the fibration $T \to SU(N) \to SU(N)/T$ allows us to compute the homotopy groups of $SU(N)/T$ as follows
$$
\pi_n(SU(N)/T)
\cong
\left\{
\begin{array}{ll}
0, & (n = 0, 1) \\
\pi_1(T) \cong \Z^{N-1}, & (n = 2) \\
\pi_N(SU(N)). & (n \ge 3)
\end{array}
\right.
$$
Then, generalizing the argument in Proposition~\ref{prop:obstruction}, we can see that the homotopy group $\pi_2(SU(N)/T)$ obstructs the existence of a homotopy between two maps $V, V' : \Sigma \to SU(N)_{\le 0}$. The obstruction is then identified with the differences $C(V_0; p) - C(V_1; p)$ for $p = 1, \ldots, N-1$. 
\end{proof}

\paragraph{1 $\Rightarrow$ 3:} Suppose there exists such a homotopy $\tilde{U}$ as stated. This homotopy restricts to a homotopy $\tilde{U}|_{\Sigma \times \{ 1 \} \times [0, 1]}$ between $U_0|_{\Sigma \times \{ 1 \}}$ and $U_1|_{\Sigma \times \{ 1 \}}$. Hence $C(U_0|_{\Sigma \times \{ 1 \}}; p) = C(U_1|_{\Sigma \times \{ 1 \}}; p)$ for all $p$ by Lemma~\ref{lem:hom_V_C}. Gluing $U_0$ and the homotopy $\tilde{U}|_{\Sigma \times \{ 1 \} \times [0, 1]}$, we define $U'_0 : \Sigma \times [0, 1] \to SU(N)$ as follows
$$
U'_0(x, t) =
\left\{
\begin{array}{ll}
U_0(x, 2t), & ((x, t) \in \Sigma \times [0, 1/2]) \\
\tilde{U}(x, 1, 2t - 1). & ((x, t) \in \Sigma \times [1/2, 1])
\end{array}
\right.
$$
We have $U'_0|_{\Sigma \times \{ 1 \}} = \tilde{U}|_{\Sigma \times \{ 1 \} \times \{ 1 \}} = U_1|_{\Sigma \times \{ 1 \}}$ by assumption. Because $\tilde{U}(\Sigma \times \{ 1 \} \times [0, 1]) \subset SU(N)_{\le 0}$, we see $\Top(U'_0; j) = \Top(U_0; j)$. Now, we define $\tilde{U}' : \Sigma \times [0, 1] \times [0, 1] \to SU(N)$ by
$$
\tilde{U}'(x, t, s)
=
\left\{
\begin{array}{ll}
\tilde{U}(x, 2t, s), & ((x, t) \in \Sigma \times [0, (1-s)/2]) \\
\tilde{U}(x, \frac{2st + 1 - s}{1+s}, \frac{(2-2s)t + 3s-1}{1+s}). & 
((x, t) \in \Sigma \times [(1-s)/2, 1])
\end{array}
\right.
$$
This is a homotopy between $U'_0$ and $U_1$ relative to $\Sigma \times \partial [0, 1]$. Thus, Proposition~\ref{prop:Floquet_maps} implies $\Top(U'_0; j) = \Top(U_1; j)$, and hence $\Top(U_0; j) = \Top(U_1; j)$. 

\paragraph{3 $\Rightarrow$ 1:} If $C(V_0;p) = C(V_1;p)$ for all $p$, then $U_0|_{\Sigma \times \{ 1 \}}$ and $U_1|_{\Sigma \times \{ 1 \}}$ are homotopic as maps to $SU(N)_{\le 0}$ by Lemma~\ref{lem:hom_V_C}. Let $\tilde{V} : \Sigma \times \{ 1 \} \times [0, 1] \to SU(N)_{\le 0}$ be such a homotopy with $\tilde{V}|_{\Sigma \times \{ 1 \} \times \{ i \}} = U_i|_{\Sigma \times \{ 1 \}}$. We define $W_0 : \Sigma \times [0, 1] \to SU(N)$ by concatenation of $U_0$ and $\tilde{V}$, 
$$
W_0(x, t) =
\left\{
\begin{array}{ll}
U_0(x, 2t), & ((x, t) \in \Sigma \times [0, 1/2]) \\
\tilde{V}(x, 1, 2t-1). & ((x, t) \in \Sigma \times [1/2, 1])
\end{array}
\right.
$$
If we define $\tilde{W}_0 : \Sigma \times [0, 1] \times [0, 1] \to SU(N)$ by
$$
\tilde{W}_0(x, t, s) 
=
\left\{
\begin{array}{ll}
U_0(x, 2t/(1 + s)), & ((x, t) \in \Sigma \times [0, (1 + s)/2]) \\
\tilde{V}(x, 1, 2t - 1 - s), & ((x, t) \in \Sigma \times [(1+s)/2, 1])
\end{array}
\right.
$$
then $\tilde{W}_0$ is a homotopy between $W_0$ and $U_0$ such that $\tilde{W}|_{\Sigma \times \{ 0 \} \times [0, 1]} = \1$ and $\tilde{W}(\Sigma \times \{ 1 \} \times [0, 1]) \subset SU(N)_{\le 0}$. Thus, the proposition will be completed by showing that $W_0$ and $W_1 = U_1$ are homotopic relative to $\Sigma \times \partial [0, 1]$. Note that $W_0$ satisfies the assumptions for $\Top(W_0; j)$ to be defined. Since $\tilde{V}$ is a homotopy in $SU(N)_{\le 0}$, we have $\Top(W_0; j) = \Top(U_0; j)$. Now, by Proposition~\ref{prop:Floquet_maps}, there is a homotopy between $W_0$ and $W_1 = U_1$ relative to $\Sigma \times \partial [0, 1]$. 

\appendix

\section{Reduction to $SU(N)$-valued maps}
\label{sec:reduction}

Let $X$ be a topological space, and $Y \subset X$ a subspace. For a topological group $G$, we denote by $C((X, Y), (G, 1))$ the set of continuous maps $U : X \to G$ such that $U|_Y \equiv 1$ is the constant map at the unit $1 \in G$. By the pointwise multiplication, the set gives rise to a group. A (relative) homotopy between two maps $U_0, U_1 \in C((X, Y), (G, 1))$ is a continuous map $\tilde{U} \in C((X \times [0, 1], Y \times [0, 1]), (G, 1))$ such that $\tilde{U}|_{X \times \{ i \}} = U_i$ for $i = 0, 1$. The set of homotopy classes in $C((X, Y), (G, 1))$ will be denoted by
$$
[(X, Y), (G, 1)],
$$
which inherits a group structure from $C((X, Y), (G, 1))$.

\begin{lem} \label{lem:exact_sequence}
Let $X$ be a topological space, and $Y \subset X$ a subspace. There is an exact sequence of groups
$$
1 \to 
[(X, Y), (SU(N), 1)] \to
[(X, Y), (U(N), 1)] \to
[(X, Y), (U(1), 1)] \to
1.
$$
This admits a section to the surjection induced from $\det : U(N) \to U(1)$, so that there is an isomorphism of groups
$$
[(X, Y), (U(N), 1)]
\cong [(X, Y), (SU(N), 1)] \rtimes [(X, Y), (U(1), 1)].
$$
\end{lem}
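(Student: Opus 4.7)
The plan is to derive both the exactness and the splitting from one explicit group-theoretic section of the determinant. Consider
\[
s : U(1) \longrightarrow U(N), \qquad s(z) = \mathrm{diag}(z, 1, \ldots, 1).
\]
This is a continuous group homomorphism satisfying $\det \circ s = \mathrm{id}_{U(1)}$, so it splits the short exact sequence $1 \to SU(N) \hookrightarrow U(N) \xrightarrow{\det} U(1) \to 1$ of topological groups. Because $s$, the inclusion $\iota$, and $\det$ all fix the unit, they descend to group homomorphisms $s_*$, $\iota_*$, $\det_*$ between the respective $[(X,Y),(-,1)]$ groups; the semidirect product will then fall out of standard group theory once exactness is established.

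I would then check exactness of the induced sequence position by position, using $s$ as the main tool. Surjectivity of $\det_*$ is immediate from $\det_*\circ s_* = \mathrm{id}$, and triviality of $\det_*\circ\iota_*$ follows from $\det\circ\iota = 1$. For injectivity of $\iota_*$, given a null-homotopy $H : X \times [0,1] \to U(N)$ of $\iota\circ U$ relative to $Y$, set
\[
H'(x,t) := H(x,t)\cdot s\bigl(\det H(x,t)\bigr)^{-1}.
\]
Because $\det H$ equals $1$ on $X\times\{0\}$, on $X\times\{1\}$, and on $Y\times[0,1]$, one checks directly that $H'$ has determinant $1$ everywhere and equals $U$, the constant $1$, and $1$ on these same strata, hence is the sought relative null-homotopy in $SU(N)$. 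For exactness in the middle, suppose $\det_*[U]=1$ and let $h : X \times [0,1] \to U(1)$ be a null-homotopy of $\det U$ relative to $Y$, with $h(\cdot,0)=\det U$ and $h(\cdot,1)=1$. Then $V := U \cdot s(\det U)^{-1}$ lies in $SU(N)$ and restricts to $1$ on $Y$, while
\[
K(x,t) := V(x)\cdot s\bigl(h(x,t)\bigr)
\]
is a homotopy relative to $Y$ in $U(N)$ from $U$ (at $t=0$) to $\iota\circ V$ (at $t=1$), so $[U]=\iota_*[V]$ lies in $\mathrm{im}\,\iota_*$.

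Finally, because $s_*$ is a homomorphic section of $\det_*$, the resulting split short exact sequence of groups yields the semidirect product decomposition stated, with the $[(X,Y),(U(1),1)]$-action on $[(X,Y),(SU(N),1)]$ given by conjugation through $s$. I do not anticipate any serious obstacle: every step reduces to algebraic manipulation with the section $s$, and no topological hypotheses on $X$ or $Y$ beyond what is already implicit in the definitions of the groups are required. The only point worth double-checking is that the formulas for $H'$ and $K$ genuinely take the correct boundary values on $Y$ and at $t=0,1$, but this is a direct computation using $s(1)=1$ and the fact that $\det H$, $\det U$, and $h$ all become trivial on the appropriate strata.
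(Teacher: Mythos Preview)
Your proof is correct and follows exactly the approach the paper indicates: the paper's proof simply names the section $s(u)=\mathrm{diag}(u,1,\ldots,1)$ and says ``using this section, we can verify the lemma directly,'' and you have carried out that direct verification in full. The explicit homotopies $H'$ and $K$ you write down are precisely the details the paper leaves to the reader.
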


\begin{proof}
We have the exact sequence of topological groups
$$
1 \to SU(N) \to U(N) \overset{det}{\to} U(1) \to 1,
$$
which admits a section $s : U(1) \to U(N)$ given by $s(u) = \mathrm{diag}(u, 1, \cdots, 1)$. Using this section, we can verify the lemma directly.
\end{proof}

To describe the obstructions for $U \in C((X, Y), (U(1), 1))$ to being homotopic to the constant map at $1$, we introduce the odd dimensional winding number as follows: It is well known that the cohomology ring $H^*(U(N); \Z)$ of $U(N)$ is isomorphic to the exterior ring 
$$
H^*(U(N); \Z) \cong \bigwedge (W_1, W_3, \cdots, W_{2N-1})
$$
generated by $W_{2i-1} \in H^{2i-1}(U(N); \Z) \cong H^{2i-1}(U(N), 1; \Z)$, ($i = 1, \cdots, N$). We then define the $(2i-1)$-dimensional winding number to be the pull-back of the generator
$$
W_{2i-1}(U) := U^*W_{2i-1} \in H^{2i-1}(X, Y; \Z).
$$

\begin{lem} \label{lem:first_obstruction}
Let $X$ be a finite CW complex, and $Y \subset X$ a subcomplex. A continuous map $u \in C((X, Y), (U(1), 1))$ is (relatively) homotopic to the constant map at $1$, if and only if $W_1(u) = 0$.
\end{lem}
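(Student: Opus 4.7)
The plan is to exploit the fact that $U(1) \simeq S^1$ is an Eilenberg--MacLane space $K(\Z, 1)$, so that relative homotopy classes $[(X, Y), (U(1), 1)]$ are in natural bijection with $H^1(X, Y; \Z)$ via $[u] \mapsto u^*W_1 = W_1(u)$. Under this identification the constant map corresponds to $0 \in H^1(X, Y; \Z)$, and the lemma follows immediately. I would argue this concretely using the universal cover $p : \R \to U(1)$, $t \mapsto \ee^{2\pi \ii t}$.

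Given $u \in C((X, Y), (U(1), 1))$, the first step is to analyze the existence of a continuous lift $\tilde{u} : X \to \R$ with $\tilde{u}|_Y = 0$ and $p \circ \tilde{u} = u$. Since $\R$ is contractible and $(X, Y)$ is a finite CW pair, classical obstruction theory for the covering $p$ shows that such a lift exists if and only if a single obstruction class vanishes in $H^1(X, Y; \pi_1(U(1))) = H^1(X, Y; \Z)$; all higher obstructions would lie in $H^n(X, Y; \pi_n(U(1))) = 0$ for $n \ge 2$ and are automatic. By construction of the generator $W_1 \in H^1(U(1); \Z)$ as the class dual to a loop generating $\pi_1(U(1))$, this obstruction class is precisely $W_1(u) = u^*W_1$.

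If $W_1(u) = 0$, a lift $\tilde{u}$ exists and the straight-line deformation $H(x, s) = p(s\,\tilde{u}(x))$ defines a relative null-homotopy: $H(x, 0) = 1$, $H(x, 1) = u(x)$, and $H(y, s) = 1$ for all $y \in Y$ since $\tilde{u}|_Y = 0$. Conversely, if $u$ is relatively null-homotopic, then by naturality of $W_1$ and the vanishing of $W_1$ on the constant map one gets $W_1(u) = 0$.

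The step requiring the most care is the identification of the relative obstruction class with $W_1(u)$ and the inductive construction of the lift rel $Y$. The CW structure is essential here: one extends $\tilde{u}$ cell by cell over $X \setminus Y$ starting from $\tilde{u}|_Y = 0$, and the integer cocycle describing the obstruction on $1$-cells is exactly the singular cocycle representing $u^*W_1 \in H^1(X, Y; \Z)$. All remaining verifications are standard.
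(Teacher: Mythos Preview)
Your proof is correct. Both your argument and the paper's are obstruction-theoretic in nature, but they are packaged differently. The paper works directly with the target $U(1)$: it first collapses $Y$ to a point, then homotopes $u$ to the constant map skeleton by skeleton, using path-connectedness of $U(1)$ on $0$-cells, the hypothesis $W_1(u)=0$ on $1$-cells (where the winding numbers of $u$ along $1$-cells form the cellular cocycle representing $W_1(u)$), and $\pi_n(U(1))=0$ for $n\ge 2$ on higher cells. You instead invoke the representability of $H^1$ by the Eilenberg--MacLane space $U(1)\simeq K(\Z,1)$ and the universal cover $\R\to U(1)$: the unique obstruction to lifting $u$ to $\R$ relative to $Y$ lies in $H^1(X,Y;\Z)$ and equals $W_1(u)$, and once a lift exists the straight-line homotopy in $\R$ gives the relative null-homotopy for free. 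Your route is slicker and more conceptual; the paper's is more self-contained and makes the identification of the obstruction cocycle with $W_1(u)$ fully explicit. The one point you flag as needing care---that the cellular obstruction cocycle on relative $1$-cells is precisely a representative of $u^*W_1$---is exactly what the paper spells out, so the two arguments dovetail there.
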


\begin{proof}
The ``if'' part is clear. For the ``only if'' part (cf. \cite{DeNittisGomi18}), a standard obstruction theory argument can be applied: Because of the assumptions about $X$ and $Y$, $u \in C((X, Y), (U(1), 1))$ is relatively homotopic to the constant map at $1$, if and only if so is the the map $\bar{u} \in C((X/Y, Y/Y), (U(1), 1))$ induced from $u$, where $X/Y$ is the CW complex given by collapsing $Y$ to a point. Accordingly, we can assume $Y = \pt$ is a point (a $0$-cell) from the beginning. For $k = 0, 1, 2, \cdots$, we denote by $X_k$ the $k$-skeleton of the CW complex $X$. Thus, $X_0$ consists of all the $0$-cells, and $X_k$ is given by attaching the boundary of each $k$-cell $e^k$ to $X_{k-1}$. 

Because $U(1)$ is connected, there is a path between $1 \in U(1)$ and $u(e^0) \in U(1)$ for each $0$-cell $e^0$. For the base $0$-cell $\pt$, we choose the path to be the constant. Such paths together define a relative homotopy between $u|_{X_0} : X_0 \to U(1)$ and the constant map at $1$. By the homotopy extension property, we can extend the relative homotopy on $X_0$ to one between $u : X \to U(1)$ and a map $u_1 : X \to U(1)$ such that $u_1|_{X_0} \equiv 1$. Now, each $1$-cell $e^1$ defines a loop $u_1 : e^1/\partial e^1 \to U(1)$ based at $1 \in U(1)$. Hence its winding number defines a $1$-cocycle of the cellular cochain complex $C^1(X, \pt; \Z)$. This represents $W_1(u_1) = W_1(u) \in H^1(X, \pt; \Z)$, in view of the case that $X = U(1)$. The assumption $W_1(U) = 0$ says that $u_1 : e^1/\partial e^1 \to U(1)$ is homotopic to the constant loop at $1$. Such homotopies together define a relative homotopy between $u_1|_{X_1} : X_1 \to U(1)$ and the constant map. By the homotopy extension property, it extends to a relative homotopy between $u_1 : X \to U(1)$ and $u_2 : X \to U(1)$ such that $u_2|_{X_1} \equiv 1$. Then, each $2$-cell $e^2$ defines an element $u_2 : e^2/\partial e_2 \to U(1)$. Since $\pi_2(U(1)) = 0$, each map $e^2/\partial e_2 \to U(1)$ is homotopic to the constant map, and such homotopies together constitute a homotopy from $u_2|_{X_2} : X_2 \to U(1)$ to the constant map. By the homotopy extension property, this homotopy extends one between $u_2 : X \to U(1)$ and $u_3 : X \to U(1)$ such that $u_3|_{X_2} \equiv 1$. Because $\pi_i(U(1)) = 0$ for $i \ge 2$, we can repeat the same argument to get a homotopy from $u_{i-1} : X \to U(1)$ to $u_i : X \to U(1)$ such that $u_i|_{X_{i-1}} \equiv 1$. Because $X$ is a finite complex, this procedure terminates at a finite step, yielding a homotopy to the constant map on $X$. Putting all the homotopies together, we get a homotopy from $u$ to the constant map on $X$. 
\end{proof}

\begin{prop} \label{prop:obstruction}
Let $X$ be a finite CW complex which contains only cells of dimension $3$ or less, and $Y \subset X$ a subcomplex. Let $N \ge 2$. A continuous map $U \in C((X, Y), (U(N), 1))$ is (relatively) homotopic to the constant map at $1$, if and only if $W_1(U) = 0$ and $W_3(U) = 0$.
\end{prop}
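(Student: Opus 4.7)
The forward implication is immediate, since $W_1(U)$ and $W_3(U)$ are defined as pullbacks of fixed integral cohomology classes on $U(N)$ and are therefore homotopy invariants of $U$ relative to $Y$. For the non-trivial direction, my plan is to first peel off the determinant degree of freedom using Lemma~\ref{lem:exact_sequence}, and then perform cell-by-cell obstruction theory on an $SU(N)$-valued map, modeled on the proof of Lemma~\ref{lem:first_obstruction}.

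More precisely, the splitting of Lemma~\ref{lem:exact_sequence} writes $U = U_{SU} \cdot s(\det U)$ with $U_{SU} \in C((X,Y),(SU(N),1))$ and $s(u) = \mathrm{diag}(u,1,\ldots,1)$. Because $\det$ is a group homomorphism, the $1$-form $\tfrac{1}{2\pi\ii}\tr(g^{-1}dg) = \tfrac{1}{2\pi\ii}\,d\log\det g$ is additive under matrix multiplication, so at the cohomology level $W_1(U) = W_1(U_{SU}) + W_1(\det U) = W_1(\det U)$ (the first summand vanishes because $\det U_{SU} \equiv 1$). The hypothesis $W_1(U) = 0$ together with Lemma~\ref{lem:first_obstruction} then gives that $\det U$, hence $s(\det U)$, is relatively null-homotopic, producing a relative homotopy between $U$ and $U_{SU}$. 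By homotopy invariance, $W_3(U_{SU}) = W_3(U) = 0$, and it suffices to establish the following statement: any $V \in C((X,Y),(SU(N),1))$ with $W_3(V) = 0$ is relatively null-homotopic.

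For this reduced problem I would transcribe the skeleton-by-skeleton argument from Lemma~\ref{lem:first_obstruction}, with $U(1)$ replaced by $SU(N)$. After collapsing $Y$ to a point one may assume $Y = \pt$. Since $SU(N)$ is connected, simply connected, and satisfies $\pi_2(SU(N)) = 0$ (true for any compact Lie group), the restriction $V|_{e^k/\partial e^k}\colon S^k \to SU(N)$ is automatically null-homotopic for $k = 1, 2$, and the homotopy extension property upgrades these trivializations to relative homotopies on all of $X$ through the $2$-skeleton. At the $3$-skeleton, each $3$-cell $e^3$ yields a class $[V|_{e^3/\partial e^3}] \in \pi_3(SU(N)) \cong \Z$, and these classes assemble into a cellular $3$-cocycle whose cohomology class in $H^3(X,\pt;\Z)$ is the primary obstruction. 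The standard obstruction-theoretic identification matches this class with $W_3(V)$, so the vanishing assumption produces a final relative homotopy trivializing $V$ on the $3$-skeleton; since $\dim X \le 3$, the induction terminates.

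The main technical step will be the identification of this top-dimensional obstruction cocycle with $W_3(V)$. This rests on the fact that $SU(N)$ is $2$-connected, so the Hurewicz map $\pi_3(SU(N)) \to H_3(SU(N);\Z)$ is an isomorphism, and the normalization of the integral generator $W_3 \in H^3(SU(N);\Z)$ represented by $\tfrac{1}{24\pi^2}\tr(g^{-1}dg)^3$ is chosen precisely so that it pairs to $1$ with the generator of $\pi_3(SU(N))$. Once this normalization is in hand, the obstruction cocycle coincides with the cellular representative of $V^*W_3$, and the remainder of the argument is a routine transcription of the proof of Lemma~\ref{lem:first_obstruction}.
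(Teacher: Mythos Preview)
Your proposal is correct and follows essentially the same route as the paper: reduce to an $SU(N)$-valued map via the determinant splitting and Lemma~\ref{lem:first_obstruction}, then run cell-by-cell obstruction theory using $\pi_i(SU(N))=0$ for $i\le 2$ and identify the top obstruction cocycle with $W_3$. Your write-up is in fact more explicit than the paper's on the Hurewicz/normalization step; the arguments are otherwise interchangeable.
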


\begin{proof}
The ``only if'' part is clear. For the ``if'' part, Lemma~\ref{lem:exact_sequence} and Lemma~\ref{lem:first_obstruction} imply that the given map $U$ is relatively homotopic to a map in $U' \in C((X, Y), (SU(N), 1))$. We have $W_3(U') = W_3(U)$. Hence it suffices to show that $W_3(U') = 0$ implies that $U'$ is relatively homotopic to the constant map. Then its proof is essentially the same as that of Lemma~\ref{lem:first_obstruction}: We can assume that $Y$ is a $0$-cell. Since $\pi_i(SU(N)) = 0$ for $i \le 2$, the map $U'$ is relatively homotopic to $U'' : X \to SU(N)$ such that $U''|_{X_2} \equiv 1$. Then, we have $\pi_3(SU(N)) \cong \Z$, and the map $U'' : e^3/\partial e^3 \to SU(N)$ defines a cellular $3$-cocycle which represents $W_3(U'') = W_3(U') = W_3(U)$. The vanishing $W_3(U) = 0$ ensures that $U''$ is homotopic to the constant map. 
\end{proof}

So far, we are in the topological setup, so that given maps and their homotopy are continuous. When the given CW complexes are smooth manifolds and given maps are smooth, then, by approximation, their (continuous) homotopy can be replaced by a smooth homotopy (through a homotopy of homotopies). In this paper, this replacement may be implicitly adapted. 

\medskip

As is mentioned in the introduction, when we are interested in the classification of (topological invariants of) quantum systems on $2$-dimensional lattices subject to a periodic driving, we would like to know the obstruction for $U \in C((T^2 \times S^1, T^2 \times \{ 0 \}), (U(N), 1))$ to being trivial. For a compact oriented $2$-dimensional manifold $\Sigma$ without boundary, it holds that
$$
\Z \cong H^1(\Sigma \times S^1, \Sigma \times \{ 0 \}; \Z) 
\subset H^1(\Sigma \times S^1; \Z),
$$
and this subgroup is generated by the pull-back of $H^1(S^1; \Z) \cong \Z$ under the projection $\Sigma \times S^1 \to S^1$. This implies that $W_1(U)$ is computed as the winding number of $\det U|_{\{ x \} \times S^1} : \{ x \} \times S^1 \to U(1)$, where $x \in \Sigma$ is any point. It also holds that
$$
\Z \cong H^3(\Sigma \times S^1, \Sigma \times \{ 0 \}; \Z) 
\cong H^3(\Sigma \times S^1; \Z).
$$
Hence the relative $3$-dimensional winding number $W_3(U)$ agrees with the absolute $3$-dimensional winding number. As a matter of fact, a compact oriented manifold (without boundary) admits a CW decomposition (see \cite{Milnor16} for example), and we can apply Proposition~\ref{prop:obstruction}. Then the map $U$ is relatively homotopic to the constant map at $1$, if and only if the $1$-dimensional winding number along a point $x \in \Sigma$ and the (absolute) $3$-dimensional winding number $W_3(U) \in H^3(\Sigma \times S^1; \Z)$ are vanishing. The $1$-dimensional winding number is easier to compute, and if it is non-trivial, then we can conclude that $U$ is non-trivial. If the $1$-dimensional winding number is trivial, then the remaining obstruction is $W_3(U)$. Thanks to the exact sequence in Lemma~\ref{lem:exact_sequence}, we can assume in this case that $U$ takes values in $SU(N)$. Instead if $W_1(U) = p \neq 0$ we consider $U_p = U\cdot \mathrm{diag}(\ee^{-2 \pi \ii p t}, 1, \ldots, 1)$ that satisfies $W_1(U_p) =0$ and $W_3(U_p) = W_3(U)$  by additivity of the winding numbers. In particular $U_p$ is homotopic to an $SU(N)$-valued map and shares the same value for $W_3$.This motivates us to give a local expression of the $3$-dimensional winding number for $SU(N)$-valued maps.

\section{Further examples \label{app:exmp}}


\subsection{The adjoint $S^2 \times S^1 \to SU(2)$}
\label{subsec:adjoint}

Let $T \subset SU(2)$ be the maximal torus consisting of diagonal matrices, which is diffeomorphic to the circle $S^1 \cong U(1)$. The quotient space $SU(2)/T$ is readily identified with the $2$-dimensional sphere $S^2 = \C P^1$ by
$$
\left(
\begin{array}{rr}
u & -\bar{v} \\
v & \bar{u}
\end{array}
\right)/T
\mapsto
[u : v].
$$
By the adjoint action, we have a smooth surjective map
\begin{align*}
U &: SU(2)/T \times T \to SU(2), &
(gT, h) &\mapsto ghg^{-1},
\end{align*}
This map gives rise to a double covering over $SU(2) \backslash \{ \pm 1 \}$, but not over the whole of $SU(2)$. Because $SU(2)/T \times T \cong S^2 \times S^1$ is a compact oriented $3$-dimensional manifold without boundary, the $3$-dimensional winding number $W_3(U)$ makes sense. This number agrees with the mapping degree of $U$. It is known \cite{Atiah65} that $W_3(U) = 2$, which we compute through our local formula.

It is easy to see the eigenvalue crossings:
$$
\begin{array}{|c|c|c|c|c|c|}
\hline
j & \mbox{eigenvalues} & 
\lambda_1 & \lambda_2 & \lambda_1 - 1& \Cr_j(U) \\
\hline
1 & 1, 1 & 0 & 0 & -1 & S^2 \times \{ 1 \} \\
\hline
2 & -1, -1 & 1/2 & -1/2 & -1/2 & S^2 \times \{ -1 \} \\
\hline
\end{array}
$$

To apply Theorem \ref{thm:main} for $j = 1$, we choose a closed tubular neighborhood $N_1$ of $\Cr_1(U) = S^2 \times \{ 1 \}$ to be
$
N_1 =
S^1 \times
\{ \ee^{2\pi \ii t} |\ -1/4 \le t \le 1/4 \}
\cong S^1 \times [-1/4, 1/4].
$
At $([u: v], \pm 1/4)) \in \partial N_1$, the value of $U$ is 
$$
U([u : v], \pm 1/4)
=
\left(
\begin{array}{rr}
u & -\bar{v} \\
v & \bar{u}
\end{array}
\right)
\left(
\begin{array}{cc}
\pm i & 0 \\
0 & \mp i
\end{array}
\right)
\left(
\begin{array}{rr}
u & -\bar{v} \\
v & \bar{u}
\end{array}
\right)^{-1},
$$
so that its eigenvalues are $\ee^{2\pi \ii \lambda_1} = i$ and $\ee^{2\pi \ii \lambda_2} = -i$: $\lambda_1 = \tfrac{1}{4} \geq \lambda_2 = -\tfrac{1}{4} \geq \lambda_1-1 = -\tfrac{3}{4}$.
Thus, on the connected component $S^2 \times \{ 1/4 \} \subset \partial N_1$, the eigenvector of $U([u: v], 1/4)$ with eigenvalue $\ee^{2\pi \ii \lambda_1} = i$ is
$$
\left(
\begin{array}{rr}
u & -\bar{v} \\
v & \bar{u}
\end{array}
\right)
\left(
\begin{array}{c}
1 \\ 0
\end{array}
\right)
=
\left(
\begin{array}{c}
u \\ v
\end{array}
\right),
$$
which spans the tautological line bundle on $S^2 = \C P^1$. On the other connected component $S^2 \times \{ -1/4 \} \subset \partial N_1$, the eigenvector of $U([u: v], -1/4)$ with eigenvalue $\ee^{2\pi \ii \lambda_1} = i$ is
$$
\left(
\begin{array}{rr}
u & -\bar{v} \\
v & \bar{u}
\end{array}
\right)
\left(
\begin{array}{c}
0 \\ 1
\end{array}
\right)
=
\left(
\begin{array}{c}
-\bar{v} \\ \bar{u}
\end{array}
\right),
$$
which spans the dual of the tautological line bundle on $S^2 = \C P^1$. Taking the induced orientations on $S^2 \times \{ \pm 1/4 \}$ into account, we find that $\Ch(\Cr_1; 1) = -1 - (+1) = -2$. Hence Theorem \ref{thm:main} gives $W_3(U) = 2$, as anticipated. The application of Theorem \ref{thm:main} for $j = 2$ is similar.


\subsection{The standard embedding $SU(2) \to SU(3)$}

Let $U : SU(2) \to SU(3)$ be the standard embedding
$$
U(
\left(
\begin{array}{rr}
u & -\bar{v} \\
v & \bar{u}
\end{array}
\right)
)
=
\left(
\begin{array}{rrr}
u & -\bar{v} & 0 \\
v & \bar{u} & 0 \\
0 & 0 & 1
\end{array}
\right).
$$
The winding number is $W_3(U) = 1$, as can be computed directly. We here compute this number by means of the results in this note. Using the unique expression of the eigenvalues $\ee^{2\pi \ii \lambda_1}, \ee^{2\pi \ii \lambda_2}, \ee^{2\pi \ii \lambda_3}$ of a matrix in $SU(3)$ in terms of $\lambda_i \in \R$ such that $\lambda_1 + \lambda_2 + \lambda_3 = 0$ and $\lambda_1 \ge \lambda_2 \ge \lambda_3 \ge \lambda_1 - 1$, we can summarize the crossings of the eigenvalues as follows:
$$
\begin{array}{|c|c|c|c|c|c|c|}
\hline
\mbox{eigenvalues} & 
\lambda_1 & \lambda_2 & \lambda_3 & \lambda_1 - 1& \mbox{subspace in $SU(2)$} \\
\hline
1, 1, 1 & 0 & 0 & 0 & -1 & U^{-1}(\1_{3}) = \{ \1_{2} \} \\
\hline
-1, 1, -1 & 1/2 & 0 & -1/2 & -1/2 & \Cr_3(U) = \{ -\1_{2} \} \\
\hline
\end{array}
$$
Hence we can apply Theorem \ref{thm:full_degeneracy_at_1}:
\begin{itemize}
	\item
	For $j = 1$, we have $\Cr_1(U) = \emptyset$ and $W_3(U) =  \Ch(\{ \1_{2} \}; 2) + \Ch(\{ \1_{2} \}; 3)$.
	
	\item
	For $j = 2$, we have $\Cr_2(U) = \emptyset$ and $W_3(U) = \Ch(\{ \1_{2} \}; 3)$.
	
	\item
	For $j = 3$, we have $\Cr_3(U) = \{ -\1_{2} \}$ and $W_3(U) = -\Ch(\{-\1_{2}\}; 3)$.
\end{itemize}
To compute the local indices, we can use the $3$-dimensional disks $D_x$ containing $x = 1$ and $D_{x'}$ containing $x' = -1$ in $SU(2)$. Thus, all the relevant indices are Chern number of some line bundles over $\partial D_x = \partial D_{x'}$. The eigenvalues of $U \in \partial D_x = \partial D_{x'}$ are $i$, $1$ and $-i$. Note that
$$
\overbrace{\lambda_1}^{1/4} >
\overbrace{\lambda_2}^{0} >
\overbrace{\lambda_3}^{-1/4} >
\overbrace{\lambda_1 - 1}^{-3/4}.
$$
For $j = 1$, the local index is the Chern number of the tensor product of the line bundles whose fibers are eigenspaces with eigenvalues $\ee^{2\pi \ii \lambda_2} = 1$ and $\ee^{2\pi \ii \lambda_3} = -i$. The Chern number of the latter line bundle is computed in Example~\ref{exmp:identity}, whereas that of the former is trivial, since, for any $y \in \partial D_x = \partial D_{x'}$, the eigenvector of $U(y) \in SU(3)$ with eigenvalue $1$ is $(0,\, 0,\, 1)^t$.
Therefore we get $\Ch(\{ \1_{2} \}; 2)=0$ and $\Ch(\{ \1_{2} \}; 3) = 1$. Finally, for $j = 3$, the local index is also the Chern number of the line bundle whose fibers are eigenspaces with eigenvalues $-i$, so that $\Ch(\{-\1_{2}\}; 3) = -1$ by the computation in Example~\ref{exmp:identity}.


\subsection{A perturbed embedding $SU(2) \to SU(3)$}

Let us consider a family of embedding
\begin{align*}
U &: SU(2) \to SU(3), &
U(
\left(
\begin{array}{rr}
u & -\bar{v} \\
v & \bar{u}
\end{array}
\right)
)
&=
\left(
\begin{array}{ccc}
u \ee^{\ii t} & -\bar{v}\ee^{\ii t} & 0 \\
v \ee^{\ii t} & \bar{u}\ee^{\ii t} & 0 \\
0 & 0 & \ee^{-2\ii t}
\end{array}
\right)
\end{align*}
parametrized by $t \in \R$. The $3$-dimensional winding number of $U$ is $W_3(U) = 1$ for any $t$, since $U$ at $t$ is homotopic to $U$ at $t = 0$, which is the standard embedding. The three eigenvalues of $U$ are generally expressed as $\{ \ee^{\ii t}u, \ee^{\ii t}\bar{u}, \ee^{-2\ii t} \}$, where $u \in U(1)$ and $t \in \R$.

As a special choice, we take $\ee^{\ii t} = i$, so that
$$
U(
\left(
\begin{array}{rr}
u & -\bar{v} \\
v & \bar{u}
\end{array}
\right)
)
=
\left(
\begin{array}{ccc}
iu & -i\bar{v} & 0 \\
iv  & i\bar{u} & 0 \\
0 & 0 & -1
\end{array}
\right).
$$
In this case, the three eigenvalues are distinct, or two of them coincide. The following table summarizes the detail of the latter case by using the unique expression of the eigenvalues $\ee^{2\pi \ii \lambda_1}, \ee^{2\pi \ii \lambda_2}, \ee^{2\pi \ii \lambda_3}$ of a matrix in $SU(3)$ in terms of $\lambda_i \in \R$ such that $\lambda_1 + \lambda_2 + \lambda_3 = 0$ and $\lambda_1 \ge \lambda_2 \ge \lambda_3 \ge \lambda_1 - 1$.
$$
\begin{array}{|c|c|c|c|c|c|c|}
\hline
j & \mbox{eigenvalues} & 
\lambda_1 & \lambda_2 & \lambda_3 & \lambda_1 - 1& \Cr_j(U) \\
\hline
1 & i, i, 1 & 1/4 & 1/4 & -1/2 & -3/4 & \pt \\
\hline
2 & -1, -i, -i & 1/2 & -1/4 & -1/4 & -1/2 & \pt \\
\hline
3 & -1, 1, -1 & 1/2 & 0 & -1/2 & -1/2 & S^2 \\
\hline
\end{array}
$$
Note that $\Cr_1(U) = \{ \1 \}$, $\Cr_2(U) = \{ -\1 \}$ and
$$
\Cr_3(U) =
\left\{
\left(
\begin{array}{rr}
u & -\bar{v} \\
v & \bar{u}
\end{array}
\right) \in SU(2)
\bigg|
u + \bar{u} = 0
\right\}.
$$
Accordingly, we can apply Theorem \ref{thm:main}: The application of the theorem for $j = 1, 2$ reduces to the calculations of the identity map $SU(2) \to SU(2)$ given in Example~\ref{exmp:identity}, so we omit the detail. To apply Theorem \ref{thm:main} for $j = 3$, we choose a closed tubular neighborhood $N$ of the $2$-dimensional sphere $\Cr_3(U) \subset SU(2)$ to be
$$
N = 
\left\{
\left(
\begin{array}{rr}
u & -\bar{v} \\
v & \bar{u}
\end{array}
\right) \in SU(2)
\bigg|
- \sqrt{2} \le u + \bar{u} \le \sqrt{2}
\right\}.
$$
We can identify $S^2 \times [-1/\sqrt{2}, 1/\sqrt{2}]$ with $N$ by
$$
(X, Y, Z, t) \mapsto
\left(
\begin{array}{rr}
t + i\sqrt{1-t^2}X & -\sqrt{1-t^2}(Y - iZ) \\
\sqrt{1-t^2}(Y + iZ) & t - i \sqrt{1 - t^2}X
\end{array}
\right),
$$
where $S^2 = \{ (X, Y, Z) \in \R^3 |\ X^2 + Y^2 + Z^2 = 1 \}$. Thus, on the boundary $\partial N = S^2 \times \{ \pm 1/\sqrt{2} \}$, we have
$$
U(X, Y, Z, \pm 1/\sqrt{2})
=
\left(
\begin{array}{ccc}
\frac{-X \pm i}{\sqrt{2}} & \frac{-Z - iY}{\sqrt{2}} & 0 \\
\frac{-Z + iY}{\sqrt{2}} & \frac{X \pm i}{\sqrt{2}} & 0 \\
0 & 0 & -1
\end{array}
\right).
$$
The eigenvalues of this matrix are as follows:
$$
\begin{array}{|c|c|c|c|c|c|}
\hline
t & \mbox{eigenvalues} & 
\lambda_1 & \lambda_2 & \lambda_3 & \lambda_1 - 1 \\
\hline
t = \frac{1}{\sqrt{2}} &
\frac{-1+i}{\sqrt{2}}, \frac{1 + i}{\sqrt{2}}, -1 & 
3/8 & 1/8 & -1/2 & -5/8 \\
\hline
t = -\frac{1}{\sqrt{2}} &
-1, \frac{1-i}{\sqrt{2}}, \frac{-1 - i}{\sqrt{2}} & 
1/2 & -1/8 & -3/8 & -1/2 \\
\hline
\end{array}
$$
The local index $\Ch(\Cr_3; 3)$ is the Chern number of the line bundle whose fibers are eigenspaces with eigenvalues $\ee^{2\pi \ii \lambda_3}$. Thus, on $S^2 \times \{ 1/\sqrt{2} \}$, we have the constant eigenvector with eigenvalue $-1$, so that the Chern number of the line bundle is trivial. On $S^2 \times \{ -1/\sqrt{2} \}$, the line bundle is non-trivial. At $g = (X, Y, Z, -1/\sqrt{2})$ with $X \neq 1$, an eigenvector $v_3^-(g)$ of $U(g)$ with eigenvalue $\ee^{2\pi \ii \lambda_3} = (-1 - i)/\sqrt{2}$ is given by
$$
v_3^-(g) =
\left(
\begin{array}{c}
\frac{Z + iY}{1 - X} \\ 1 \\ 0
\end{array}
\right).
$$
At $g = (X, Y, Z, -1/\sqrt{2})$ with $X \neq -1$, an eigenvector $v_3^+(g)$ of $U(g)$ with eigenvalue $\ee^{2\pi \ii \lambda_3} = (-1 - i)/\sqrt{2}$ is given by
$$
v_3^+(g) =
\left(
\begin{array}{c}
1 \\ \frac{Z - iY}{1 + X} \\ 0
\end{array}
\right).
$$
On the circle in the sphere $S^2 \times \{ -1/\sqrt{2} \}$
$$
\{ g = (X, Y, Z, -1/\sqrt{2}) |\ X = 0, Y^2 + Z^2 = 1 \}
\subset S^2 \times \{ -1/\sqrt{2} \},
$$
we have a $U(1)$-valued map $f(g) = Z - iY = -i(Y + iZ)$ which measures the discrepancy of $v_3^+(g)$ and $v_3^-(g)$ by $v_3^+(g) = f(g)v_3^-(g)$. This implies that the Chern number of the line bundle over $S^2 \times \{ -1/\sqrt{2} \}$ is $1$ under a choice of an orientation. To summarize, we get $\Ch(\Cr_3; 3) = 0-1 = -1$ and $W_3(U) = 1$.


\subsection{The adjoint and embedding $S^2 \times S^1 \to SU(3)$}
\label{subsec:adjoint_and_embedding}

We here consider the map $U : S^2 \times S^1 \to SU(3)$ 
$$
U([u : v], \ee^{\ii t})
=
\left(
\begin{array}{rrr}
u & -\bar{v} & 0 \\
v & \bar{u} & 0 \\
0 & 0 & 1
\end{array}
\right)
\left(
\begin{array}{ccc}
\ee^{\ii t} & 0 & 0 \\
0 & \ee^{-\ii t} & 0 \\
0 & 0 & 1
\end{array}
\right)
\left(
\begin{array}{rrr}
u & -\bar{v} & 0 \\
v & \bar{u} & 0 \\
0 & 0 & 1
\end{array}
\right)^{-1}
$$
given by composing the adjoint map $S^2 \times S^1 \to SU(2)$ in \S\S\ref{subsec:adjoint} and the standard embedding $SU(2) \to SU(3)$. Although $W_3(U) = 2$ is clear, we consider to apply Theorem \ref{thm:full_degeneracy_at_1}.

The eigenvalue crossings are as follows:
$$
\begin{array}{|c|c|c|c|c|c|}
\hline
\mbox{eigenvalues} & \lambda_1 & \lambda_2 & \lambda_3 & \lambda_1 - 1 &
\mbox{spaces in $S^2 \times S^1$} \\
\hline
1, 1, 1 & 0 & 0 & 0 & -1 & U^{-1}(\1) = S^2 \times \{ 1 \} \\
\hline
-1, 1, -1 & 1/2 & 0 & -1/2 & -1/2 & \Cr_3(U) = S^2 \times \{ -1 \} \\
\hline
\end{array}
$$
Note that $\Cr_1(U) = \emptyset$ and $\Cr_2(U) = \emptyset$. Theorem \ref{thm:full_degeneracy_at_1} produces:
\begin{itemize}
	\item
	for $j = 1$, we have $W_3(U) = \Ch(U^{-1}(\1); 2)+\Ch(U^{-1}(\1); 3)$,
	
	\item
	for $j = 2$, we have $W_3(U) = \Ch(U^{-1}(\1); 3)$, 
	
	\item
	for $j = 3$, we have $W_3(U) = - \Ch(\Cr_3; 3)$.

\end{itemize}
It turns out that the all the calculations of the local indices reduce to those given in Example~\ref{subsec:adjoint}. Hence we just consider the case of $j = 1$. In this case, we choose $N_1 = S^2 \times [-1/4, 1/4]$ in Example~\ref{subsec:adjoint} as the closed tubular neighborhood of $U^{-1}(\1) = S^2 \times \{ 1 \} \subset S^2 \times S^1$. On the boundary $\partial N_1 = S^2 \times \{ \pm 1/4 \}$, the map $U$ takes the values
$$
U([u : v], \ee^{\ii t})
=
\left(
\begin{array}{rrr}
u & -\bar{v} & 0 \\
v & \bar{u} & 0 \\
0 & 0 & 1
\end{array}
\right)
\left(
\begin{array}{ccc}
\pm i& 0 & 0 \\
0 & \mp i& 0 \\
0 & 0 & 1
\end{array}
\right)
\left(
\begin{array}{rrr}
u & -\bar{v} & 0 \\
v & \bar{u} & 0 \\
0 & 0 & 1
\end{array}
\right)^{-1},
$$
hence its eigenvalues are $\ee^{2\pi \ii \lambda_1} = i$, $\ee^{2\pi \ii \lambda_2} = 1$ and $\ee^{2\pi \ii \lambda_3} = -i$
$$
\overbrace{\lambda_1}^{1/4} \ge
\overbrace{\lambda_2}^{0} \ge
\overbrace{\lambda_3}^{-1/4} \ge
\overbrace{\lambda_1-1}^{-3/4}.
$$
On the connected component $S^2 \times \{ 1/4 \} \subset \partial N_1$, we can find the following eigenvectors with eigenvalues $\ee^{2\pi \ii \lambda_2} = 1$ and $\ee^{2\pi \ii \lambda_3} = -i$, respectively $(0,\,0,\,1)^t$ and $(-\bar v,\,\bar u ,\,1)^t$.
Hence the tensor product of the line bundles whose fibers are the eigenspaces with eigenvalues $\ee^{2\pi \ii \lambda_2} = 1$ and $\ee^{2\pi \ii \lambda_3} = -i$ is the dual of the tautological line bundle on $S^2 = \C P^1$. On the other connected component $S^2 \times \{ -1/4 \} \subset \partial N_1$, we have the following eigenvectors with eigenvalues $\ee^{2\pi \ii \lambda_2} = 1$ and $\ee^{2\pi \ii \lambda_3} = -i$, respectively $(0,\,0,\,1)^t$ and $(u,\,v,\,1)^t$.
Then the tensor product of the line bundles whose fibers are the eigenspaces with eigenvalues $\ee^{2\pi \ii \lambda_2} = 1$ and $\ee^{2\pi \ii \lambda_3} = -i$ is the tautological line bundle on $S^2 = \C P^1$. Taking the orientation into account, we find $\Ch(U^{-1}(\1); 2)+\Ch(U^{-1}(\1); 3) = 2$.

\subsection{Floquet map example}
\label{subsec:example_top}

Let $U_i : \C P^1 \times [0, 1] \to SU(2)$ be the following maps
\begin{align*}
U_1([u : v], t)
&=
\left(
\begin{array}{rr}
u & -\bar{v} \\
\bar{v} & u
\end{array}
\right)
\left(
\begin{array}{cc}
\ee^{\pi \ii t/2} & 0 \\
0 & \ee^{-\pi \ii t/2}
\end{array}
\right)
\left(
\begin{array}{rr}
u & -\bar{v} \\
\bar{v} & u
\end{array}
\right)^{-1}, 
\\
U_2([u : v], t)
&=
\left(
\begin{array}{rr}
u & -\bar{v} \\
\bar{v} & u
\end{array}
\right)
\left(
\begin{array}{cc}
\ee^{-3\pi \ii t/2} & 0 \\
0 & \ee^{3\pi \ii t/2}
\end{array}
\right)
\left(
\begin{array}{rr}
u & -\bar{v} \\
\bar{v} & u
\end{array}
\right)^{-1}.
\end{align*}
Note that, as $t \in [0, 1]$ varies from $t = 0$ to $t = 1$, the element $\ee^{\pi \ii t/2}$ travels on $U(1)$ aticlockwisely from $1 \in U(1)$ to $i \in U(1)$, whereas $\ee^{-3\pi \ii t} = \ee^{2\pi \ii (- 3/4t)}$ travels clockwisely from $1 \in U(1)$ to $i \in U(1)$ via $-1 \in U(1)$. They satisfy $U_i|_{\C P^1 \times \{ 0 \}} = \1$ and 	$U_i(\C P^1 \times \{ 1 \}) \subset SU(2)_{\le 0}$.
%
%
%
Further, $U_i(\C P^1 \times (0, 1]) \subset SU(2)_{\le 1}$ and $U_1|_{\C P^1 \times \{ 1 \}} = U_2|_{\C P^2 \times \{ 1 \}}$. It is easy to see $\Cr_1(U) = \C P^1 \times \{ 0 \}$ and $\Cr_2(U) = \emptyset$ for $U_1$. We can compute the topological numbers of $U_1$ as follows:
\begin{align*}
\Top(U_1; 1) &= -\Ch(\Cr_1; 1) = 1, &
\Top(U_1; 2) &= -\Ch(\Cr_2; 2) = 0.
\end{align*}
For $U_2$, we have $\Cr_1(U) = \C P^1 \times \{ 0 \}$ and $\Cr_2(U) = \C P \times \{ 2/3 \}$. We can also compute
\begin{align*}
\Top(U_2; 1) &= - \Ch(\Cr_1; 1) = -1, &
\Top(U_2; 2) &= - \Ch(\Cr_2; 2) = -2.
\end{align*}
These computations show that $U_1$ and $U_2$ are not homotopic relative to $\C P^1 \times \partial [0, 1]$. Notice that $U_1 \cup U_2 : \C P^1 \times S^1 \to SU(2)$ is just the map in Example~\ref{subsec:adjoint}. Hence $\Top(U_1; j) - \Top(U_2; j) = W_3(U_1 \cup U_2) = 2$ as anticipated. 

Composing the two maps $U_i : \C P^1 \times [0, 1] \to SU(2)$ and the standard embedding $SU(2) \to SU(3)$, we define $V_i : \C P^1 \times [0, 1] \to SU(3)$. In this case, $V_i^{-1}(1) = \C P^1 \times \{ 0 \}$. We have $\Cr_1(U) = \emptyset$ for $V_1$ and $V_2$. Thus, to $\Top(V_i; 1)$, the contributions $\Ch(\Cr_1; 1) = 0$ from the simple crossing $\Cr_1(U)$ is trivial for both $V_1$ and $V_2$. But, the contributions $\Ch(V_i^{-1}(\1);2)$ and $\Ch(V_i^{-1}(\1);3)$ from the full crossings at $\C P^1 \times \{ 0 \}$ are non-trivial, and we get
\begin{align*}
\Top(V_1; 1) &= - \Ch(\Cr_1; 1) + \Ch(V_1^{-1}(\1);2)+\Ch(V_1^{-1}(\1);3) = 1, \\
\Top(V_2; 1) &= - \Ch(\Cr_2; 1) +\Ch(V_1^{-1}(\1);2)+ \Ch(V_1^{-1}(\1);3) = -1.
\end{align*}
This contribution from the full degeneracy at $t = 0$ seems not considered in \cite{NathanRudner15}, but the above example shows that it plays an indispensable role in the topological invariant.

\bigskip \bigskip

{\footnotesize

\begin{tabular}{ll}
(K. Gomi) & \textsc{Department of Mathematical Sciences, Shinshu University} \\
 &   Matsumoto, Nagano 390-8621, Japan \\
 &  {E-mail address}: \href{mailto:kgomi@math.shinshu-u.ac.jp}{\texttt{kgomi@math.shinshu-u.ac.jp}} \\
\\
(C. Tauber) & \textsc{Institute for Theoretical Physics, ETH Z\"{u}rich} \\
 & Wolfgang-Pauli-Str. 27, CH-8093 Z\"{u}rich, Switzerland \\
 &  {E-mail address}: \href{mailto:tauberc@phys.ethz.ch}{\texttt{tauberc@phys.ethz.ch}} \\
\end{tabular}

}
 
\end{document}